\documentclass[prx,twocolumn,superscriptaddress,aps,10pt, frontmatterverbose, reprint,footinbib,longbibliography=true]{revtex4-2}

\usepackage{graphicx}
\usepackage{dcolumn}
\usepackage{bm}
\usepackage[colorlinks,
citecolor=teal,
linkcolor=teal,
urlcolor=teal]{hyperref}
\usepackage{mathrsfs}
\usepackage{amsfonts,amssymb,amsmath,mathtools}
\usepackage{caption}
\usepackage{subcaption}
\usepackage{ragged2e}
\usepackage{xcolor}
\usepackage{titlesec}
\usepackage{lineno}
\usepackage{tikz} 
\usetikzlibrary{arrows.meta} 
\titlespacing*{\subsubsection}{0pt}{1\baselineskip}{1\baselineskip}

\makeatletter
\long\def\@makecaption#1#2{%
  \vskip\abovecaptionskip
  \small
  \justifying
  #1.~#2\par
  \vskip\belowcaptionskip}
\makeatother

\usepackage{dsfont}
\usepackage{bbm}
\usepackage{physics}
\usepackage{comment}
\usepackage{orcidlink}
\usepackage{amsthm}
\usepackage[T1]{fontenc}
\newtheorem{theorem}{Theorem}

\newtheorem{lemma}{Lemma}
\newtheorem{proposition}{Proposition}

\newtheorem{remark}{Remark}

\newcommand{\ignore}[1]{}

\newcommand{\rev}[1]{\textcolor{red}{#1}}
\newcommand{\trans}{\mathsf{T}}

\usepackage{pifont}

\newcommand{\NL}{\mathrm{FNL}}
\newcommand{\RENYI}{\alpha}
\newcommand{\MNL}{\mathrm{M}_2^{\NL}}

\providecommand{\DIFaddtex}[1]{{\protect\color{red}#1}} 
\providecommand{\DIFdeltex}[1]{} 
\providecommand{\DIFaddbegin}{\protect\color{red}} 
\providecommand{\DIFaddend}{\protect\color{black}} 
\providecommand{\DIFdelbegin}{} 
\providecommand{\DIFdelend}{} 
\providecommand{\DIFdelFL}[1]{} 
\providecommand{\DIFadd}[1]{\texorpdfstring{\DIFaddtex{#1}}{#1}} 
\providecommand{\DIFdel}[1]{\texorpdfstring{\DIFdeltex{#1}}{}} 
\RequirePackage{listings} 
\lstdefinelanguage{DIFcode}{ 
  moredelim=[il][\color{white}\tiny]{\%DIF\ <\ }, 
  moredelim=[il][\sffamily\bfseries]{\%DIF\ >\ } 
} 
\lstdefinestyle{DIFverbatimstyle}{ 
	language=DIFcode, 
	basicstyle=\ttfamily, 
	columns=fullflexible, 
	keepspaces=true 
} 
\lstnewenvironment{DIFverbatim}{\lstset{style=DIFverbatimstyle}}{} 
\lstnewenvironment{DIFverbatim*}{\lstset{style=DIFverbatimstyle,showspaces=true}}{} 
\begin{document}

\title{Non-Local Magic Resources for Fermionic Gaussian States}

\author{Daniele Iannotti\orcidlink{0009-0009-0738-5998}}
\email{d.iannotti@ssmeridionale.it}
\thanks{These two authors contributed equally}
\affiliation{Scuola Superiore Meridionale, Largo S. Marcellino 10, 80138 Napoli, Italy}
\affiliation{INFN Sezione di Napoli, via Cintia, 80126 Napoli, Italy}
\affiliation{ Institut für Theoretische Physik, Universität zu Köln, Zülpicher Strasse 77, 50937 Köln, Germany}

\author{Beatrice Magni\orcidlink{0009-0009-8577-0525}}
\email{bmagni@uni-koeln.de}
\thanks{These two authors contributed equally}
\affiliation{ Institut für Theoretische Physik, Universität zu Köln, Zülpicher Strasse 77, 50937 Köln, Germany}

\author{Riccardo Cioli\orcidlink{0009-0001-1286-3773}}
\email{riccardo.cioli3@unibo.it}
\affiliation{Dipartimento di Fisica e Astronomia, Università di Bologna and INFN, Sezione di Bologna, via Irnerio 46, I-40126 Bologna, Italy}

\author{Alioscia Hamma\orcidlink{0000-0003-0662-719X}}
\email{alioscia.hamma@unina.it }
\affiliation{Scuola Superiore Meridionale, Largo S. Marcellino 10, 80138 Napoli, Italy}
\affiliation{Dipartimento di Fisica `Ettore Pancini', Universit\`a degli Studi di Napoli Federico II, Via Cintia, 80126 Napoli, Italy}
\affiliation{INFN Sezione di Napoli, via Cintia, 80126 Napoli, Italy}

\author{Xhek Turkeshi\orcidlink{0000-0003-1093-3771}}
\email{xturkesh@uni-koeln.de}
\affiliation{ Institut für Theoretische Physik, Universität zu Köln, Zülpicher Strasse 77, 50937 Köln, Germany}

\begin{abstract}
Entanglement and magic are fundamental resources that capture the complexity of quantum many-body systems. Non-local magic isolates the irreducible nonstabilizerness intrinsically tied to entanglement. However, evaluating this quantity generally requires a prohibitive minimization over the full Hilbert space, making it computationally inaccessible beyond a few qubits. Here, we overcome this bottleneck by establishing a closed-form expression for the non-local stabilizer entropies of fermionic Gaussian states over local Gaussian unitaries, which we prove at R\'enyi index $\RENYI=2$ for arbitrary subsystem size, and which can be evaluated in polynomial time directly from the eigenvalues of the reduced Majorana covariance matrix. We apply this framework to characterize fermionic non-local magic across diverse physical regimes: we derive an exact Page-like curve for typical random states, reveal logarithmic scaling at the quantum critical point of the XY model, and establish a quasiparticle picture for magic generation during out-of-equilibrium quantum quenches. Crucially, because our result relies solely on two-point correlation functions, it provides a scalable route for the experimental estimation of fermionic non-local magic in large-scale quantum processors via fermionic shadow tomography.
\end{abstract}

\maketitle

\section{Introduction}
Entanglement and magic, or nonstabilizerness, are two fundamental notions of quantum complexity that capture complementary aspects of what makes a many-body system genuinely quantum~\cite{Horodecki2009,Chitambar2019resources,Veitch2014resource}.
Their operational significance is equally distinct: states with low entanglement admit efficient classical simulation via tensor-network methods~\cite{Schollw_ck_2011}, whereas states with low magic are accessible to stabilizer-based techniques such as tableau simulation~\cite{Gottesman1998,Bravyi2019simulationofquantum} or Pauli propagation~\cite{dowling2025magic,rudolph2025paulipropagationcomputationalframework}.
While entanglement is by now a well-characterized concept in many-body physics~\cite{Laflorencie_2016,amico2008entanglement}, the role of magic has only recently come into focus, driven by the development of stabilizer entropies~\cite{Leone2022SRE,leone2024stabilizer}, scalable probes of nonstabilizerness whose computational cost is comparable to that of entanglement entropy~\cite{huang2026fastexactapproachstabilizer,xiao2026exponentiallyacceleratedsamplingpauli,sierant2026computingquantummagicstate} and which have an important operational meaning~\cite{Bittel_2026, Iannotti_2026, cusumano2025nonstabilizernessviolationschshinequalities}.
This advance has stimulated a broad program of research, from equilibrium phenomena~\cite{piro2,tarabunga2023manybody,oshino2026stabilizer,hoshino2025stabilizerrenyientropyencodes, aditya2025mpembaeffectsquantumcomplexity} to quantum-circuit and Hamiltonian dynamics~\cite{Turkeshi_2025,odavic2025stabilizer,tirrito2025anticoncentration,falcao2025nonstabilizerness, PhysRevB.106.214316, p8dn-glcw, lio2026quantumstatedesignsmagic, aditya2025growthspreadingquantumresources, Viscardi_2026}.

A central open question concerns the interplay between entanglement and magic~\cite{tirrito2024quantifying,szombathy,Iannotti2025entanglement}.
The recently introduced \emph{non-local magic}~\cite{Cao_2025,qian2025quantumnonlocalnonstabilizerness, munizzi2026magicnoncliffordgatestopological, cao2024non} addresses precisely this problem: it isolates the irreducible nonstabilizerness, as quantified by stabilizer entropy, that cannot be removed by local basis changes, thereby separating genuinely correlated magic from purely local contributions.
However, this conceptual advance comes at a steep computational cost: in general, non-local magic is defined through a minimization over all local unitaries acting on the full Hilbert space, rendering it intractable beyond a few qubits. 
Although estimation schemes exist for matrix-product states~\cite{Cao_2025}, where entanglement remains bounded, which broader classes of quantum matter admit an analytic characterization of non-local magic, especially in regimes with extensive entanglement, has remained an open question.

Here, we answer this question for fermionic Gaussian states and unitaries. We introduce the fermionic non-local (FNL) magic, which restricts the minimization to Gaussian unitaries. 
We prove (cf. Theorem~\ref{thm:global}) that the FNL magic of order $\RENYI=2$ admits a closed-form expression (Eq.~\eqref{eq:analytic_NL}) in terms of the eigenvalues of the reduced Majorana covariance matrix on a subsystem of size $\ell$, computable with $\mathrm{poly}(\ell)$ resources. 
The key insight behind this result is that the minimization over local \emph{Gaussian} unitaries is solved exactly by the BCS canonical form~\cite{BoteroReznik2004}. 
This is in stark contrast with the total magic, which requires Majorana Monte Carlo~\cite{Bera_2025} or perfect sampling~\cite{collura2026nonstabilizernessfermionicgaussianstates}, both converging in general with an exponential number of samples.
We leverage this result to obtain a series of rigorous physical insights: an exact Page-like curve for typical fermionic Gaussian states, corroborated numerically by SYK$_2$ eigenstates~\cite{Bera_2025,santra2025complexitytransitionschaoticquantum,jasser2025stabilizer, malvimat2026multipartitenonlocalmagicsyk, njgn-fksh}; logarithmic scaling of non-local magic at the XY critical point, captured analytically via the Fisher--Hartwig approach; a quasiparticle picture~\cite{calabrese2009} for magic spreading after quantum quenches; and diffusive growth in brickwork Gaussian random circuits.
Because our result relies solely on two-point correlation functions, it also provides a scalable route for the experimental estimation of fermionic non-local magic in large-scale quantum processors via fermionic shadow tomography~\cite{Wan_2023}.

\section{Methods}
\label{sec:methods}
We begin by introducing the notation and key definitions; we refer to Refs.~\cite{Mbeng2024,surace,sierant2026fermionic,sierant2026theorymatchgatecommutant} for pedagogical introductions.
We consider fermionic Gaussian systems on $N$ qubits in a lattice $\Lambda=\{1,\dots, N\}$, described by a Hilbert space of dimension $d=2^N$.
The natural degrees of freedom are the $2N$ Majorana operators $\gamma_j$, obtained via the Jordan--Wigner transformation $\gamma_{2j-1} = \bigl(\prod_{m=1}^{j-1} Z_m\bigr)X_j$ and $\gamma_{2j} = \bigl(\prod_{m=1}^{j-1} Z_m\bigr)Y_j$, where $X_j$, $Y_j$, $Z_j$ are the Pauli operators on qubit $j$.
These operators are Hermitian, $\gamma_j=\gamma_j^\dagger$, and satisfy the Clifford algebra $\lbrace \gamma_j,\gamma_k\}=2 \delta_{jk} \mathbb{I}$.

Fermionic Gaussian unitaries (FGU), or matchgates, are the operators that act adjointly on Majorana as rotations $U_O\gamma_j U_O^\dagger = \sum_{m}O_{mj}\gamma_m$, with $O\in \mathrm{O}(2N)$; they form a group, which we denote $\mathfrak{M}_N$. 
Orthogonal matrices can be decomposed as $O=R^{(1-\det O)/2}Q$ with $Q\in \mathrm{SO}(2N)$ and $R=\mathrm{diag}(1,-1,\dots,-1)$  a reflection.
This implies that $U_O=X_1^{(1-\det(O))/2}\exp[-iH]$ is split into a part induced by special orthogonal transformations, captured by the so-called fermionic Gaussian Hamiltonians 
\begin{equation}
    H= \frac{i}{4}\sum_{m,n=1}^{2N} h_{m,n} \gamma_m\gamma_n\;,\label{eq:ham}
\end{equation}
where $h=-h^T$ is the $2N\times 2N$ antisymmetric matrix such that $Q=\exp[h]$, and the implementation of the reflection $R$ via the Pauli operator $X_1=\gamma_1$. 

A pure fermionic Gaussian state (FGS), or free fermionic state, is defined by $|\Psi_O\rangle=U_O|\mathbf{0}\rangle$ for the fermionic Gaussian unitary induced by $O\in \mathrm{O}(2N)$, and $|\mathbf{0}\rangle\langle\mathbf{0}| = d^{-1}\prod_{m=1}^{N}(\mathbb{I}-i\gamma_{2m-1}\gamma_{2m})$ is the fermionic vacuum state. 
Denoting for any pure state $|\Psi\rangle$ the covariance matrix $\Gamma_{mn}=i\langle \Psi|[\gamma_m,\gamma_n]|\Psi\rangle/2$, fermionic Gaussian states are fully characterized by the covariance matrix. In fact, given the vacuum covariance matrix
\begin{equation}
    \Gamma_0= { \bigoplus_{m=1}^N \begin{pmatrix}
        0 &-1 \\1&0
    \end{pmatrix}}\;,
\end{equation}
using the adjoint transformation, one obtains that $|\Psi_O\rangle\langle \Psi_O|$ is uniquely fixed by $\Gamma= O\Gamma_0 O^T$. 
For pure fermionic Gaussian states and any bipartition $\Lambda=A\sqcup B$ with $|A|=\ell\le N/2$ (made precise below), Ref.~\cite{BoteroReznik2004} showed that local FGU, $U_{O_A} \otimes U_{O_B}\in \mathfrak{M}_{N}$, are sufficient to bring the state into a canonical modewise form of independent two-mode BCS-like pairs 
\begin{equation}
\begin{split}
     |\Psi_{\mathrm{can}}\rangle
:= (T_{\sigma_{A}}\otimes T_{\sigma_{B}})\bigotimes_{i=1}^{\ell} [|\mathrm{B}(\theta_i)\rangle]\otimes |0\rangle^{\otimes (N-2\ell)}\;,
\end{split}
\label{eq:can_form}
\end{equation} 
where $|\mathrm{B}(\theta_i)\rangle=\cos\theta_i\,|00\rangle_i+\sin\theta_i\,|11\rangle_i$  and $T_{\sigma_X}$ are permutations that act separately on qubits in $X\in \{A,B\}$. This form is analogous to the Schmidt decomposition for pure fermionic Gaussian states, and it is fundamental for the main result.

\subsection{Non-local magic and its fermionic counterpart}
We now present our main quantity: non-local magic.
We recall that the stabilizer entropy of order $\RENYI$ is defined for a pure state $|\Psi\rangle$ as $M_\RENYI(|\Psi\rangle)=(1-\RENYI)^{-1}\log_2 \zeta_\RENYI(|\Psi\rangle)$~\cite{Leone2022SRE}, where $\zeta_\RENYI= d^{-1}\sum_{P\in \mathcal{P}_N} \langle \Psi|P|\Psi\rangle^{2\RENYI}$ is the stabilizer purity and $\mathcal{P}_N=\{ \mathbb{I},X,Y,Z\}^{\otimes N}$  is the set of $N$-qubit Pauli strings. This quantity is a magic monotone for integer $\RENYI\ge 2$~\cite{piro,leone2024stabilizer}.

The non-local magic (NLM) is defined for the bipartition $\Lambda = A\sqcup B$ and a pure state as 
\begin{equation}
    M_\RENYI^\textup{NL}:= \min_{U_A , U_B} M_\RENYI\left[T_{\pi_{AB}} (U_A\otimes U_B)T_{\pi_{AB}}^\dagger|\Psi\rangle\right]\;,\label{eq:defNLmagic}
\end{equation}
where $A=\{i_1,\dots,i_\ell\}\subset \Lambda$ contains $\ell$ qubits and $B=\Lambda\setminus A$ the remaining $N-\ell$ ones, $\pi_{AB}\in \mathrm{S}_N$ is any permutation such that $\pi_{AB}(A)=\{1,\dots,\ell\}$ and $\pi_{AB}(B)=\{\ell+1,\dots, N\}$, and the minimization is over $\mathrm{U}(d_A)\otimes \mathrm{U}(d_B)$ with $d_A=2^{\ell}$ and $d_B = 2^{N-\ell}$. This is the magic that cannot be erased or extracted by local unitary operations. This notion is cognate to (but distinct from) that of {\em long-range magic}, that is, the non-stabilizerness that cannot be removed by finite-depth local unitary circuits \cite{Korbany2025}. One salient distinction is that maximally entangled states have vanishing non-local magic as they are locally equivalent to stabilizer states.
This minimization is superexponentially costly, rendering non-local magic intractable beyond a handful of qubits.
In Ref.~\cite{Cao_2025}, it was shown that for $\RENYI = 2$ a reliable proxy for non-local magic can be obtained from the Schmidt spectrum of the state over the same bipartition.

To study fermionic Gaussian states and their non-local magic, we introduce the fermionic non-local magic
\begin{equation}
    M_\RENYI^\textup{FNL}:= \!\!\min_{U_{O_A} , U_{O_B}} \!\!M_\RENYI\left[T_{\pi_{AB}} (U_{O_A}\otimes U_{O_B})T_{\pi_{AB}}^\dagger|\Psi\rangle\right],\label{eq:defFNLmagic}
\end{equation}
where now the minimization is over the Gaussian unitaries induced by $\mathrm{O}(2\ell)\oplus \mathrm{O}(2(N-\ell))$, hence $ M_\RENYI^\textup{NL} \leq  M_\RENYI^\textup{FNL}$. Up to a non-Gaussian permutation, this quantity corresponds to the minimization over the full unitary group for Gaussian systems.
Throughout this work our main interest is the case $\RENYI=2$, the smallest R\'enyi index for which the stabilizer entropy is a magic monotone~\cite{piro,leone2024stabilizer}, the most studied in the literature, and the one we implicitly refer to when writing simply \emph{fermionic non-local magic}. Nevertheless, we will also comment on results for $\RENYI>2$, in particular in Appendix~\ref{Sec:generaliz}.
Our main result is that, for fermionic Gaussian states, the fermionic non-local magic can be evaluated exactly in $\mathrm{poly}(N)$ resources.

\section{Results}
Consider a pure fermionic Gaussian state $|\Psi_O\rangle$ with covariance matrix $\Gamma=O \Gamma_0 O^T$ and the bipartition $\Lambda = A\sqcup B$, where we denote without loss of generality $A$ the smallest partition with $|A|=\ell\le N/2$. 
Our main result shows that the FNL magic is entirely determined by the spectrum of the reduced covariance matrix. Let $\{\lambda_i\}_{i=1,\dots,\ell}$ denote the non-negative eigenvalues of the $2\ell\times 2\ell$ reduced covariance matrix $i\Gamma_A := i\Gamma|_{\mathcal{I}_A}$ of the Majorana modes restricted to $A$, $\mathcal{I}_A:=\{2m-1,2m\;|\; m\in A\}$. Then, our main finding is the following closed-form expression at $\RENYI=2$.

\begin{theorem}[Fermionic non-local magic of Gaussian states at $\RENYI=2$]
\label{thm:global}
Let $|\Psi_O\rangle$ be an arbitrary pure fermionic Gaussian state on $N$ modes and let $\Lambda=A\sqcup B$ with $|A|=\ell$. Then
\begin{equation}
\begin{split}
    M_{2}^{\mathrm{FNL}}(|\Psi_O\rangle)&=\sum_{i=1}^{\ell}\mathfrak{m}_2(\lambda_i^2)\;,\\
    \mathfrak{m}_2(x)&=-\log_2\!\left(1-x+x^2\right),
\end{split}
    \label{eq:analytic_NL}
\end{equation}
or, equivalently, in terms of the stabilizer purity,
\begin{equation}
\begin{split}
\max_{\substack{O_A\in \mathrm{O}(2\ell)\\ O_B\in \mathrm{O}(2(N-\ell))}}\!\!\!\!\zeta_2&\left[T_{\pi_{AB}}(U_{O_A}\otimes U_{O_B})T_{\pi_{AB}}^\dagger|\Psi_O\rangle\right]\\ &=\prod_{i=1}^{\ell}\left(1-\lambda_i^2+\lambda_i^4\right)\;,
\end{split}
\label{eq:thm_zeta}
\end{equation}
the maximum being attained in the Botero--BCS canonical axes of Eq.~\eqref{eq:can_form}.
\end{theorem}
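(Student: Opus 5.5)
The plan is to turn the stabilizer purity into a purely combinatorial function of the covariance matrix. We then reduce the optimization to the canonical form~\eqref{eq:can_form} and prove the value there is maximal. Under Jordan--Wigner every Pauli string equals, up to a phase, a Majorana monomial $\gamma_S=\gamma_{s_1}\cdots\gamma_{s_{|S|}}$ with $S\subseteq\{1,\dots,2N\}$. For a Gaussian state Wick's theorem gives $|\langle\gamma_S\rangle|=|\mathrm{Pf}(\Gamma_S)|$, which vanishes for odd $|S|$. Hence
\begin{equation}
\zeta_2(|\Psi_O\rangle)=2^{-N}\sum_{S\subseteq[2N]}\mathrm{Pf}(\Gamma_S)^4 .
\end{equation}
The qubit permutation $T_{\pi_{AB}}$ only reorders the Majorana index set up to Jordan--Wigner string phases, which disappear in the fourth power. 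Local Gaussian unitaries act as $\Gamma\mapsto (O_A\oplus O_B)\Gamma(O_A\oplus O_B)^T$. The problem is therefore to maximize $F(\Gamma):=\sum_S \mathrm{Pf}(\Gamma_S)^4$ over the orbit of $\Gamma$ under $\mathrm{O}(2\ell)\oplus\mathrm{O}(2(N-\ell))$. By Botero--Reznik this orbit contains $\Gamma_{\mathrm{can}}$, so we may start from it. In $\Gamma_{\mathrm{can}}$ the $A$-block is $\bigoplus_i\lambda_i J$ with $J=\left(\begin{smallmatrix}0&-1\\1&0\end{smallmatrix}\right)$, up to orientation signs. The $B$-block is $\bigoplus_i(-\lambda_i J)$ followed by the vacuum on the remaining $N-2\ell$ modes. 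The off-diagonal block couples the $i$-th $A$ pair to the $i$-th $B$ pair with weight $\sqrt{1-\lambda_i^2}$.

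\textbf{Lower bound (achievability).} In canonical axes the state is a product of two-qubit states $|\mathrm{B}(\theta_i)\rangle$ and vacuum qubits, and $\zeta_2$ is multiplicative over tensor factors. For one pair with $\lambda_i=\cos2\theta_i$, the nonzero Pauli expectations are $\langle II\rangle=\langle ZZ\rangle=1$, $\langle ZI\rangle=\langle IZ\rangle=\lambda_i$, and $\langle XX\rangle=-\langle YY\rangle=\sqrt{1-\lambda_i^2}$. This gives $\tfrac14\bigl(2+2\lambda_i^4+2(1-\lambda_i^2)^2\bigr)=1-\lambda_i^2+\lambda_i^4$. Vacuum qubits contribute $1$, which yields the right-hand side of Eq.~\eqref{eq:thm_zeta}.

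\textbf{Upper bound.} This is the main obstacle: we must show $F\big((O_A\oplus O_B)\Gamma_{\mathrm{can}}(O_A\oplus O_B)^T\big)\le F(\Gamma_{\mathrm{can}})$. First, split each index set as $S=S_A\sqcup S_B$ and use the Pfaffian Schur-complement and Laplace expansion with respect to the $A|B$ block structure. This writes $\mathrm{Pf}(\Gamma_S)$ in terms of the rotated reduced blocks and the coupling matrix $C=O_A\,\Gamma_{AB}\,O_B^T$. Second, use the purity constraint $\Gamma^2=-\mathbb{1}$. It fixes $\Gamma_A^2=-\mathbb{1}+CC^T$, so the singular values of $C$ are the $\sqrt{1-\lambda_i^2}$ and are invariant. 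On $B$, the vacuum modes then decouple: the $B$-complement of the range of $C^T$ carries a pure vacuum block. A Pfaffian of a pure-state block has $\mathrm{Pf}^4\le\mathrm{Pf}^2$, and $\sum\mathrm{Pf}^2$ is fixed by purity. From this we argue that any rotation mixing those modes with the paired ones can only lower $F$, which reduces to $N=2\ell$.

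Third, for the remaining $2\ell+2\ell$ modes, we view $F$ as a fourth-moment (four-replica) quantity. We rewrite $\sum_{S_A}\mathrm{Pf}(\cdot)^4$ as $\mathrm{Tr}[\rho^{\otimes4}\,\Pi]$ with $\Pi=2^{-N}\sum_S\gamma_S^{\otimes4}$. The local action is then $(U_{O_A}\otimes U_{O_B})^{\otimes4}$, and we must show that this overlap is maximized when $O_A$ and $O_B$ align the singular vectors of $C$ with the mode pairs. The first attempt will be an induction on $\ell$. We peel off one BCS pair: by Cauchy--Schwarz on the sum over subsets touching that pair, its contribution is at most $(1-\lambda_1^2+\lambda_1^4)$ times the optimal value for the remaining $\ell-1$ pairs. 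Equality requires the Pfaffians to factorize, which happens only in canonical axes. If the Cauchy--Schwarz step is too lossy, the fallback is to parametrize $O_A$ and $O_B$ near the canonical point. We would show that the canonical point is a critical point and that $F$ is Schur-concave in how the fourth moments distribute across modes. Combined with compactness of $\mathrm{O}(2\ell)\times\mathrm{O}(2(N-\ell))$ and a check of the other critical points, this would identify it as the global maximum. Taking $-\log_2$ of Eq.~\eqref{eq:thm_zeta} then gives Eq.~\eqref{eq:analytic_NL}.
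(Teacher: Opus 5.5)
Your Wick reformulation $\zeta_2=2^{-N}\sum_S\mathrm{Pf}(\Gamma_S)^4$ and the canonical value $\prod_i(1-\lambda_i^2+\lambda_i^4)$ are correct. They give only the easy half, achievability. The upper bound $F\bigl((O_A\oplus O_B)\Gamma_{\mathrm{can}}(O_A\oplus O_B)^{\trans}\bigr)\le F(\Gamma_{\mathrm{can}})$ is the whole content of the theorem, and none of your three steps establishes it. The Schur/Laplace split and the four-replica rewriting are reformulations, and the sentence ``we must show that this overlap is maximized when\dots'' restates the goal.

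The proposed induction has no mechanism. The sum runs over subsets $S$ of the \emph{fixed} Jordan--Wigner Majorana coordinates, because the Pauli basis does not rotate. After a generic $O_A\oplus O_B$, the two Majorana directions of the first BCS pair on each side become arbitrary directions in $\mathbb{R}^{2\ell}$ and $\mathbb{R}^{2(N-\ell)}$. There is therefore no class of subsets ``touching that pair'' whose Cauchy--Schwarz estimate gives the factor $1-\lambda_1^2+\lambda_1^4$ times a functional of the same form for the other $\ell-1$ pairs. The natural recursion, conditioning on a single coordinate, yields projection-DPP data on $2N-1$ points. Such data are no longer the covariance of a Gaussian state with spectrum $\lambda_2,\dots,\lambda_\ell$, so the theorem itself cannot serve as induction hypothesis.

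Your fallback (criticality of the canonical point, an unspecified Schur-concavity, and ``checking the other critical points'' on $\mathrm{O}(2\ell)\times\mathrm{O}(2(N-\ell))$) is a local analysis. It cannot certify a global maximum.

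The reduction to $N=2\ell$ fails for a similar reason. The facts $\mathrm{Pf}^4\le\mathrm{Pf}^2$ and $\sum_S\mathrm{Pf}(\Gamma_S)^2=\det(\mathbb{I}+\Gamma)=2^N$ hold at every point of the orbit, so they cannot show that mixing spectator modes lowers $F$. The paper instead pads $A$ with vacuum ancillas. The cut becomes balanced, the local group only grows, and the right-hand side gains factors $1-1+1=1$.

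The missing idea is an orbit-invariant majorant of $F$ that is saturated at $\Gamma_{\mathrm{can}}$; the paper obtains one as follows.
\begin{itemize}
\item The projection-DPP identity and the BCS parametrization $Z=O_AZ_0O_B^{\trans}$, with $Z_0=\bigoplus_j(\mathbb{I}_2+i\lambda_jJ)/\nu_j$, make $\zeta_2$ an orbit-constant multiple of $\mathsf{F}(Z)=\sum_{|I|=|J|}|\det Z_{I,J}|^4$.
\item Cauchy--Schwarz is applied to the two local rotations, not to BCS pairs. Writing minors as overlaps through the compound matrix gives $\mathsf{F}(O_AZ_0O_B^{\trans})\le[\mathsf{F}(O_AZ_0O_A^{\trans})\,\mathsf{F}(O_BZ_0O_B^{\trans})]^{1/2}$, so one may take $O_A=O_B$.
\item A sharp Slater collision bound $\mathcal{C}(Q)\le2^{-k}\det(\mathbb{I}+\Sigma^\dagger\Sigma)$, with $\Sigma=Q^{\trans}Q$, bounds $\mathsf{F}$ by a principal-minor residual of $H=Z^2$. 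It is proved by a one-coordinate deletion induction. That induction closes only because the auxiliary functional $\det(\mathbb{I}+\Sigma^\dagger\Sigma)$ is defined for arbitrary isometries.
\item Two sign averages and a Cayley transform turn the residual, up to an orbit-constant prefactor, into $\mathsf{L}(U)=\sum_S|\det U_S|^2$ with $U=OU_0O^{\trans}$.
\item A rank-two mixed-discriminant inequality, together with majorization of the diagonal of $T=(U+U^{\trans})/2$ by its singular values, gives $\mathsf{L}(U)^2\le2^{2\ell}\det(\mathbb{I}+T^\dagger T)$. This is an orbit invariant, and it holds with equality on the canonical blocks.
\end{itemize}
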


The proof, holding for any $1\le\ell\le N/2$\DIFaddend , is given in full detail in Appendix~\ref{app:proof}. It assumes that the state $|\Psi_O\rangle$ is pure and Gaussian.
The key point is that local Gaussian unitaries over a bipartition of the system bring any pure fermionic Gaussian state into a canonical modewise form $\ket{\Psi_{\rm can}}$ in Eq.~\eqref{eq:can_form}, consisting of independent two-mode Bell-like pairs~\cite{BoteroReznik2004}, a normal form that underlies several equivalent descriptions of fermionic Gaussian entanglement~\cite{SpeeSchwaigerGiedkeKraus2018,MorralYepesLangerGammonSmithKrausPollmann2025,langer2026matchgatecircuitrepresentationfermionic}.
We prove that this representative is the \emph{global} maximizer of the stabilizer purity $\zeta_2$ on the local Gaussian orbit, i.e.
$M_{2}^{\mathrm{FNL}}(|\Psi_O\rangle) = M_{2}(|\Psi_{\rm can}\rangle)$,
from which Eq.~\eqref{eq:analytic_NL} follows by direct computation of the two-mode stabilizer entropy~\cite{Cao_2025,qian2025quantumnonlocalnonstabilizerness}.

For integer $\RENYI>2$ we conjecture that the same structure survives, with the R\'enyi weight adapted accordingly,
    \begin{equation} 
    \begin{split}
      M_{\RENYI}^{\mathrm{FNL}}(|\Psi_O\rangle)\! &= \sum_{i=1}^{\ell}\mathfrak{m}_\RENYI(\lambda_i^2)\;,
    \end{split}
    \label{eq:analytic_NL_alpha}
    \end{equation}
\noindent
where $\mathfrak{m}_\RENYI(x)$ is the weight function
\begin{equation}
    \begin{split}
      \mathfrak{m}_\RENYI(x)&:=
    \frac{1}{1-\RENYI}\log_2\! \left[\frac{\left(1-x \right)^{\RENYI }\!\!+1+ x^{ \RENYI }}{2} \right]\;, 
    \end{split}
    \label{eq:analytic_NL_weights}
    \end{equation}
which reduces to Eq.~\eqref{eq:analytic_NL} at $\RENYI=2$. This conjecture is supported by the extensive variational optimization reported in Appendix~\ref{app:variational}, and it reproduces the same phenomenology as the $\RENYI=2$ case, as we discuss in Appendix~\ref{Sec:generaliz}.

Several remarks are in order.
First, the proof of Theorem~\ref{thm:global} relies heavily on the specific quadratic (collision) functional form. Proving the result for $\RENYI>2$ would require a different proof strategy, which we leave open for future investigations.
Second, for small subsystems the restriction to Gaussian unitaries is immaterial. For $\ell\le 2$, structural simplifications make the Gaussian minimization coincide with the full one, $M_2^{\mathrm{NL}}=M_2^{\mathrm{FNL}}$, and Eq.~\eqref{eq:analytic_NL} reproduces the expression obtained in Ref.~\cite{Cao_2025} from the Schmidt spectrum. For larger subsystems, this is no longer the case: non-Gaussian local transformations can further decrease the magic even when the initial state is Gaussian, so that Theorem~\ref{thm:global} provides an upper bound on the generic non-local magic, $M_2^{\mathrm{NL}}\le M_2^{\mathrm{FNL}}$.
Lastly, our formula is considerably simpler even than computing the \emph{total} stabilizer entropy $M_\RENYI$ for fermionic Gaussian states, which requires either Majorana Monte Carlo~\cite{Bera_2025} or perfect sampling~\cite{collura2026nonstabilizernessfermionicgaussianstates}, both requiring an exponential number of samples for convergence in the worst case. In contrast, evaluating $M_\RENYI^\mathrm{FNL}$ via Eqs.~\eqref{eq:analytic_NL} and~\eqref{eq:analytic_NL_alpha} requires only $\mathrm{poly}(\ell)$ operations.

In the following, we will consider throughout the case $\RENYI=2$, where the stabilizer entropy is a magic monotone and where Theorem~\ref{thm:global} applies. (Nevertheless, as aforementioned, we have checked that the same phenomenology holds for $\RENYI>2$, see Appendix~\ref{Sec:generaliz}.)
We will show that the structure highlighted in Eq.~\eqref{eq:analytic_NL} enables several analytical results, both at equilibrium and out-of-equilibrium.

\subsection{Simple example: entanglement without non-local magic}
\label{sec:simple}
Before presenting applications, we illustrate the physical content of the non-local magic with a simple but instructive example that clarifies the distinction between entanglement and non-local magic.
Consider $N$ qubits with $\ell = N/2$ and the rainbow state
\begin{equation}
    |\Psi_{\mathrm{rb}}\rangle = \bigotimes_{j=1}^{\ell} \frac{1}{\sqrt{2}}\bigl(|0_j 0_{\bar\jmath}\rangle + |1_j 1_{\bar\jmath}\rangle\bigr)\;,
\end{equation}
where $\bar\jmath = N+1-j$ is the mirror site in $B=\{\ell+1,\dots,N\}$, for the bipartition $A=\{1,\dots,\ell\}$. This is a fermionic Gaussian state: it is the ground state of a free-fermion hopping Hamiltonian with suitable long-range pairing. Each pair $(j,\bar\jmath)$ forms a maximally entangled Bell pair, so every eigenvalue of $i\Gamma_A$ vanishes, $\lambda_j = 0$, and the entanglement entropy is maximal: $S = \ell$ bits, a volume law. Yet, this state is a stabilizer state: it belongs to the Clifford orbit of the computational basis. Consequently its magic, and therefore its non-local magic, vanishes identically: $M_2^{\mathrm{NL}} = 0$.

This result is transparent from our formula: $\mathfrak{m}_2(\lambda_j^2)$ vanishes both at $\lambda_j = 0$  (maximally entangled mode) and at $\lambda_j = 1$  (unentangled mode). Non-local magic requires modes at \emph{intermediate} entanglement, $0 < \lambda_j < 1$, where the bipartite correlations are neither trivial nor of the Bell-pair type.

One might attempt to introduce magic into the rainbow state by applying local unitaries, e.g.\ a non-Clifford single-qubit rotation $R(\phi)$ on site $j\in A$. This produces a state $(R(\phi)\otimes \mathbb{I})|\mathrm{Bell}\rangle_{j\bar\jmath}$ that carries nonzero total magic. However, by construction, this magic is entirely \emph{local}: it can be removed by the inverse rotation $R(-\phi)\otimes \mathbb{I}$ acting on $A$ alone, and therefore does not contribute to $M_2^{\mathrm{NL}}$. The only way to create genuine non-local magic is to entangle the two qubits of each Bell pair in a non-stabilizer fashion, for instance, by applying a joint non-Clifford \emph{Gaussian} rotation $V_{j\bar\jmath}$~\cite{huang2026dismagickerunitarygatenonstabilizerness} that acts simultaneously on sites in $A$ and $B$, so that the state remains Gaussian and Theorem~\ref{thm:global} applies. Such a rotation shifts $\lambda_j$ away from $0$ and $1$, placing it in the interior of $[0,1]$ where $\mathfrak{m}_2(\lambda_j^2) > 0$. In other words, entanglement is necessary but not sufficient for non-local magic: a state can be maximally entangled yet carry zero nonstabilizerness across the bipartition.

\subsection{Experimental observability}
A key practical advantage of Eq.~\eqref{eq:analytic_NL} is that the fermionic non-local magic of Gaussian states is entirely determined by the reduced covariance matrix $\Gamma_A$, a polynomial number of two-point correlation functions.
While a proof-of-principle demonstration of non-local magic has recently been reported on a superconducting processor in a two-qubit setting~\cite{ahmad2025experimentaldemonstrationnonlocalmagic}, our result opens a scalable route applicable to large fermionic Gaussian systems.

A natural experimental protocol is \emph{fermionic shadow tomography}~\cite{Wan_2023}: random matchgate circuits are applied, followed by computational-basis measurements, and $\Gamma_A$ is inferred from the resulting classical-shadow data.
A conservative sample-complexity estimate yields, for failure probability at most $\delta$,
\[
N_{\mathrm{shots}} = O\!\left(\frac{\ell^2 \log(\ell/\delta)}{\epsilon_\Gamma^2}\right),
\]
to reconstruct \(\Gamma_A\) with accuracy \(\epsilon_\Gamma\), while the propagation of this error  to \(M_2^{\mathrm{FNL}}\) is discussed in  Appendix~\ref{app:experimental}.

\subsection{Fermionic Non-local magic of typical Gaussian states}
Having established the exact formula, we first investigate the fermionic non-local magic of \emph{typical} Gaussian states, i.e., states $|\Psi_O\rangle=U_O|\mathbf{0}\rangle$ generated by Haar-random $O\in \mathrm{O}(2N)$.
Our goal is to compute the average $\overline{M}_2^\mathrm{FNL}\equiv \mathbb{E}_{O\sim \mathrm{O}(2N)}[M_2^\mathrm{FNL}(|\Psi_O\rangle)]$, which for our closed-form expression reduces to a random matrix theory problem~\cite{Mehta2004,Livan_2018,vidmar2017entanglement}.
For a fixed subsystem $A$ of size $\ell\le N/2$, let us define the stochastic variables $x_k:=\lambda_k^2$, whose joint probability distribution is a Jacobi ensemble~\cite{Mehta2004}
\begin{equation}
    \mathbb{P}(x_1,\dots,x_\ell) = \frac{\Delta(x_1,\dots,x_\ell)}{Z_\ell}  \prod_{m=1}^\ell \frac{(1-x_m)^{N-2\ell}}{\sqrt{x_m(1-x_m)}}\;,
\end{equation}
with $\Delta(x_1,\dots,x_\ell)=\prod_{1\le i<j\le \ell} (x_i-x_j)^2$ the squared Vandermonde determinant  and $Z_\ell$ the normalization ensuring $\int \mathbb{P}(x_1,\dots,x_\ell)\prod_{m=1}^\ell (dx_m)=1$. 
Since $M_2^\mathrm{FNL}=\sum_{m=1}^\ell \mathfrak{m}_2(x_m)$ is a sum over separate $x_m$ contributions, the only ingredient required is the one-point density function 
\begin{equation}
\begin{split}
    \varrho^{(\ell)}(x):&=\mathbb{E}_\mathbb{P}\left[\sum_{m=1}^\ell \delta(x-x_m)\right] \\&= \ell \int \mathbb{P}(x,x_2,\dots,x_\ell)\prod_{m=2}^\ell (dx_m)\:,
\end{split}
\end{equation}
where we used relabelling in the second step. Notice that $\int_0^1 \varrho^{(\ell)}(x) dx = \ell$.
Thus, the average fermionic non-local magic is 
\begin{equation}
    \overline{M}_2^\mathrm{FNL}=\int_0^1 \mathfrak{m}_2(x) \varrho^{(\ell)}(x) dx\;.
\end{equation}
For the Jacobi ensemble, the one-point density is given by the standard orthogonal-polynomial kernel.
Given $w(x) = (1-x)^{N-2\ell}/\sqrt{x(1-x)}$, we have~\cite{sierant2026theorymatchgatecommutant,sierant2026fermionic}
\begin{equation}
\varrho^{(\ell)}(x) = w(x) \sum_{n=0}^{\ell-1} [p_n(x)]^2\;,
\label{eq:ziopino}
\end{equation}
where $p_n$ are polynomials in the domain $[0,1]$, that are orthonormal with respect to the weight $w(x)$, namely $p_n(x) = \mathrm{P}_n^{(-1/2,N-2\ell-1/2)}(1-2x)/\sqrt{h_n}$ in terms of the Jacobi polynomials $\mathrm{P}_n^{(a,b)}$ and with normalization 
\begin{equation}
    h_n=\frac{\Gamma(n+1/2) \Gamma(n+N-2\ell+1/2)}{(2n+N-2\ell) \Gamma(n+1)\Gamma(n+N-2\ell)}.
\end{equation} 
Eq.~\eqref{eq:ziopino} yields the exact average non-local magic at finite size $N$ for a given $\ell$, which we benchmark against exact numerics in Fig.~\ref{fig:typicality}(a), finding perfect agreement. We emphasize that this is a \emph{quenched} average \DIFdelbegin \DIFdel{, the average is }\DIFdelend \DIFaddbegin \DIFadd{--- }\DIFaddend performed outside the logarithm \DIFdelbegin \DIFdel{, }\DIFdelend \DIFaddbegin \DIFadd{--- }\DIFaddend which is typically harder to evaluate than its annealed counterpart.

\begin{figure}[t!]
    \centering
    \includegraphics[width=\columnwidth]{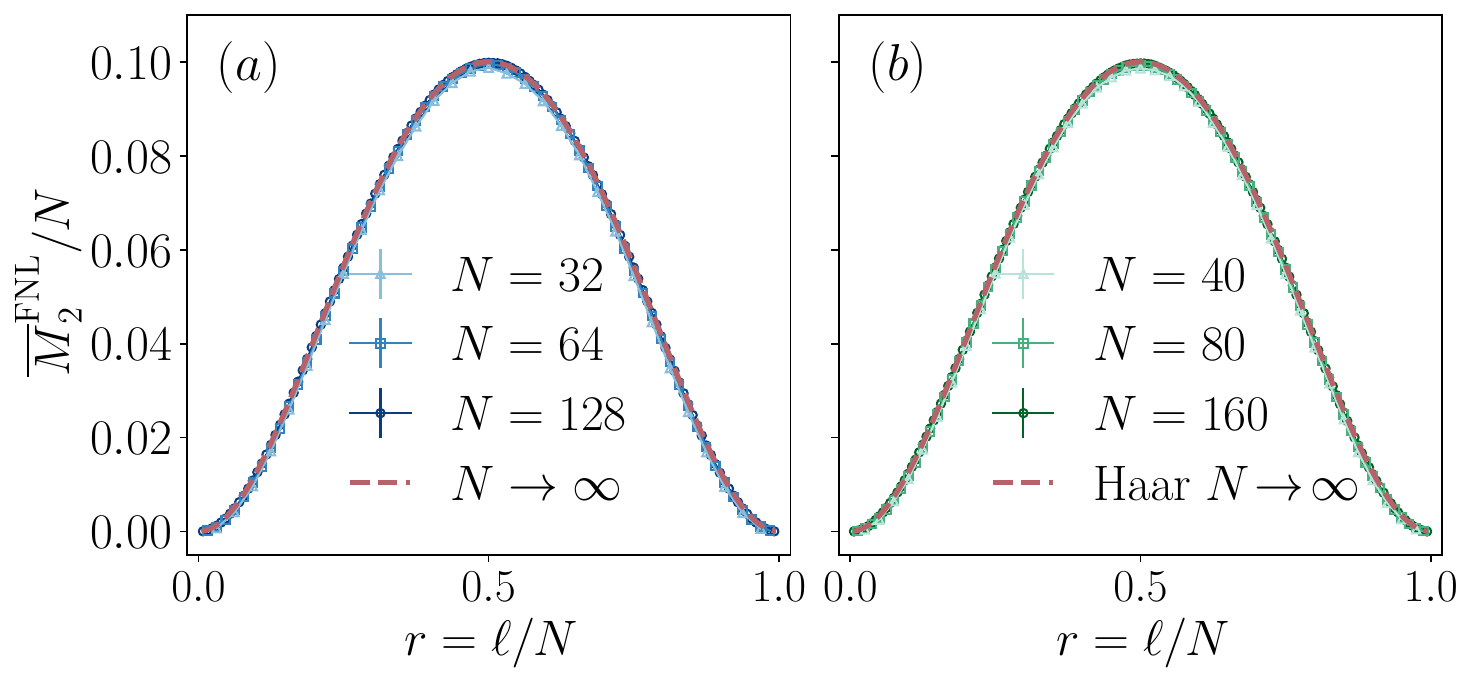}
    \caption{\textbf{Fermionic non-local magic of typical fermionic Gaussian states.}
    (a)~Page curve for Haar-random FGS: $\overline{M_2}^{\mathrm{FNL}}/N$ is plotted against the ratio $r=\ell/N$. The results are compared with the closed form expression for $N\to\infty$ in Eq.~\eqref{eq:ziopin2}.
    (b)~The fermionic non-local magic Page curve is tested against the numerically computed average non-local magic for SYK$_2$ mid-spectrum eigenstates. The results are in quantitative agreement with Eq.~\eqref{eq:ziopin2}. }
    \label{fig:typicality}
\end{figure}

We now turn to the thermodynamic limit $\ell,N\to\infty$ at fixed ratio $r=\ell/N\in[0,1/2]$, defining the fermionic non-local magic density $\overline{\mathsf{m}}_2^\mathrm{FNL}:=\overline{M}_2^\mathrm{FNL}/N$ (the final result is symmetric under $r\to 1-r$).
In this regime, the rescaled  one-point density $\varrho^{(\ell)}/N$ converges to the free Jacobi law with support $x\in [0,c]$, $c=4r(1-r)$,
\begin{equation}
    \varrho_r(x) = \frac{\sqrt{(c-x)x}}{2\pi x(1-x)}\mathbf{1}_{[0,c]}(x)\;,
\end{equation}
where $\mathbf{1}_A(x)$ is the characteristic function of $A$, $\mathbf{1}_A(x)=1$ if and only if $x\in A$. 
The total mass of the density is $\int_0^c \varrho_r(x) dx=r$, which is precisely the fraction of the number of eigenvalues over the system size. 
Thus, in the scaling limit, we have
\begin{equation}
\overline{\mathsf{m}}_2^\mathrm{FNL}(r)\simeq \int_0^c \mathfrak{m}_2(x) \varrho_r(x)dx\;.
\end{equation}
A more convenient parametrization is obtained by setting $x=c\sin^2\theta$  with $\theta\in[0,\pi/2]$. Then the integral is 
\begin{equation}
\overline{\mathsf{m}}_2^\mathrm{FNL}(r)\simeq \frac{c}{\pi}\int_0^{\pi/2} \frac{\cos^2 \theta}{1-c\sin^2\theta} \mathfrak{m}_2(c\sin^2 \theta)d\theta\;.\label{eq:ziopin2}
\end{equation}
At exactly half bipartition, $r=1/2$, this integral can be performed exactly and the formula reduces to 
$\overline{\mathsf{m}}_2^\mathrm{FNL}(r)\simeq 2-\log_2(2+\sqrt{3})$.
On the other hand, for small $r\simeq 0$ (and by symmetry $r\simeq 1$), we have $c\simeq 4r$ and the integral yields $\overline{\mathsf{m}}_2^\mathrm{FNL}(r\simeq 0)\simeq \frac{r^2}{\ln(2)}+\mathcal{O}(r^3)$. 
This {\em fermionic non-local magic Page curve} shows a  quadratic onset,  for every \DIFdelbegin \DIFdel{$\alpha$ }\DIFdelend \DIFaddbegin \DIFadd{$\RENYI$ }\DIFaddend (see Appendix Sec.~\ref{Sec:generaliz}), with a transparent physical origin: when $\ell$ is of order one while $N\to\infty$, the $\ell$ modes of $A$ are almost, or exactly, maximally entangled and thus carry no non-local magic. 
This is in sharp contrast with the entanglement Page curve, which vanishes only at $|A|=0$ and grows linearly for small $r$~\cite{PhysRevE.106.034118,PhysRevLett.125.180604,PhysRevB.103.L241118}.

To test the universality of this result beyond Haar-random states, we consider the Sachdev--Ye--Kitaev model at quadratic order, $\mathrm{SYK}_2$.
This model is specified by the  Hamiltonian  $H$ of the form in  Eq.~\eqref{eq:ham} with  $h$ a random antisymmetric matrix whose elements are drawn from the normal distribution $h_{m,n}\sim \mathcal{N}(0,{J^2}/{N})$ and set $J=1$. 
The single-body Hamiltonian $ih$ is Hermitian with eigenvalues in $\pm\varepsilon_k$ pairs ($k=1,\dots,N$), each pair defining an effective fermionic mode. A many-body eigenstate is specified by a filling $\mathcal{F}\subset\{1,\dots,2N\}$ of $N$ single-particle modes with eigenvectors $|v_k\rangle$, yielding the one-body density matrix $\Xi=\sum_{k\in\mathcal{F}}|v_k\rangle\langle v_k|$ and the covariance matrix $\Gamma = i(2\Xi-\mathbb{I})$. 
The ground state fills the $N$ lowest-energy modes; to probe typicality, we construct mid-spectrum eigenstates $H|E_n\rangle=E_n|E_n\rangle$  by flipping a random subset of $N/4$ to $N/2$ modes from the ground-state configuration, replacing occupied negative-energy modes with their positive-energy partners. For each of $N_{\mathrm{dis}}=1000$ disorder realizations, we compute $M_2^{\mathrm{FNL}}$ from $\Gamma$ and average over 200 excited eigenstates $|E_n\rangle$ per realization, obtaining $\overline{M}_2^{\mathrm{FNL}}=\mathbb{E}_{h}\,\mathbb{E}_{n}[M_2^{\mathrm{FNL}}(|E_n\rangle)]$.
In Fig.~\ref{fig:typicality}(b), we plot the resulting non-local magic density $\overline{\mathsf{m}}_2^\mathrm{FNL}$ versus $r=\ell/N$ for $N=40,80,160$.
The data converge to the analytical Haar prediction Eq.~\eqref{eq:ziopin2} as $N$ increases, confirming that mid-spectrum eigenstates of SYK$_2$ reproduce the typicality of fermionic non-local magic expected from random fermionic Gaussian states.

\subsection{Fermionic Non-local magic in ground states}

\begin{figure*}
    \centering
    \includegraphics[width=\linewidth]{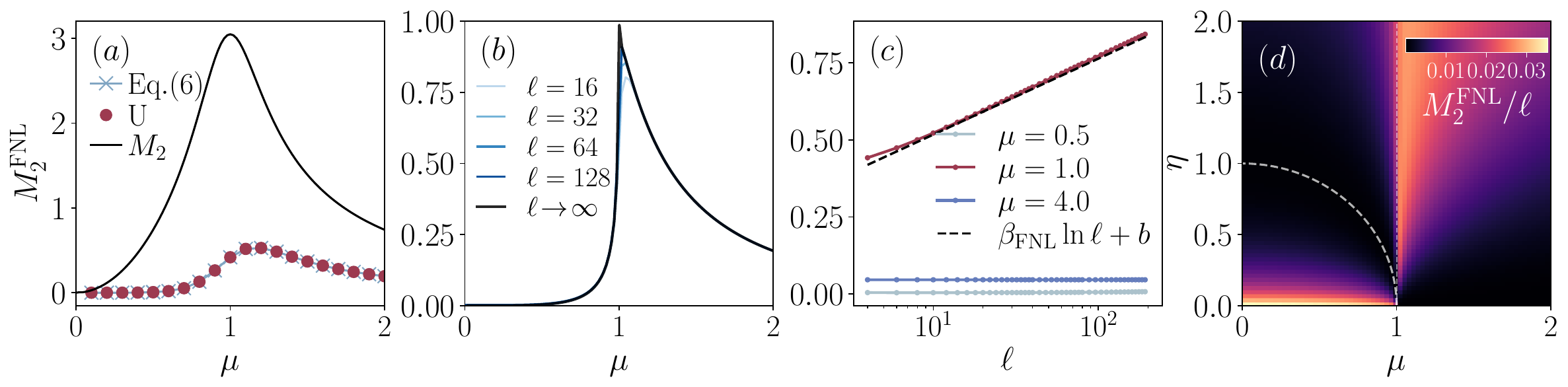}
    \caption{%
    \textbf{Equilibrium fermionic non-local magic in the Ising and XY models.}
    (a)~Validation of Eq.~\eqref{eq:analytic_NL} for the Ising ground state with $N=8$, $\ell=4$: the fermionic non-local magic $M_2^{\mathrm{FNL}}$ from Eq.~\eqref{eq:analytic_NL} (crosses) agrees with the brute-force minimization over all local unitaries $U_A \otimes U_B$  (red marker) to numerical precision. The black line shows the total stabilizer entropy $M_2$ for reference.
    (b)~$M_2^{\mathrm{FNL}}$ vs.\ transverse field $\mu$ in the thermodynamic limit for subsystem sizes $\ell = 16, 32, 64, 128$; the black curve is the exact $\ell \to \infty$ result from the Peschel entanglement spectrum, Eq.~\eqref{eq:NL_peschel2}. The fermionic non-local magic peaks at the critical point $\mu_c = 1$ and is suppressed in both gapped phases.
    (c)~Scaling of $M_2^{\mathrm{FNL}}(\ell)$ at criticality ($\mu = 1$, red): logarithmic growth: the dashed line is the fit $\beta_{\mathrm{fit}}\ln\ell+b$,  with $\beta_{\mathrm{fit}} \approx 0.1072$  and $b \approx 0.27$, in agreement with the Fisher--Hartwig prediction Eq.~\eqref{eq:alpha_NL}. In the gapped phases ($\mu = 0.5$, $\mu = 4.0$), $M_2^{\mathrm{FNL}}$ saturates to a finite value.
    (d)~Heatmap of $M_2^{\mathrm{FNL}}/\ell$ over the XY phase diagram ($\mu$, $\eta$), computed from the exact Peschel formula. The fermionic non-local magic is concentrated near the critical lines and is strongly suppressed deep in the gapped phases.
    }
    \label{fig:Ising_model}
\end{figure*}

We now apply Eq.~\eqref{eq:analytic_NL} to study the equilibrium properties of fermionic non-local magic in the ground state of the XY chain
\begin{equation}
    H = -\sum_{j=1}^{N}\!\left[\frac{1+\eta}{2}X_j X_{j+1} + \frac{1-\eta}{2}Y_j Y_{j+1} + \mu Z_j\right],
    \label{eq:xy}
\end{equation}
with periodic boundary conditions.
Using the Jordan--Wigner transformation, Eq.~\eqref{eq:xy} is mapped to the quadratic Hamiltonian $H$of the form in Eq.~\eqref{eq:ham} with single-body matrix entries $h_{2m-1,2m} = 2\mu$, $h_{2m,2m+1} = 1+\eta$, $h_{2m+2,2m-1} = 1-\eta$, and all remaining elements fixed by antisymmetry~\cite{Mbeng2024}. 
Throughout this Section, the XY chain is studied as a \emph{fermionic} model, i.e.\ as a Majorana chain: after the Jordan--Wigner mapping the ground state is naturally written --- and exactly solvable --- as a fermionic Gaussian state, and the natural notion of locality is that of the fermionic modes. 
Accordingly, $M_\RENYI^{\mathrm{FNL}}$ is the non-local magic of the \textit{free-fermion problem}, and the local Gaussian unitaries of Eq.~\eqref{eq:defFNLmagic} are, for a contiguous cut, implemented by genuine tensor products $U_{O_A}\otimes U_{O_B}$ of qubit unitaries (cf. also Appendix~\ref{app:proof}).
Following standard practice in the literature, we simplify the nomenclature and keep referring to it as the XY model~\footnote{Note that, minimizing over all local spin unitaries would give a quantity bounded from above by ours, $M_\RENYI^{\mathrm{NL}}\le M_\RENYI^{\mathrm{FNL}}$, as discussed in the first remark below Theorem~\ref{thm:global}.}. 
We restrict to system sizes $N$ divisible by $4$, for which the ground state lies in the Neveu--Schwarz sector with antiperiodic fermionic boundary
conditions $\gamma_{N+1}=-\gamma_1$~\cite{Mbeng2024}.
The allowed momenta are 
$k=k(n)= \pi(2n+1)/N$ for $n = 0,\dots, N/2-1$, 
spanning $(0,\pi)$ with $N/2$ equally spaced points; 
in the thermodynamic limit $N\to\infty$, sums over $k(n)$ 
become integrals over $[0,\pi]$.
Then, the single-body Hamiltonian $ih$ is diagonalized by a Fourier transform to momentum space followed by a $2\times 2$ Bogoliubov rotation at each wave-vector $k$~\cite{Mbeng2024}.
The Bogoliubov angle $\theta_k$ is defined through
\begin{equation}
    \cos\theta_k = \frac{\mu - \cos k}{\varepsilon_k}\;,\qquad \sin\theta_k = \frac{\eta\sin k}{\varepsilon_k}\;,
    \label{eq:bogoliubov_angle}
\end{equation}
yielding the single-particle dispersion
\begin{equation}
    \varepsilon_k = \sqrt{(\mu-\cos k)^2 + \eta^2\sin^2 k}\;.
    \label{eq:dispersion}
\end{equation}
In the diagonal basis, the Hamiltonian reads $H = \frac{i}{2}\sum_k \varepsilon_k\,\tilde\gamma_{2k-1}\tilde\gamma_{2k}$, where $\tilde\gamma_m = \sum_n G_{mn}\gamma_n$ are the rotated Majorana operators.
The ground state is the vacuum of all Bogoliubov modes, with covariance matrix $\Gamma = G\,\Gamma_0\, G^T$.

From now on, we consider the bipartition fixed by the contiguous interval $A=\{1,\dots,\ell\}$ and its complement $B=\Lambda\setminus A$.
In the thermodynamic limit $N\to\infty$ with $\ell$ fixed, the reduced covariance matrix $\Gamma_A$ is a $2\ell\times 2\ell$ block-Toeplitz matrix with $2\times 2$ blocks depending only on $m-n$,
\begin{equation}
\begin{split}
    (\Gamma_A)_{mn} &= \int_{-\pi}^{\pi}\frac{dk}{2\pi}\,e^{ik(m-n)}\,g(k)\;,
    \\
    g(k) &= \begin{pmatrix} 0 & e^{i\theta_k} \\ -e^{-i\theta_k} & 0 \end{pmatrix}.
\end{split}
\label{eq:block_toeplitz}
\end{equation}
The fermionic non-local magic $M_2^{\mathrm{FNL}} = \sum_n \mathfrak{m}_2(\lambda_n^2)$ is therefore a spectral function of $\Gamma_A$ with symbol $g(k)$, providing a direct route to efficient numerical evaluation.
For a given Hamiltonian and subsystem size $\ell$, we construct $\Gamma_A$ by evaluating the integral in Eq.~\eqref{eq:block_toeplitz}, exact in the thermodynamic limit $N\to\infty$, and compute the non-local magic via Eq.~\eqref{eq:analytic_NL} from the eigenvalues $\pm\lambda_n$ ($\lambda_n\in[0,1]$) of $i\Gamma_A$.
This procedure reaches large subsystem sizes $\ell$ at classical computational cost $\mathcal{O}(\ell^3)$.

In the scaling limit $N\gg \ell\gg 1$, the asymptotic behavior of spectral functions of Toeplitz matrices is controlled by two classical results: the \emph{strong Szeg\H{o} theorem}, applicable when the symbol is smooth, and the \emph{Fisher--Hartwig formula}, which governs the corrections arising from singularities in the symbol~\cite{Jin_2004,Its_2005}.

Since the symbol~\eqref{eq:block_toeplitz} satisfies $|\det g(k)| = 1$ for all $k$, the Szeg\H{o} theorem rules out volume-law scaling of $M_2^{\mathrm{FNL}}$.
When $g(k)$ is smooth, as occurs in the gapped phases $\mu \neq 1$ where $\theta_k$ is analytic, the strong Szeg\H{o} theorem further implies that $M_2^{\mathrm{FNL}}$ saturates to a finite value as $\ell \to \infty$, with exponentially small corrections~\cite{simon2010szego}.
Logarithmic growth can arise only from singularities in the symbol, which for the Ising model ($\eta=1$) occur at the critical point $\mu = \mu_c = 1$, where the Bogoliubov angle $\theta_k$ develops a jump discontinuity at $k = 0$\DIFdelbegin \DIFdel{(and $k = \pi$)}\DIFdelend .
We now exploit this structure to obtain exact scaling forms in both regimes.

\subsubsection{Exact results off criticality via corner transfer matrices}
The reduced density matrix of subsystem $A$ is a fermionic Gaussian state, and can be written as~\cite{Peschel2004}
\begin{equation}
    \rho_A = \frac{\exp[{-H_{\mathrm{ent}}}]}{Z_{\mathrm{ent}}}\;, \quad H_{\mathrm{ent}} = \frac{i}{2}\sum_{n=1}^{\ell} \varepsilon_n^{\mathrm{ent}}\,\tilde\gamma_{2n-1}\tilde\gamma_{2n}\;,
    \label{eq:rho_ent}
\end{equation}
where $H_{\mathrm{ent}}$ is the entanglement Hamiltonian, $Z_{\mathrm{ent}}=\mathrm{Tr}(\exp[{-H_{\mathrm{ent}}}])$ is the normalization and $\tilde\gamma_m$ are the Majorana operators that diagonalize $\rho_A$, and $\varepsilon_n^{\mathrm{ent}} \geq 0$ are the single-particle entanglement energies.
The covariance matrix $\Gamma_A$ and the single-particle entanglement Hamiltonian $h_{\mathrm{ent}}$ implicitly defined in Eq.~\eqref{eq:rho_ent} are related by
\begin{equation}
    i\Gamma_A = \tanh\!\left(\frac{i\,h_{\mathrm{ent}}}{2}\right),
    \label{eq:Gamma_tanh}
\end{equation}
so that the eigenvalues of $i\Gamma_A$ are $\pm\lambda_n$ with $\lambda_n = \tanh\!\left(\frac{\varepsilon_n^{\mathrm{ent}}}{2}\right)$. 
Here $\lambda_n = 1$ corresponds to a pure (unentangled) mode and $\lambda_n < 1$ to an entangled one.

For the XY model away from criticality, the entanglement spectrum of a semi-infinite bipartition fixed by $A=\mathbb{Z}_+=\{1,2,\dots\}$ is known in closed form through the corner transfer matrix approach~\cite{Peschel2009,Eisler2017}.
As we show below, this leads to the exact computation of $M_2^\mathrm{FNL,\mathbb{Z}_+}$ in this regime. 
This regime approximates the scaling limit $N\gg\ell\gg1$.
The single-particle entanglement energies are equally spaced,
\begin{equation}
    \varepsilon_n^{\mathrm{ent}} = \begin{cases}
        (2n+1)\,\varepsilon & \mu > 1 \;,\\
        2n\,\varepsilon & \mu < 1 \;,
    \end{cases}
    \label{eq:peschel_spectrum}
\end{equation}
where $\varepsilon = \pi K(\kappa')/K(\kappa)$ is the level spacing, $K(x)$ denotes the complete elliptic integral of the first kind, $\kappa' = \sqrt{1-\kappa^2}$, and $\kappa$ is conditioned on the phase~\cite{Peschel2009}
\begin{equation}
    \kappa = \begin{cases}
        \displaystyle\frac{\eta}{\sqrt{\eta^2 + \mu^2 - 1}} & \mu > 1 \;,\\[8pt]
        \displaystyle\frac{\sqrt{\eta^2 + \mu^2 - 1}}{\eta} & \mu < 1,\; \eta^2 + \mu^2 > 1\;,\\[8pt]
        \displaystyle\sqrt{\frac{1 - \eta^2 - \mu^2}{1 - \mu^2}} & \mu < 1,\; \eta^2 + \mu^2 < 1\;.
    \end{cases}
    \label{eq:elliptic_modulus}
\end{equation}
This leads to the exact expression for the off-critical phases
\begin{equation}
    M_2^{\mathrm{FNL},\mathbb{Z}_+} = \sum_{n=0}^{\infty} \mathfrak{m}_2\!\left[\tanh^2\left(\frac{\varepsilon_n^{\mathrm{ent}}}{2}\right)\right].
    \label{eq:NL_peschel}
\end{equation}
For a block of $\ell$ sites in the infinite chain with periodic boundary conditions, two separate boundaries contribute, so that the fermionic non-local magic is twice the single-cut result
\begin{equation}
    M_2^{\mathrm{FNL}}(\ell) \simeq  2M_2^{\mathrm{FNL},\mathbb{Z}_+} = 2\sum_{n=0}^{\infty} \mathfrak{m}_2\!\left[\tanh^2\left(\frac{\varepsilon_n^{\mathrm{ent}}}{2}\right)\right],
    \label{eq:NL_peschel2}
\end{equation}
up to corrections exponentially small in $\ell/\xi$, where $\xi$ is the correlation length of the problem. 
Since $\mathfrak{m}_2(x)\simeq(1-x)/\ln 2$ vanishes linearly as $x \to 1$, and $1-\lambda_n^2 \to 0$ exponentially for large $n$, the series converges rapidly: for $\mu$ away from criticality, only a few terms contribute, consistently with the Szeg\H{o} prediction of a finite $M_2^{\mathrm{FNL}}$.

In Fig.~\ref{fig:Ising_model} we benchmark these predictions against numerical simulations with periodic boundary conditions for the Ising model ($\eta=1$).
In Fig.~\ref{fig:Ising_model}(a), we validate Eq.~\eqref{eq:analytic_NL} by comparing against a brute-force minimization over the full unitary group $\mathrm{U}(d_A)\otimes\mathrm{U}(d_B)$ for $N=8$, $\ell=4$: the two results agree to numerical precision.
In Fig.~\ref{fig:Ising_model}(b), we compare exact numerics for $N=10^6$ qubits and $\ell = 16,32,64,128$ with the asymptotic formula Eq.~\eqref{eq:NL_peschel2}, finding quantitative agreement. Notably, for $\mu<1$ the non-local magic is strongly suppressed, $M_2^{\mathrm{FNL}}$ being exponentially small in the level spacing $\varepsilon$ of the entanglement spectrum and vanishing exactly only in the GHZ limit $\mu\to0$. This is the ordered (ferromagnetic) phase, whose ground state approaches the GHZ state $\frac{1}{\sqrt{2}}(|0\rangle^{\otimes N}+|1\rangle^{\otimes N})$ as $\mu\to 0$. The suppression of non-local magic in this regime can be understood from the Peschel entanglement spectrum, Eq.~\eqref{eq:peschel_spectrum}: for $\mu<1$ and $\eta=1$, the single-particle entanglement energies are $\varepsilon_n^{\mathrm{ent}}=2n\varepsilon$. The $n=0$ mode has $\varepsilon_0=0$, yielding $\lambda_0=\tanh(0)=0$ (a maximally entangled, stabilizer-like Bell pair), while all $n\ge 1$ modes have $\lambda_n=\tanh(n\varepsilon)$ exponentially close to $1$ (nearly unentangled modes). Since $\mathfrak{m}_2(0)=\mathfrak{m}_2(1)=0$, every mode contributes an exponentially small amount of non-local magic. This is a striking example of a phase with nonzero entanglement entropy (area law) but negligible  non-local magic. In contrast, the entanglement entropy receives a finite area-law contribution from the $n=0$ maximally entangled mode.

At the critical point $\mu_c=1$ ($\eta=1$), the gap closes and the correlation length diverges as $\xi\simeq |\mu-1|^{-1}$; the spacing $\varepsilon \to 0$ (since $\kappa \to 1$ and $K(\kappa) \to \infty$), so the series in Eq.~\eqref{eq:NL_peschel2} diverges.
The Bogoliubov angle $\theta_k$ develops a jump discontinuity at $k = 0$, placing the problem squarely in the Fisher--Hartwig regime~\cite{Jin_2004}.
The key result is that the asymptotic density of the positive eigenvalues $\lambda_n \in [0,1)$  of $i\Gamma_A$ --- one per $\pm\lambda_n$ pair, with two Fisher--Hartwig boundary points for the block --- scales as $\rho(\lambda) \simeq 2\ln\ell\,[\pi^2(1-\lambda^2)]^{-1}$~\cite{Its_2005}.
Inserting this spectral density into the fermionic non-local magic,
\begin{equation}
    M_2^{\mathrm{FNL}}(\ell) = \sum_{n} \mathfrak{m}_2(\lambda_n^2)\;\simeq\; \beta_{\mathrm{FNL}}\,\ln\ell,
\end{equation}
with constant
\begin{equation}
    \beta_{\mathrm{FNL}} = \frac{2}{\pi^2}\int_{0}^{1} \frac{\mathfrak{m}_2(\lambda^2)}{1-\lambda^2}\, d\lambda= \frac{\pi ^2-\ln ^2\left(7-4 \sqrt{3}\right)}{\pi ^2 \ln (16)}\;,
    \label{eq:alpha_NL}
\end{equation}
where in the second step we performed the integral explicitly.
This logarithmic scaling is the non-local magic analogue of $S \simeq (c/3)\ln\ell$~\cite{Calabrese2004entanglement}, with the central charge $c$ replaced by $\beta_{\mathrm{FNL}}$.
In Fig.~\ref{fig:Ising_model}(c), a fit to the numerical data for $\ell > 50$ at $\mu = 1$ yields $\beta_{\mathrm{fit}} \approx 0.1072$, in excellent agreement with Eq.~\eqref{eq:alpha_NL}.

Finally, in Fig.~\ref{fig:Ising_model}(d) we extend our analysis to the full XY phase diagram. The fermionic non-local magic is concentrated near the critical lines and is strongly suppressed deep in the gapped phases, where it saturates to a small $\mathcal{O}(1)$ value.

\subsection{Area law for fermionic non-local magic}
\label{sec:FNL_area_law}
The results presented in the previous subsection are more general. In fact,  Eq.~\eqref{eq:analytic_NL} immediately implies an area law for fermionic non-local magic in ground states of one-dimensional local gapped free-fermion Hamiltonians. We present the argument for arbitrary $\RENYI\ge1$; it is rigorous at $\RENYI=2$ by Theorem~\ref{thm:global}, while for $\RENYI\neq2$ it inherits the status of the conjectured weights of Eq.~\eqref{eq:analytic_NL_alpha}.

Let $|\Psi_O\rangle$ be such a Gaussian ground state and let $A$ be a contiguous interval. The Gaussian entanglement entropy is
\begin{equation}
    S(A)=
    \sum_{i=1}^{\ell}
    \mathfrak{f}\!\left(\frac{1+\lambda_i}{2}\right),
    \label{eq:gaussian_entropy_area_law_sec}
\end{equation}
where $\mathfrak{f}(x):=-x\log_2x-(1-x)\log_2(1-x)$. On the other hand, taking the limit \(\RENYI\to1\) in Eq.~\eqref{eq:analytic_NL_weights} gives
\begin{equation}
    M_{1}^{\mathrm{FNL}}(|\Psi_O\rangle)
    =
    \frac{1}{2}
    \sum_{i=1}^{\ell}
    \mathfrak{f}(\lambda_i^2).
    \label{eq:M1_FNL}
\end{equation}
Moreover, $M_{\alpha}^{\mathrm{FNL}}\le M_{\beta}^{\mathrm{FNL}}$ for $\alpha\ge\beta$, since R\'enyi entropies are non-increasing in the index. By noticing that for every $\lambda\in[0,1)$
\begin{equation}
    \frac{1}{2}\mathfrak{f}(\lambda^2)
    \le
    2\, \mathfrak{f}\!\left(\frac{1+\lambda}{2}\right),
    \label{eq:m1_entropy_bound}
\end{equation}
then, summing over the single-particle entanglement modes gives
\begin{equation}
    M_{\RENYI}^{\mathrm{FNL}}(|\Psi_O\rangle)
    \le
    M_{1}^{\mathrm{FNL}}(|\Psi_O\rangle)
    \le
    2\, S(A),
    \qquad
    \RENYI\ge1.
    \label{eq:FNL_bound_by_entropy_all_alpha}
\end{equation}
This is the Gaussian analogue of the general entanglement-controlled bound discussed for $M_2^{\mathrm{NL}}$ in Ref.~\cite{Cao_2025}.

Finally, by Hastings' theorem~\cite{Hastings_2007}, the entanglement entropy of a unique gapped ground state of a one-dimensional local Hamiltonian obeys
\begin{equation}
    S(A)\le C_{\mathrm{ent}}|\partial A|,
    \label{eq:Hastings_area_law}
\end{equation}
with $C_{\mathrm{ent}}$ independent of $|A|$ and of the total system size. Therefore
\begin{equation}
    M_{\RENYI}^{\mathrm{NL}}(|\Psi_O\rangle)
    \le
    M_{\RENYI}^{\mathrm{FNL}}(|\Psi_O\rangle)
    \le
    2\, C_{\mathrm{ent}}|\partial A|,
    \label{eq:FNL_area_law_final}
\end{equation}
whenever $\RENYI\ge1$. 

\subsection{Quench dynamics and quasiparticle picture}

\begin{figure*}
    \centering
    \includegraphics[width=\linewidth]{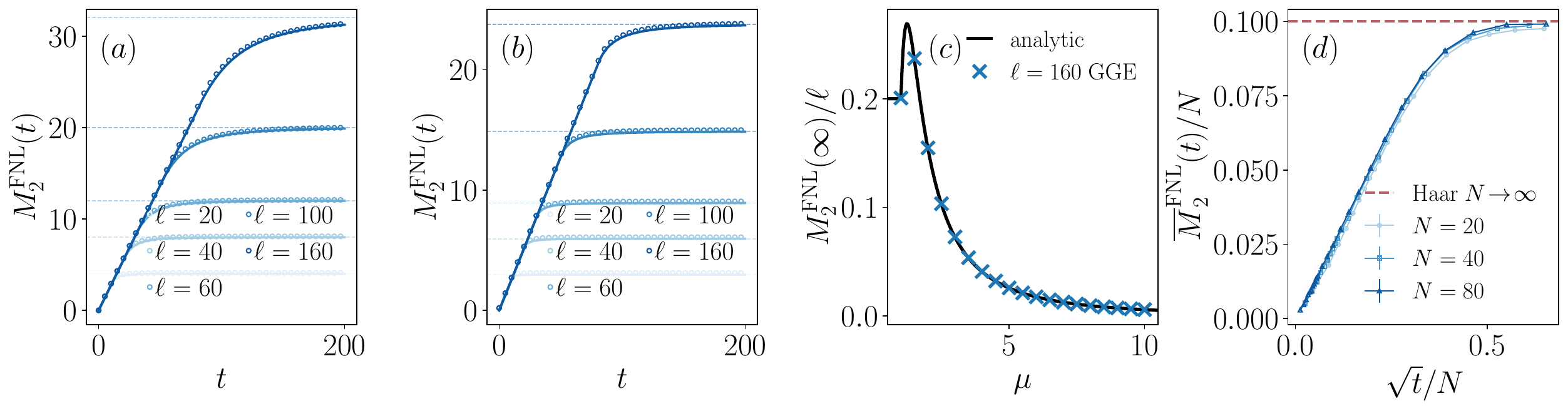}
    \caption{%
    \textbf{Dynamics of fermionic non-local magic.}
    (a)~Quench at $\eta=1$ from the fully polarized state ($\mu_0\to\infty$) to the critical point ($\mu=1$): the exact numerics (circles) agrees quantitatively with the analytical prediction in Eq.~\eqref{eq:fagotti_mnl} (solid lines) for subsystem sizes $\ell = 40, 60, 100, 160$. Dashed lines mark the GGE stationary value.
    (b)~Same for a finite quench $\mu_0 = 2 \to \mu = 1$, showing weaker fermionic non-local magic production due to the smaller Bogoliubov overlap $\sin^2\Delta\theta_k$.
    (c)~Stationary-state density $M_2^{\mathrm{FNL}}(\infty)/\ell$ as a function of the post-quench field $\mu$ ($\mu_0\to\infty$): analytical prediction from Eq.~\eqref{eq:mnl_stationary} (solid) and GGE numerics at $\ell = 160$ (crosses).
    (d)~Random Gaussian circuit dynamics:  $\overline{M_2}^{\mathrm{FNL}}/N$ is plotted against $\sqrt{t}/N$ for $N = 20, 40, 80$, exhibiting diffusive growth $\sim\!\sqrt{t}$ and saturation to the fermionic Gaussian Haar-random value $\overline{\mathsf{m}}_{2}^{\mathrm{FNL}}(r=1/2)=2-\log_2(2+\sqrt{3})$  (dashed line).}
    \label{fig:dynamics}
\end{figure*}

We now turn from equilibrium to the non-equilibrium dynamics of fermionic non-local magic, focusing on quantum quenches in the Ising chain.
The system is prepared in the ground state $|\Psi_0\rangle$ of $H(\mu_0)$ and, at time $t=0$, the transverse field is suddenly quenched to $\mu \neq \mu_0$; the state then evolves unitarily as $|\Psi(t)\rangle = e^{-iH(\mu)t}|\Psi_0\rangle$.
Since both the initial state and the time evolution preserve the Gaussian structure, $|\Psi(t)\rangle$ remains a fermionic Gaussian state at all times, fully characterized by its time-dependent covariance matrix $\Gamma(t)$.
In particular, the time-dependent block-Toeplitz symbol of $\Gamma_A(t)$ takes the form~\cite{Calabrese2005evolution}
\begin{equation}
    (\Gamma_A)_{mn}(t) = \int_0^{2\pi}\!\frac{dk}{2\pi}\,e^{ik(m-n)}\,g(k,t)\;,\label{eq:mustang}
\end{equation}
with time-dependent symbol
\begin{equation}
    g(k,t) = \begin{pmatrix} -q(k,t) & p(k,t) \\ -p(-k,t) & q(k,t) \end{pmatrix},
    \label{eq:symbol_t}
\end{equation}
where we defined the functions
\begin{equation}
\begin{split}
    p(k,t) &= e^{-i\theta_k}\left[\cos(\Delta\theta_k) + i\sin(\Delta\theta_k)\cos (2\varepsilon_k t) \right], \\
    q(k,t) &= i\,\sin(\Delta\theta_k)\,\sin (2\varepsilon_k t)\,.
    \label{eq:pq_symbol}
\end{split}
\end{equation}
Here $\varepsilon_k $ is the post-quench dispersion, $\theta_k$ is the post-quench Bogoliubov angle, and $\Delta\theta_k = \theta_k - \theta_k^{(0)}$ is the mismatch between the post-quench and pre-quench Bogoliubov angles, with
\begin{equation}
    \cos(\Delta\theta_k) = \frac{(\mu-\cos(k))(\mu_0-\cos (k))+\sin^2 (k)}{\varepsilon_k\,\varepsilon_k^{(0)}}\,.
    \label{eq:cos_delta_theta}
\end{equation}
At $t=0$, $q(k,0) = 0$ and $p(k,0) = e^{-i\theta_k^{(0)}}$, recovering the ground-state symbol of $H(\mu_0)$.
The diagonal component, $q(k,t)$, is generated dynamically by the quench and oscillates as $\sin(2\varepsilon_k t)$.

\subsubsection{Stationary state via dephasing}
At long times $t\to\infty$, the oscillating terms $\cos (2\varepsilon_k t)$ and $\sin(2\varepsilon_k t)$ in $p(k,t)$ and $q(k,t)$ average to zero due to dephasing across modes with different group velocities~\cite{Calabrese2005evolution}.
Only the time-independent part of $p$ survives, $p(k,\infty) = e^{-i\theta_k}\cos(\Delta\theta_k)$, while $q(k,\infty) = 0$.
The resulting stationary covariance matrix is that of a generalized Gibbs ensemble (GGE)~\cite{Calabrese2005evolution,vidmar2017entanglement} and is again a block-Toeplitz matrix, now with a \emph{time-independent} symbol.
The eigenvalues $\pm\lambda_k^\infty$ of the stationary $i\Gamma_A^\infty$ are determined by the Bogoliubov angle mismatch alone, $\lambda_k^\infty=|\cos\Delta\theta_k|$, 
 and the stationary fermionic non-local magic density is obtained via the Szeg\H{o} theorem~\cite{Calabrese2005evolution}
\begin{equation}
    M_2^{\mathrm{FNL}}(\infty) = \frac{\ell}{\pi}\int_0^\pi\! dk\; \mathfrak{m}_2\!\left(\sin^2\!\Delta\theta_k\right)\;.
    \label{eq:mnl_stationary}
\end{equation}
This result reveals that for any $\mu\neq \mu_0$, the quench generically produces extensive (volume-law) fermionic non-local magic; the expression is symmetric under $\mu \leftrightarrow \mu_0$.

\subsubsection{Quasiparticle picture}
Eq.~\eqref{eq:mnl_stationary} bears a striking structural similarity to the entanglement entropy of the same quench~\cite{Calabrese2005evolution}, $S=\frac{\ell}{\pi} \int_0^\pi\! dk\; \mathfrak{f}\bigl(\sin^2(\Delta\theta_k/2)\bigr)$, with $\mathfrak{f}(x):= -(1-x)\log_2(1-x)-x\log_2(x)$  the binary entropy: in terms of the stationary eigenvalues $\lambda_k^{\infty}=|\cos\Delta\theta_k|$, the entanglement weight $\mathfrak{f}\bigl((1+\lambda_k^{\infty})/2\bigr)$ is replaced by the magic weight $\mathfrak{m}_2\bigl((\lambda_k^{\infty})^2\bigr)=\mathfrak{m}_2(\sin^2\!\Delta\theta_k)$, using the symmetry $\mathfrak{m}_2(x)=\mathfrak{m}_2(1-x)$.
This parallel extends beyond the stationary state and allows us to derive, from first principles, a quasiparticle picture for the spreading of fermionic non-local magic.

The fermionic non-local magic can be written as a contour integral over the characteristic polynomial $D_\ell(\lambda) = \det(\lambda \mathbb{I}_{2\ell} - i\Gamma_A)$,
\begin{equation}
    M_2^{\mathrm{FNL}} = \frac{1}{4\pi i}\oint_\mathcal{C}d\lambda\;\mathfrak{m}_2(\lambda^2)\;\frac{d}{d\lambda}\ln D_\ell(\lambda)\;,\label{eq:circccc}
\end{equation}
where $\mathcal{C}$ encircles all eigenvalues counterclockwise; the factor $1/2$ compensates the double counting of the $\pm\lambda_n$ pairs.
Expanding $\ln D_\ell(\lambda)$ in powers of $\lambda^{-1}$ expresses $M_2^{\mathrm{FNL}}$ as a series in the even moments $\mathrm{Tr}(i\Gamma_A)^{2n}$.
The $2\ell \times 2\ell$ matrix $i\Gamma_A$ has a block-Toeplitz structure inherited from the symbol $g(k,t)$.
Exploiting this structure, one can show that its $2\ell$ eigenvalues coincide with those of two Hermitian matrices $W_\pm = \mathcal{H} \pm i\mathcal{T}$, where $\mathcal{T}_{jl} = \int \frac{dk}{2\pi}\,e^{-ik(j-l)}\,q(k,t)$ is Toeplitz and $\mathcal{H}_{jl} = \int \frac{dk}{2\pi}\,e^{-ik(j+l-\ell-1)}\,p(k,t)$ is Hankel. This enables the exact time dependence of $M_2^\mathrm{FNL}(t)$ following the strategy in Ref.~\cite{Fagotti2008}. 
The moments $\mathrm{Tr}(i\Gamma_A)^{2n} = \mathrm{Tr}\,W_+^{2n} + \mathrm{Tr}\,W_-^{2n}$ are then computed by expanding the matrix products into $2n$-fold momentum integrals.
As shown in Ref.~\cite{Fagotti2008}, these integrals can be evaluated exactly in the limit $\ell \gg 1$ via a multidimensional stationary phase approximation.
This procedure leads to the asymptotic expression\begin{equation}
\begin{split}
   \!\!\! \mathrm{Tr}\,(i\Gamma_A)^{2n}&\simeq \frac{2\ell}{\pi}\int_0^\pi\!\!\!dk\Bigl[1-\bigl(1-\cos^{2n}(\Delta\theta_k)\bigr)\\
   &\qquad\qquad\qquad\times\min\!\left(1, \frac{2|v_k|t}{\ell}\right)\Bigr],
\end{split}
\label{eq:moments_asymptotic}
\end{equation}\DIFaddend 
where $v_k = d\varepsilon_k/dk$ is the group velocity.

At $t=0$ the bracket equals one and all moments take the pure-state value $2\ell$, while the term proportional to $\min(1,2|v_k|t/\ell)$ interpolates ballistically towards  the stationary (GGE) moments $\frac{2\ell}{\pi}\int_0^\pi\cos^{2n}(\Delta\theta_k)\,dk$, capturing the propagation of quasiparticle pairs.
Inserting Eq.~\eqref{eq:moments_asymptotic} into  Eq.~\eqref{eq:circccc} and carrying out the contour integration \DIFaddend yields the quasiparticle formula
\begin{equation}
    M_2^{\mathrm{FNL}}(\ell,t) \simeq \int_0^\pi\!\frac{dk}{\pi}\,\min\!\bigl(2|v_k|\,t,\;\ell\bigr)\;\mathfrak{m}_2\!\bigl(\sin^2(\Delta\theta_k)\bigr),
    \label{eq:fagotti_mnl}
\end{equation}
valid in the scaling limit $\ell, t \gg 1$ with $t/\ell$ arbitrary.
Eq.~\eqref{eq:fagotti_mnl} predicts linear growth $\MNL \propto t$ for $t < t^* = \ell/(2v_{\max})$ and saturation to the extensive value $\MNL(\infty)/\ell = \pi^{-1}\int_0^\pi \mathfrak{m}_2(\sin^2\!\Delta\theta_k)\, dk$ for $t \gg t^*$.
Eq.~\eqref{eq:fagotti_mnl} has a transparent physical interpretation in terms of the quasiparticle picture~\cite{Calabrese2005evolution,Fagotti2008}: the quench creates pairs of entangled quasiparticles at each point, which propagate ballistically with mode-dependent velocities $\pm v_k$.
A pair at momentum $k$ contributes to $M_2^{\mathrm{FNL}}$ of the block $A$ only when one partner has entered $A$ while the other remains outside, hence the $\min(2|v_k|t,\ell)$ factor.
The weight $\mathfrak{m}_2(\sin^2\Delta\theta_k)$  quantifies the fermionic non-local magic carried by each entangled pair: it vanishes both when $\Delta\theta_k = 0$ (no quench, no entanglement production) and when $\Delta\theta_k = \pi/2$ (maximally entangled but stabilizer-like mode), and is maximal at intermediate values.

We benchmark Eq.~\eqref{eq:fagotti_mnl} at $\eta=1$ (Ising chain) by  evaluating numerically Eq.~\eqref{eq:mustang}. 
Our results are reported in Fig.~\ref{fig:dynamics} for the quench $\mu_0 = \infty \to \mu = 1$ (panel (a))  and $\mu_0 = 2 \to \mu = 1$ (panel (b)) at subsystem sizes $\ell = 40, 60, 100, 160$. 
The exact time evolution (markers), obtained by constructing the full block-Toeplitz $\Gamma_A(t)$ via Eq.~\eqref{eq:pq_symbol} and diagonalizing, is in excellent agreement with the semiclassical prediction (solid lines) from Eq.~\eqref{eq:fagotti_mnl}.
The dashed horizontal lines indicate the stationary values from Eq.~\eqref{eq:mnl_stationary}, which are approached at $t \simeq t^\star$.
In Fig.~\ref{fig:dynamics}(c) we benchmark the steady state $M_2^{\mathrm{FNL}}(\infty)$ with the analytical expression Eq.~\eqref{eq:mnl_stationary} for the GGE, again in perfect agreement. 

\subsection{Random Gaussian circuit dynamics}
Finally, we consider the dynamics of fermionic non-local magic under random Gaussian circuits~\cite{tirrito2025magicphasetransitionsmonitored}.
The circuit is composed of nearest-neighbor fermionic Gaussian unitaries arranged in a brick-wall geometry on $N$ sites.
A single layer $O_{\mathrm{layer}} = O_{\mathrm{odd}}\,O_{\mathrm{even}}$ consists of an even and an odd half-step, namely $O_{\mathrm{even}} = \prod_{j=1}^{N/2} R_{2j-1,2j}$ and $O_{\mathrm{odd}} = \prod_{j=1}^{N/2-1} R_{2j,2j+1}$, where each $R_{j,j+1} \in \mathrm{SO}(4)$ is a random orthogonal matrix drawn independently from the Haar measure, acting on the four Majorana operators $(\gamma_{2j-1}, \gamma_{2j}, \gamma_{2j+1}, \gamma_{2j+2})$ of sites $j$ and $j+1$.
The full circuit after $t$ layers is $O(t) = \prod_{s=1}^t O_{\mathrm{layer}}^{(s)} \in \mathrm{SO}(2N)$.
The initial state is the product state $|\mathbf{0}\rangle$ (vacuum of all fermionic modes), with covariance matrix $\Gamma_0$. 

The time-evolved state remains Gaussian at all times, with covariance matrix $\Gamma(t) = O(t)\,\Gamma_0\,O(t)^T$.
The fermionic non-local magic is therefore entirely determined by the restricted covariance matrix $\Gamma_A(t)$ and its eigenvalues $\pm\lambda_j(t)$.

In contrast to the ballistic quasiparticle spreading found in the integrable quench setting, operator spreading in local random free-fermion circuits is \emph{diffusive}~\cite{dias2021diffusiveoperatorspreadingrandom,swann}.
Since the fermionic non-local magic is a smooth spectral functional of $\Gamma_A(t)$, it inherits this diffusive transport.
At early times, expanding the single-mode contribution for small $\lambda$, $\mathfrak{m}_2(\lambda^2) \simeq \frac{\lambda^2}{\ln 2} + \mathcal{O}(\lambda^4)$, so that $M_2^{\mathrm{FNL}}(t) \simeq \frac{1}{2\ln 2}\mathrm{Tr}[(i\Gamma_A(t))^2]$, which is governed by the diffusive spreading of two-point correlations across the entanglement cut~\cite{dias2021diffusiveoperatorspreadingrandom}.
This argument yields
\begin{equation}
    M_2^{\mathrm{FNL}}(t) \propto \sqrt{t}\;, \qquad 1 \ll t \ll N^2\;.
\end{equation}
At late times, $M_2^{\mathrm{FNL}}$ saturates to the Haar-random fermionic Gaussian value, and the natural scaling variable is $\sqrt{t}/N$.
In Fig.~\ref{fig:dynamics}(d), we plot $\overline{M_2}^{\mathrm{FNL}}(t)/N$ versus $\sqrt{t}/N$ for $N = 20, 40, 80$, finding an excellent data collapse onto a single scaling function that interpolates between $\sqrt{t}$ growth and saturation at the Page-curve value $\overline{\mathsf{m}}_{2}^{\mathrm{FNL}}(r=1/2)=2-\log_2(2+\sqrt{3})$  (dashed line).
The saturation timescale $t\sim\mathcal{O}(N^2)$ is consistent with diffusive dynamics.

\section{Discussion}

\subsection{Summary}
We have proved an exact, closed-form expression for the fermionic non-local magic of Gaussian states at R\'enyi index $\RENYI=2$. A generalized expression for $\RENYI>2$ is verified numerically and shares the same qualitative features.
By reducing the minimization over Gaussian unitaries, the computation simplifies to a single eigenvalue decomposition of the reduced covariance matrix, completely bypassing the prohibitive variational minimization over the full Hilbert space. This establishes fermionic Gaussian states as a broad class of highly entangled many-body systems where the related non-local magic is exactly tractable, providing a useful proxy for the non-local magic over the full Unitary group.
Beyond the physical result, the proof introduces two ingredients that we expect to be of independent interest: a sharp bound controlling the collision probability of a single Slater determinant by its transpose overlap; and a Hadamard-type determinant inequality for positive matrices of rank at most two, proved using fermionic creation operators.

We applied this framework across three key physical settings.
For typical fermionic Gaussian states, we derived an exact Page-like curve via random matrix theory, a prediction independently confirmed by SYK$_2$ mid-spectrum eigenstates.
At equilibrium, we characterized the fermionic non-local magic of the XY spin chain ground state, obtaining exact results in both gapped and critical phases: a finite saturation value off criticality via corner transfer matrices, and logarithmic scaling at the critical point via the Fisher--Hartwig formula.
Out of equilibrium, we established a quasiparticle picture for the spreading of fermionic non-local magic after quantum quenches, finding linear growth up to a crossing time $t^\star= \ell/(2v_\mathrm{max})$  followed by saturation to a volume-law value.
In contrast, random Gaussian circuits exhibit diffusive growth, with non-local magic scaling as $\sqrt{t}$ before saturating at the Haar-random value on a timescale $t\sim N^2$.

\subsection{Outlook}
The analytical tractability established here opens several natural directions for future investigation.

\emph{Higher dimensional systems.}
While our results are formulated for arbitrary bipartitions of one-dimensional chains, fermionic Gaussian states in higher dimensions $D>1$ offer a rich arena where entanglement and magic may exhibit qualitatively different behavior. In two and three dimensions, a richer structure tied to the geometry of the Fermi surface is expected, and the entanglement entropy obeys an area law with multiplicative logarithmic corrections $\mathcal{O}(\ell^{D-1}\ln\ell)$~\cite{swingle2010entanglement}. 
Investigating how the scaling rate $\beta_\mathrm{FNL}$ emerges in such systems is a natural generalization.

\emph{Conformal field theory and integrable systems.}
A key question is to determine whether $\beta_{\mathrm{FNL}}$ captures universal features of the underlying conformal field theory, much as the central charge governs the scaling of entanglement entropy~\cite{Calabrese2004entanglement}.
Responding to this question requires extending our analysis to interacting integrable Hamiltonians, where Bethe-ansatz methods provide access to the entanglement spectrum.
On the dynamical side, it remains an open question whether the quasiparticle picture that successfully describes the spreading of entanglement after a quantum quench~\cite{Alba_2017,alba2019,Kudler_Flam_2021} can be adapted to account for the growth of non-local magic in these models.

\emph{Disordered systems.}
Our formula provides a natural framework to study non-local magic in the presence of quenched disorder. 
For example, in Anderson localization models~\cite{evers,sierant2020}, we expect deeply localized eigenstates to have a trivial structure, for which the fermionic non-local magic density $\mathfrak{m}_\RENYI(x)$ vanishes. 
Similarly, free fermionic models of strongly disordered systems~\cite{IGLOI_2005,Monthus_2018,ruggiero2022}, where rare Griffiths regions and anomalous entanglement scaling emerge, are another avenue of further study.

\emph{Topological systems.}
One and two-dimensional topological phases, such as the Kitaev chain and model~\cite{Kitaev2001chain,Kitaev_2006}, host boundary modes whose non-local magic content is expected to reflect the topological invariant.
In such systems, the interplay of non-local magic and quantum coherent errors in fermionic systems is another fascinating direction to pursue~\cite{beri,gskb-t5ql}.

\emph{Interplay with symmetry.}
When the system possesses a continuous symmetry, such as $\mathcal{U}(1)$ charge conservation, the covariance matrix acquires a block structure that could enable a symmetry-resolved decomposition of $M_2^{\mathrm{FNL}}$ into contributions from different charge sectors~\cite{Goldstein}.
This connects to the broader program of symmetry-resolved entanglement~\cite{Bianchi_2022}, coherence~\cite{aditya2026coherencedynamicsquantummanybody} and stabilizer entropies, and may reveal sector-dependent magic resources~\cite{iannotti2026nonstabilizernessu1symmetrychaotic, cepollaro2025stabilizerentropysubspaces, Esposito_2024}.
An intriguing related question is whether fermionic non-local magic exhibits quantum Mpemba-like effects~\cite{Ares_2023,Ares_2025,aditya2025mpembaeffectsquantumcomplexity}, where symmetry-broken initial conditions lead to anomalously fast relaxation of specific magic sectors.

\emph{Monitored systems and non-Hermitian dynamics.}
Hybrid systems combining unitary evolution with local measurements give rise to measurement-induced phase transitions in entanglement~\cite{FisherRandom2023,Li_2025,oshima}.
For free-fermionic setups, the dynamics remain within the Gaussian manifold, and our formula applies at each quantum trajectory.
Whether fermionic non-local magic exhibits its own measurement-induced transition, and how it relates to the entanglement transition, is an open question.
Similarly, non-Hermitian free fermionic Hamiltonians~\cite{Ashida_2020,kawabata2026nonhermitiandisorderedsystems} generate effective Gaussian dynamics with complex spectra.
Our framework makes amenable to analytic treatment the behavior of fermionic non-local magic in these settings~\cite{schirooooo,gallo,granet,lapierre2025entanglementtransitionsstructuredrandom}.

\begin{acknowledgments}
This work is dedicated to Ada. We are particularly indebted to P. Sierant for spotting a mistake in a prior version of this manuscript. 
We thank L. Leone, A. Russotto, P. Calabrese, L. Piroli, and J. Odavić for comments and collaborations on related topics.

\textbf{Fundings}
D.I. and B.M. would like to acknowledge the Les Houches Summer School 2025 on Exact Solvability and Quantum Information, which facilitated work on this subject.
D.I. acknowledges support from COST Action THEORY-CHALLENGES CA22113 under Short-Term Scientific Mission (STSM) grant E-COST-GRANT-CA22113-6901e967.  
B.M. and X.T. acknowledge support from DFG Emmy Noether Programme proposal "Digital Quantum Matter Out-of-Equilibrium" No. 560726973, DFG under Germany's Excellence Strategy – Cluster of Excellence Matter and Light for Quantum Computing (ML4Q) EXC 2004/2 – 390534769, and DFG Collaborative Research Center (CRC) 183 Project No. 277101999 - project B01. 
A.H. acknowledges support from  PNRR MUR project PE0000023-NQSTI and the PNRR MUR project CN00000013-ICSC.

\textbf{Data Availability.} The data generated in this study will be publicly shared at publication in Zenodo.



\textbf{Note Added.}
While preparing this manuscript, we discovered a closely related independent work by M. Collura, B. Beri, and E. Tirrito, appearing in the same arXiv posting~\cite{collura2026nonlocalnonstabilizernessfreefermion}. 
When overlapping, our results are consistent.
\end{acknowledgments}
\clearpage

\section{Appendix}
\label{sec:Appendix}

\subsection{Proof of Theorem~\ref{thm:global}: An Odyssey 
}
\label{app:proof}

\begin{figure*}[ht!]
\centering
\definecolor{oceanTop}{HTML}{EAF5F4}
\definecolor{oceanBottom}{HTML}{CDE5E7}
\definecolor{oceanEdge}{HTML}{78A7AB}
\definecolor{inkm}{HTML}{173E47}
\definecolor{mutedm}{HTML}{59757A}
\definecolor{sandm}{HTML}{F3DFAD}
\definecolor{sandLight}{HTML}{FFF1CE}
\definecolor{coastm}{HTML}{9A7138}
\definecolor{routem}{HTML}{A84A42}
\definecolor{accentm}{HTML}{187E79}
\begin{tikzpicture}[x=1cm,y=1cm,
  route/.style={draw=routem, line width=1.05pt, dash pattern=on 3.2pt off 2.6pt, line cap=round,
                -{Stealth[length=2.5mm,width=1.9mm]}},
  island/.style={draw=coastm, line width=0.85pt, top color=sandLight, bottom color=sandm},
  step badge/.style={circle, fill=inkm, draw=white, line width=0.6pt, minimum size=4.4mm,
                     inner sep=0pt, text=white, font=\fontsize{6.6}{6.6}\selectfont\bfseries},
  lab title/.style={text=inkm, font=\fontsize{8}{9}\selectfont\bfseries, align=center, inner sep=1pt},
  lab formula/.style={text=inkm, font=\fontsize{7.3}{8.2}\selectfont, align=center, inner sep=1pt},
  lab ref/.style={text=accentm, font=\fontsize{5.6}{6.4}\selectfont\bfseries, align=center, inner sep=1pt},
  small label/.style={text=mutedm, font=\fontsize{5.6}{6.3}\selectfont\bfseries}]
\path[rounded corners=10pt, top color=oceanTop, bottom color=oceanBottom,
      draw=oceanEdge, line width=0.8pt] (0,0) rectangle (17,5.2);
\draw[route] (0.5,2.35) to[out=50,in=230] (0.86,2.73);
\draw[route] (1.35,3.25) to[out=-30,in=150] (2.59,2.44);
\draw[route] (3.35,2.0) to[out=30,in=210] (4.69,2.76);
\draw[route] (5.45,3.2) to[out=-30,in=150] (6.54,2.49);
\draw[route] (7.3,2.05) to[out=25,in=205] (9.25,2.83);
\draw[route] (10.05,3.2) to[out=-30,in=150] (11.29,2.42);
\draw[route] (12.05,1.98) to[out=30,in=210] (13.29,2.81);
\draw[route] (14.05,3.25) to[out=-25,in=160] (15.21,2.92);
\foreach \wx/\wy/\ws in {2.35/4.42/0.85, 4.6/4.05/0.7, 11.6/4.42/0.8, 3.0/1.25/0.75,
                         10.5/1.75/0.7, 13.3/1.3/0.8, 5.9/1.7/0.65, 15.9/4.35/0.7}{
  \begin{scope}[shift={(\wx,\wy)},scale=\ws]
    \draw[white, draw opacity=0.78, line width=0.55pt, line cap=round]
      (-0.31,0) .. controls (-0.18,0.08) and (-0.08,0.08) .. (0.03,0)
      .. controls (0.14,-0.08) and (0.23,-0.08) .. (0.34,0);
    \draw[white, draw opacity=0.52, line width=0.5pt, line cap=round]
      (-0.22,-0.14) .. controls (-0.10,-0.07) and (0.00,-0.07) .. (0.10,-0.14)
      .. controls (0.19,-0.20) and (0.27,-0.20) .. (0.34,-0.15);
  \end{scope}}
\begin{scope}[shift={(0.82,0.88)},scale=0.5]
  \draw[inkm, draw opacity=0.55, line width=0.65pt] (0,0) circle (0.42);
  \fill[inkm, fill opacity=0.78] (0,0.72) -- (0.11,0.06) -- (0,0.16) -- (-0.11,0.06) -- cycle;
  \fill[inkm, fill opacity=0.48] (0,-0.72) -- (0.11,-0.06) -- (0,-0.16) -- (-0.11,-0.06) -- cycle;
  \fill[accentm, fill opacity=0.72] (0.72,0) -- (0.06,0.11) -- (0.16,0) -- (0.06,-0.11) -- cycle;
  \fill[accentm, fill opacity=0.44] (-0.72,0) -- (-0.06,0.11) -- (-0.16,0) -- (-0.06,-0.11) -- cycle;
  \node[text=inkm, font=\fontsize{4.6}{4.6}\selectfont\bfseries] at (0,0.95) {N};
\end{scope}
\begin{scope}[shift={(9.6,0.95)},scale=0.6,line cap=round,line join=round]
  \draw[white, draw opacity=0.78, line width=0.55pt]
    (-0.86,-0.51) .. controls (-0.54,-0.42) and (-0.25,-0.44) .. (0.04,-0.52)
    .. controls (0.29,-0.59) and (0.59,-0.57) .. (0.84,-0.48);
  \draw[oceanEdge, draw opacity=0.55, line width=0.45pt]
    (-0.66,-0.65) .. controls (-0.37,-0.58) and (-0.17,-0.59) .. (0.09,-0.66)
    .. controls (0.30,-0.72) and (0.48,-0.70) .. (0.66,-0.65);
  \draw[inkm, draw opacity=0.58, line width=0.38pt]
    (-0.04,0.79) -- (-0.59,-0.18) (-0.04,0.79) -- (0.65,-0.18);
  \draw[inkm, line width=0.78pt] (-0.02,-0.24) -- (-0.04,0.82);
  \path[fill=sandLight, fill opacity=0.94, draw=inkm, line width=0.52pt]
    (0.02,0.71) .. controls (0.28,0.60) and (0.52,0.27) .. (0.59,-0.11)
    .. controls (0.36,-0.06) and (0.18,-0.07) .. (0.02,-0.12) -- cycle;
  \path[fill=accentm!68!white, fill opacity=0.94, draw=inkm, line width=0.52pt]
    (-0.09,0.58) .. controls (-0.34,0.40) and (-0.50,0.13) .. (-0.54,-0.12)
    .. controls (-0.36,-0.08) and (-0.21,-0.08) .. (-0.09,-0.12) -- cycle;
  \draw[white, draw opacity=0.72, line width=0.35pt]
    (0.09,0.54) .. controls (0.29,0.43) and (0.42,0.20) .. (0.47,-0.02);
  \draw[coastm, line width=0.52pt] (-0.07,-0.14) -- (0.61,-0.14);
  \path[fill=routem] (-0.04,0.80) -- (0.24,0.72) -- (-0.04,0.64) -- cycle;
  \path[fill=inkm, draw=inkm, line width=0.55pt]
    (-0.67,-0.22) -- (0.69,-0.22)
    .. controls (0.47,-0.54) and (0.15,-0.62) .. (-0.18,-0.55)
    .. controls (-0.40,-0.50) and (-0.56,-0.39) .. (-0.67,-0.22) -- cycle;
  \draw[sandLight, line width=0.62pt] (-0.57,-0.25) -- (0.58,-0.25);
  \fill[sandLight] (-0.25,-0.38) circle (0.035);
  \fill[sandLight] (0.02,-0.41) circle (0.035);
  \fill[sandLight] (0.29,-0.38) circle (0.035);
\end{scope}
\foreach \px/\py/\sc/\num/\coast in {1.35/3.25/0.42/1/1, 3.35/2.0/0.42/2/2, 5.45/3.2/0.42/3/3,
                                     7.3/2.05/0.42/4/4, 10.05/3.2/0.42/5/1, 12.05/1.98/0.42/6/2,
                                     14.05/3.25/0.42/7/3, 16.15/2.58/0.5/8/4}{
  \begin{scope}[shift={(\px,\py)}]
    \begin{scope}[scale=\sc]
      \ifnum\coast=1
        \path[island] plot[smooth cycle,tension=0.72] coordinates {
          (-1.67,0.05) (-1.49,0.57) (-0.83,0.83) (0.02,0.88) (0.86,0.77) (1.50,0.45)
          (1.67,-0.05) (1.47,-0.55) (0.70,-0.79) (-0.19,-0.83) (-1.08,-0.69) (-1.59,-0.34)};
      \fi
      \ifnum\coast=2
        \path[island] plot[smooth cycle,tension=0.72] coordinates {
          (-1.66,0.18) (-1.34,0.64) (-0.58,0.84) (0.32,0.85) (1.13,0.67) (1.61,0.25)
          (1.63,-0.27) (1.16,-0.68) (0.35,-0.82) (-0.57,-0.78) (-1.35,-0.53) (-1.68,-0.12)};
      \fi
      \ifnum\coast=3
        \path[island] plot[smooth cycle,tension=0.72] coordinates {
          (-1.65,0.00) (-1.46,0.48) (-0.92,0.76) (-0.11,0.89) (0.74,0.83) (1.45,0.55)
          (1.67,0.08) (1.52,-0.44) (0.86,-0.73) (0.02,-0.84) (-0.84,-0.76) (-1.49,-0.48)};
      \fi
      \ifnum\coast=4
        \path[island] plot[smooth cycle,tension=0.72] coordinates {
          (-1.67,0.11) (-1.30,0.61) (-0.54,0.84) (0.36,0.82) (1.17,0.71) (1.59,0.30)
          (1.64,-0.20) (1.31,-0.61) (0.56,-0.81) (-0.34,-0.82) (-1.16,-0.61) (-1.61,-0.29)};
      \fi
      \path[draw=white, draw opacity=0.45, line width=0.45pt]
        plot[smooth cycle,tension=0.72] coordinates {
          (-1.55,0.08) (-1.28,0.50) (-0.61,0.71) (0.21,0.73) (0.95,0.61) (1.38,0.28)
          (1.42,-0.17) (1.08,-0.54) (0.35,-0.68) (-0.47,-0.66) (-1.18,-0.49) (-1.50,-0.20)};
    \end{scope}
    \node[step badge] at (0,0) {\num};
  \end{scope}}
\fill[accentm] (0.5,2.35) circle (2.2pt);
\fill[routem] (16.62,2.44) circle (2.0pt);
\foreach \ax/\ay/\bx/\by in {1.35/3.68/1.45/4.1, 5.45/3.63/5.55/4.1, 10.05/3.63/10.1/4.1, 14.05/3.68/14.1/4.1,
                             3.35/1.55/3.35/1.18, 7.3/1.6/7.2/1.18, 12.05/1.53/12.1/1.18, 16.15/2.05/15.6/1.18}
  \draw[inkm!45, densely dotted, semithick] (\ax,\ay) -- (\bx,\by);
\node[lab title] at (1.5,4.72) {collision probability};
\node[lab formula] at (1.5,4.44) {$\zeta_2=d\sum_S p_S^2$};
\node[lab ref] at (1.5,4.2) {LEMMA~\ref{lem:DPP}};
\node[lab title] at (3.35,0.9) {the Slater bound};
\node[lab formula] at (3.35,0.62) {$\mathcal{C}(Q)\le 2^{-k}\mathcal{D}(\Sigma)$};
\node[lab ref] at (3.35,0.36) {LEMMA~\ref{lem:Slater-bound}};
\node[lab title] at (5.6,4.72) {BCS parametrization};
\node[lab formula] at (5.6,4.44) {$\zeta_2\propto\mathsf{F}(Z)$,\; $O_A=O_B$};
\node[lab ref] at (5.6,4.2) {LEMMAS~\ref{lem:all-minors}--\ref{lem:two-to-one}};
\node[lab title] at (7.2,0.9) {principal-minor residual};
\node[lab formula] at (7.2,0.62) {$\mathsf{F}(Z)\le\mathsf{R}(H)$,\; $H=Z^2$};
\node[lab ref] at (7.2,0.36) {PROP.~\ref{prop:residual}};
\node[lab title] at (10.1,4.72) {two sign averages};
\node[lab formula] at (10.1,4.44) {$\mathsf{R}(H)=\Delta_\Omega^{-2}\sum_S\det(\mathcal{A}_S)^2$};
\node[lab ref] at (10.1,4.2) {LEMMA~\ref{lem:PMT}, PROP.~\ref{prop:Cayley}};
\node[lab title] at (12.1,0.9) {down to one unitary};
\node[lab formula] at (12.1,0.62) {$\mathsf{R}(H)\propto\mathsf{L}(U)$};
\node[lab title] at (14.1,4.72) {the rank-two bound};
\node[lab formula] at (14.1,4.44) {$\mathsf{L}(U)^2\le2^{2\ell}\det(\mathbb{I}+T^\dagger T)$};
\node[lab ref] at (14.1,4.2) {THMS.~\ref{thm:MD2}--\ref{thm:UL}};
\node[lab title] at (15.45,0.9) {optimum at $Z_0$};
\node[lab formula] at (15.45,0.62) {equality $\Rightarrow$ Thm.~\ref{thm:global}};
\end{tikzpicture}
\caption{\textbf{The route of the proof of Theorem~\ref{thm:global}: an odyssey in eight steps.} The dashed route visits eight islands, one per step of the argument and one per subsection of this Appendix; each island carries its step number, and the matching label at the border states what the step does and its key relation. The essence of each step is spelled out in the text below.}
\label{fig:odyssey}
\end{figure*}

In this Appendix, we prove Theorem~\ref{thm:global}, i.e.\ the closed form of Eq.~\eqref{eq:analytic_NL} for the fermionic non-local magic at $\RENYI=2$. The proof is a chain of eight steps, cf. Fig.~\ref{fig:odyssey}, one per subsection below.
(1)~\emph{Collision probability} (Sec.~\ref{app:proof:dpp}):  Wick's theorem turns the stabilizer purity $\zeta_2$ into the collision probability of a classical distribution $p_S$  built from the covariance matrix, $\zeta_2=d\sum_Sp_S^2$ (Lemma~\ref{lem:DPP})~\cite{collura2026nonstabilizernessfermionicgaussianstates}.
(2)~\emph{The Slater bound}  (Sec.~\ref{app:proof:slater}): we bound the collision probability of a single Slater determinant by its transpose overlap, $\mathcal{C}(Q)\le2^{-k}\mathcal{D}(\Sigma)$ (Lemma~\ref{lem:Slater-bound}) --- the first key ingredient, used in Step~4.
(3)~\emph{BCS parametrization} (Sec.~\ref{app:proof:thouless}): all the dependence of $\zeta_2$ on the local rotations is packaged into the single matrix $Z=O_AZ_0O_B^{\trans}$, and $\zeta_2$ becomes, up to a constant along the orbit, the sum $\mathsf{F}(Z)$ of the fourth powers of the moduli of all the minors of $Z$ (Lemma~\ref{lem:all-minors})~\cite{Mbeng2024,surace}; a Cauchy--Schwarz argument then reduces the two independent rotations to a single one, $O_A=O_B$  (Lemma~\ref{lem:two-to-one}).
(4)~\emph{Principal-minor residual} (Sec.~\ref{app:proof:residual}): the Slater bound of Step~2  converts the all-minors functional into the residual  $\mathsf{R}(H)$ of $H=Z^2$, $\mathsf{F}(Z)\le\mathsf{R}(H)$, with equality at the canonical point (Proposition~\ref{prop:residual}).
(5)~\emph{Two sign averages} (Sec.~\ref{app:proof:cayley}): averages over independent random signs, combined with a Cayley transform, rewrite the residual exactly as $\mathsf{R}(H)=\Delta_\Omega^{-2}\sum_S\det(\mathcal{A}_S)^2$, a sum of squared principal minors of a single Hermitian matrix $\mathcal{A}$ (Lemma~\ref{lem:PMT}, Proposition~\ref{prop:Cayley}).
(6)~\emph{Down to one unitary} (Sec.~\ref{app:proof:graph}): interpreting this sum as a probability, a change of basis within pairs of coordinates collapses it to the functional $\mathsf{L}(U)=\sum_S|\det U_S|^2$ of a single unitary matrix, times a prefactor constant along the orbit, see also Refs.~\cite{collura2026nonstabilizernessfermionicgaussianstates,sfeo}.
(7)~\emph{The rank-two bound} (Sec.~\ref{app:proof:MD2}): a determinant inequality for positive matrices of rank at most two --- the second key ingredient --- bounds $\mathsf{L}(U)$ through the diagonal of $U$, hence by the orbit invariant $\det(\mathbb{I}_{2\ell}+T^\dagger T)$ with $T=(U+U^{\trans})/2$ (Theorems~\ref{thm:MD2}--\ref{thm:UL}).
(8)~\emph{Conclusion} (Sec.~\ref{app:proof:completion}): the canonical BCS form saturates every inequality of the chain; a continuity argument covers the endpoints $\lambda_j=1$, and padding with ancillas removes the balanced-cut assumption, proving Theorem~\ref{thm:global}.

\subsubsection{The stabilizer purity as a collision probability}
\label{app:proof:dpp}

We use the conventions of the main text, recalled here for convenience: $|\Psi\rangle$ is a pure fermionic Gaussian state on $N$ modes, $d=2^N$, $A$ is without loss of generality the smaller subsystem, $\ell\le N/2$,  and $\Gamma_{mn}=i\langle\Psi|[\gamma_m,\gamma_n]|\Psi\rangle/2$, with purity condition equivalent to
\begin{equation}
    \Gamma^{\trans}=-\Gamma\;,\qquad \Gamma^2=-\mathbb{I}_{2N}\;.
    \label{eq:pure-covariance}
\end{equation}
Pauli strings are in one-to-one correspondence, up to irrelevant phases, with Majorana monomials $\gamma_I=\prod_{m\in I}\gamma_m$, $I\subseteq[2N]$ (we denote throughout $[m]:=\{1,\dots,m\}$). Wick's theorem gives $|\langle\Psi|\gamma_I|\Psi\rangle|^2=\mathrm{det}(\Gamma_I)$ for even $|I|$ and zero otherwise, where $\Gamma_I$ is the principal submatrix of $\Gamma$ on $I$. Since $\det(\Gamma_I)=\mathrm{Pf}(\Gamma_I)^2$, the stabilizer purity is the \emph{Wick determinant sum}
\begin{equation}
    \zeta_2(\Gamma)=d^{-1}\sum_{I\subseteq[2N]}\det(\Gamma_I)^2\;.
    \label{eq:zeta-M}
\end{equation}
It is convenient to introduce
\begin{equation}
    q_I(\Gamma)=d^{-1}\det\Gamma_I=d^{-1}\mathrm{Pf}(\Gamma_I)^2\ge 0\;.
    \label{eq:q-definition}
\end{equation}
Odd-dimensional principal minors vanish, and the principal-minor identity $\sum_I\det\Gamma_I=\det(\mathbb{I}+\Gamma)=d$ shows that $q(\Gamma)$ is a probability distribution, with
\begin{equation}
    \zeta_2(\Gamma)=d\sum_{I\subseteq[2N]} q_I(\Gamma)^2\;.
    \label{eq:q-collision}
\end{equation}
Local Gaussian unitaries act on covariance matrices as
\begin{equation}
    \Gamma\longmapsto (O_A\oplus O_B)\,\Gamma\,(O_A\oplus O_B)^{\trans}\;,
    \label{eq:local-action}
\end{equation}
with $O_A\in\mathrm{O}(2\ell)$ and $O_B\in\mathrm{O}(2(N-\ell))$, so that maximizing $\zeta_2$ over the local Gaussian orbit is exactly the problem defined by Eq.~\eqref{eq:defFNLmagic}.
 We write $\mathrm{spec}(i\Gamma_A)=\{\pm\lambda_1,\dots,\pm\lambda_\ell\}$ with $0\le\lambda_j\le 1$ and set $\nu_j=\sqrt{1-\lambda_j^2}$.
For completeness, we recall the covariance version of the modewise normal form of Eq.~\eqref{eq:can_form}, cf.~\cite{BoteroReznik2004}, which also fixes the orientation conventions used below. 

\begin{proposition}[Local covariance normal form]
\label{prop:covariance-normal-form}
There exist local real orthogonal bases in which $\Gamma$ is the direct sum, up to the ordering that places all $A$ coordinates before all $B$ coordinates, of the four-dimensional blocks
\begin{equation}
    \Gamma^{\mathrm{can}}_j=\begin{pmatrix}\lambda_jJ& \nu_j\mathbb{I}_2\\ -\nu_j\mathbb{I}_2&-\lambda_jJ\end{pmatrix},
    \quad J=\begin{pmatrix}0&1\\-1&0\end{pmatrix},
    \label{eq:canonical-covariance-block}
\end{equation}
together with $N-2\ell$ pure $2\times 2$ blocks on $B$.
\end{proposition}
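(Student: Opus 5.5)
We prove Proposition~\ref{prop:covariance-normal-form} by a real-orthogonal analogue of the singular value decomposition, applied to the off-diagonal block of $\Gamma$. Write, in an ordering with all $A$ coordinates first,
\begin{equation*}
\Gamma=\begin{pmatrix}\Gamma_A & C\\ -C^{\trans} & \Gamma_B\end{pmatrix},
\end{equation*}
where $C$ is a real $2\ell\times 2(N-\ell)$ matrix. The purity condition Eq.~\eqref{eq:pure-covariance}, expanded blockwise, gives three relations:
\begin{equation*}
\Gamma_A^2-CC^{\trans}=-\mathbb{I},\qquad \Gamma_AC+C\Gamma_B=0,\qquad \Gamma_B^2-C^{\trans}C=-\mathbb{I}.
\end{equation*}
The whole proof consists of choosing $O_A$ and $O_B$, as in Eq.~\eqref{eq:local-action}, that simplify these blocks one at a time. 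Each choice is then read back through the three relations.

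\emph{Step 1 (diagonalize $\Gamma_A$).} $\Gamma_A$ is real antisymmetric. By the real normal form, some $O_A\in\mathrm{O}(2\ell)$ brings it to $\bigoplus_j\lambda_jJ$ with $\lambda_j\ge0$. Any sign that would make a $\lambda_j$ negative is removed by a reflection inside that $2\times2$ block, which is allowed because $O_A$ ranges over the full $\mathrm{O}(2\ell)$. The first relation then gives
\begin{equation*}
CC^{\trans}=\mathbb{I}+\Gamma_A^2=\bigoplus_j\nu_j^2\,\mathbb{I}_2 .
\end{equation*}
So the rows of $C$ are mutually orthogonal, and the two rows belonging to mode $j$ both have norm $\nu_j$.

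\emph{Step 2 (align $C$ using $O_B$).} Suppose first that every $\nu_j>0$. The $2\ell$ rescaled rows $\nu_j^{-1}c_r$ then form an orthonormal set in $\mathbb{R}^{2(N-\ell)}$. Since $2\ell\le 2(N-\ell)$, we can complete this set to an orthonormal basis and take it as the rows of $O_B$. After this change of basis,
\begin{equation*}
C=\bigl(\textstyle\bigoplus_j\nu_j\mathbb{I}_2\;\big|\;0\bigr).
\end{equation*}
The third relation now says $\Gamma_B^2=-\mathbb{I}+C^{\trans}C$, which equals $-\mathbb{I}$ on the complement of the first $2\ell$ coordinates. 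Consider the entries of $\Gamma_B$ that couple the first $2\ell$ coordinates of $B$ to the remaining ones. The second relation, restricted to the columns of those remaining coordinates, forces these entries to vanish. On the first $2\ell$ coordinates the second relation reads $\nu\,(\Gamma_B)_{11}=-\Gamma_A\,\nu$, so $(\Gamma_B)_{11}=\bigoplus_j(-\lambda_jJ)$. The leftover block of $\Gamma_B$ is real antisymmetric and squares to $-\mathbb{I}$. A further orthogonal rotation on those coordinates, which leaves $C$ unchanged, brings it to $\bigoplus J$, i.e.\ $N-2\ell$ pure $2\times2$ blocks. Permuting coordinates so that each mode $j$ of $A$ sits next to its partner pair in $B$ produces exactly the blocks $\Gamma^{\mathrm{can}}_j$ of Eq.~\eqref{eq:canonical-covariance-block}.

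\emph{Step 3 (degenerate cases).} This is the only step needing care, and I expect it to be the main obstacle. There are two sub-cases.
\begin{itemize}
\item If some $\lambda_j=1$, then $\nu_j=0$ and the corresponding rows of $C$ vanish, so Step~2 does not fix those partner directions in $B$. Here we use only the nonzero rows to build part of $O_B$. We then assign to each such mode $j$ a pair of unused $B$ coordinates on which $\Gamma_B$ is a pure block. There is room for this because $2\ell\le 2(N-\ell)$. Using a reflection in $\mathrm{O}(2(N-\ell))$, we orient that pure block as $-J$, so that the four-dimensional block reads $\lambda_jJ\oplus(-\lambda_jJ)$ with $\nu_j=0$, still of the form of Eq.~\eqref{eq:canonical-covariance-block}.
\item Degenerate values of $\lambda_j$ cause no problem. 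Step~2 never required the $\lambda_j$ to be distinct, since it worked directly with the rows of $C$ rather than with eigenvectors.
\end{itemize}
Finally, we check that no hidden sign ambiguity between $\nu_j$ and $-\nu_j$ remains. Such a sign would be absorbed into the orientation of the $B$ pair, again because $O_B$ ranges over the full orthogonal group.
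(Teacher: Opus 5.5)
Your proof is correct and follows essentially the same route as the paper. Both arguments block-diagonalize $\Gamma_{AA}$ and use $\Gamma_{AB}\Gamma_{AB}^{\mathsf T}=\bigoplus_j\nu_j^2\mathbb{I}_2$ to promote the normalized rows of the off-diagonal block to part of an orthonormal basis of $B$. Both then read off $-\lambda_jJ$ on that image from the cross relation, and decompose the invariant pure complement into $\pm J$ blocks paired with the $\nu_j=0$ modes. Working directly with the mutually orthogonal rows of $C$ disposes of degenerate $\lambda_j$ a bit more cleanly than the paper's polar-isometry remark. Your closing comment about a $\pm\nu_j$ sign is unnecessary, since the construction produces $+\nu_j$ directly.
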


\begin{proof}
For self-containment we briefly recall the proof of the result, cf. Ref.~\cite{BoteroReznik2004} for details.
Write $\Gamma$ in local blocks as $\Gamma=\left(\begin{smallmatrix} \Gamma_{AA}&\Gamma_{AB}\\ -\Gamma_{AB}^{\trans}&\Gamma_{BB}\end{smallmatrix}\right)$. Purity, Eq.~\eqref{eq:pure-covariance}, is equivalent to
\begin{equation}\begin{split}\Gamma_{AB}\Gamma_{AB}^{\trans}&=\mathbb{I}_{2\ell}+\Gamma_{AA}^2,\\ \Gamma_{AB}^{\trans}\Gamma_{AB}&=\mathbb{I}_{2(N-\ell)}+\Gamma_{BB}^2,\\ \Gamma_{AA}\Gamma_{AB}+\Gamma_{AB}\Gamma_{BB}&=0\;.
    \label{eq:purity-block-relations}
\end{split}\end{equation}

A real orthogonal change of basis puts the antisymmetric matrix $\Gamma_{AA}$  into $2\times 2$ blocks $\lambda_jJ$, with $\lambda_j\in[0,1]$, the interval following from the first identity. On the corresponding two-dimensional block $\Gamma_{AB}\Gamma_{AB}^{\trans}=\nu_j^2\mathbb{I}_2$. If $\nu_j>0$, the map $\Gamma_{AB}^{\trans}/\nu_j$ is an isometry from that block into $B$; choosing an orthonormal basis of its image makes the corresponding block of $\Gamma_{AB}$ equal to $\nu_j\mathbb{I}_2$, and the last identity in Eq.~\eqref{eq:purity-block-relations} forces the restriction of $\Gamma_{BB}$ to that image to be $-\lambda_jJ$. This construction works simultaneously on degenerate spectral subspaces by choosing the polar isometry first and a compatible orthonormal basis afterwards. If $\nu_j=0$, the associated rows of $\Gamma_{AB}$ vanish, so the corresponding $A$ mode is already pure and decoupled; the orthogonal complement of the image of $\Gamma_{AB}^{\trans}$  in $B$ is invariant under $\Gamma_{BB}$ and obeys $\Gamma_{BB}^2=-\mathbb{I}$, hence decomposes into pure $\pm J$ blocks. Pairing as many of these as needed with the decoupled pure $A$ blocks gives Eq.~\eqref{eq:canonical-covariance-block} at $\lambda_j=1$, the remaining blocks being the $B$ spectators.
\end{proof}

\begin{remark}
The modewise decomposition in Ref.~\cite{BoteroReznik2004} uses a slightly different convention for the Majorana
coordinates, in which the covariance matrix of one entangled pair is
\begin{equation}
 \widetilde{\Gamma}^{\mathrm{can}}_j
 =
 \begin{pmatrix}
  0&-\lambda_j&0&\nu_j\\
  \lambda_j&0&\nu_j&0\\
  0&-\nu_j&0&-\lambda_j\\
  -\nu_j&0&\lambda_j&0
 \end{pmatrix},
 \qquad
 \lambda_j^2+\nu_j^2=1.
 \label{eq:botero-canonical-covariance-block}
\end{equation}
This is locally \textit{orthogonally equivalent} to
\eqref{eq:canonical-covariance-block}. Indeed, let
\begin{equation}
 X=\begin{pmatrix}0&1\\1&0\end{pmatrix},
 \qquad
 S_j=X\oplus \mathbb{I}_2.
\end{equation}
Since \(X^{\mathsf T}=X=X^{-1}\) and \(XJX^{\mathsf T}=-J\), one has
\begin{equation}
 \widetilde{\Gamma}^{\mathrm{can}}_j
 =S_j\Gamma^{\mathrm{can}}_jS_j^{\mathsf T}.
 \label{eq:botero-our-convention}
\end{equation}
Thus the two forms differ only by exchanging the two Majorana coordinates
of the \(A\) mode of each entangled pair.

For the complete covariance matrix, let
\(S_A=\bigoplus_{j=1}^{\ell}X\) on the paired \(A\) modes and let \(S_B\)
be the identity on \(B\), including the spectator modes. Then
\begin{equation}
\begin{aligned}
 &(O_A\oplus O_B)\widetilde{\Gamma}^{\mathrm{can}}
      (O_A\oplus O_B)^{\mathsf T} \\
 &\quad =
 \bigl((O_AS_A)\oplus O_B\bigr)\Gamma^{\mathrm{can}}
 \bigl((O_AS_A)\oplus O_B\bigr)^{\mathsf T}.
\end{aligned}
\label{eq:canonical-orbit-convention-equivalence}
\end{equation}
Right multiplication by \(S_A\) is a bijection of the local orthogonal
group. Consequently, the two conventions give the same result concerning Theorem~\ref{thm:global} (same optimized stabilizer entropy). 
In the following, we use Eq.~\eqref{eq:canonical-covariance-block} because it gives more convenient and transparent formulas.
\end{remark}

A direct four-Majorana computation on Eq.~\eqref{eq:canonical-covariance-block} gives the six potentially nonzero entries $q^{(j)}_{\mathrm{can}}=\tfrac14(1,1,\lambda_j^2,\lambda_j^2,\nu_j^2,\nu_j^2)$, while a pure unpaired, or spectator, mode --- one that lives entirely in $A$ or $B$ --- contributes $(1,1)/2$. Equation~\eqref{eq:q-collision} then yields the canonical value
\begin{equation}
    \zeta_{2,\mathrm{can}}=\prod_{j=1}^{\ell}g_j\;,\qquad g_j=1-\lambda_j^2+\lambda_j^4\;,
    \label{eq:canonical-value}
\end{equation}
which is the two-qubit result quoted in the main text. Theorem~\ref{thm:global} therefore amounts to the statement that Eq.~\eqref{eq:canonical-value} is the \emph{maximum} of $\zeta_2$ on the orbit~\eqref{eq:local-action}.

The starting point of the proof is the following reformulation.

\begin{lemma}[Projection-DPP identity]
\label{lem:DPP}
Let $K=(\mathbb{I}+i\Gamma)/2$. Then $K$ is a Hermitian rank-$N$ projector, the numbers $p_S=\det K_S$ with $|S|=N$ form the projection determinantal point process (DPP)~\cite{collura2026nonstabilizernessfermionicgaussianstates} associated with $K$, and
\begin{equation}
    \zeta_2(\Gamma)=d\sum_{|S|=N}p_S^2\;.
    \label{eq:DPP-collision}
\end{equation}
\end{lemma}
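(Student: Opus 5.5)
The plan is to verify the projector structure and the normalization of $p$ directly, and then to prove the collision identity by showing that the two collision sums coincide. Comparing with Eq.~\eqref{eq:q-collision}, the statement amounts to
\begin{equation}
\sum_{I\subseteq[2N]} q_I(\Gamma)^2=\sum_{|S|=N}p_S^2,
\qquad\text{i.e.}\qquad
\sum_{I}\mathrm{Pf}(\Gamma_I)^4=d^2\sum_{|S|=N}(\det K_S)^2 .
\label{eq:plan-target}
\end{equation}

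\emph{Step 1: $K$ is a projector and $p$ is a distribution.} By Eq.~\eqref{eq:pure-covariance}, $i\Gamma$ is Hermitian, since $(i\Gamma)^\dagger=-i\Gamma^{\trans}=i\Gamma$. It also satisfies $(i\Gamma)^2=-\Gamma^2=\mathbb{I}$, so $K=(\mathbb{I}+i\Gamma)/2$ is a Hermitian projector. Because $\Gamma$ is real antisymmetric, $\mathrm{Tr}\,i\Gamma=0$, which gives $\mathrm{rank}\,K=\mathrm{Tr}\,K=N$.

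Write $K=VV^\dagger$ with $V$ a $2N\times N$ isometry. Cauchy--Binet then gives $p_S=\det K_S=|\det V_{S,\cdot}|^2\ge0$ and $\sum_{|S|=N}p_S=\det(V^\dagger V)=1$. This is the standard projection DPP.

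It is also useful to note that $\bar K=\mathbb{I}-K$, because $\Gamma$ is real. Hence $W=[V,\bar V]$ is a $2N\times 2N$ unitary and
\begin{equation}
\Gamma=-iW\,\mathrm{diag}(\mathbb{I}_N,-\mathbb{I}_N)\,W^\dagger .
\end{equation}

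\emph{Step 2: Grassmann representation of the left-hand side.} I would express $\sum_I\mathrm{Pf}(\Gamma_I)^4$ as a Berezin integral over four replicas $\theta^{(1)},\dots,\theta^{(4)}$ of $2N$ Grassmann variables. Each replica carries the weight $\exp(\tfrac12\theta^{(a)\trans}\Gamma\,\theta^{(a)})$, and the replicas are coupled site by site by the factor $\prod_{m=1}^{2N}\bigl(1+\theta^{(1)}_m\theta^{(2)}_m\theta^{(3)}_m\theta^{(4)}_m\bigr)$. Expanding the coupling selects a common index set $I$ in all four replicas, and each replica then contributes $\mathrm{Pf}(\Gamma_I)$. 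Signs should cancel because the four factors appear symmetrically.

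Next, rotate the replica index at each site by the orthogonal $4\times4$ matrix with entries $\pm\tfrac12$ (the Hadamard-type matrix used for Gaussian stabilizer entropies). Since this acts identically on every site, the quadratic weight $\sum_a\theta^{(a)\trans}\Gamma\theta^{(a)}$ is invariant. The quartic coupling $\theta^{(1)}\theta^{(2)}\theta^{(3)}\theta^{(4)}$ maps to itself up to sign, but $\mathrm{SO}(4)$ decomposes into left and right $\mathrm{SU}(2)$ factors. The aim is to recombine the replicas into two complex pairs $\eta=\theta^{(1)}+i\theta^{(2)}$ and $\xi=\theta^{(3)}+i\theta^{(4)}$. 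In these variables the weight becomes $\exp(\bar\eta^{\trans}\Gamma\eta+\bar\xi^{\trans}\Gamma\xi)$, and the coupling becomes the local projection onto a density pattern on each site. Integrating then produces principal minors of $K$ and $\mathbb{I}-K=\bar K$. I also plan to use the complementary-minor identity for the projector $K$ together with its complement $\bar K$, namely $\det K_S=\det\bar K_{S^c}$ for $|S|=N$ (Jacobi's identity for the unitary $W$). This should collapse the result to $d^2\sum_{|S|=N}(\det K_S)^2$, the factor $d^2=2^{2N}$ arising from the normalization of the $\pm\tfrac12$ rotation ($2^{-1}$ per Majorana index, squared).

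\emph{Fallback for Step 2.} If the replica route becomes sign-heavy, the same identity should be reachable by Cauchy--Binet from the unitary $W$ above. This gives $\det\Gamma_I=\sum_{|T|=|I|}(-1)^{|T\cap\{N+1,\dots,2N\}|}|\det W_{I,T}|^2$. Squaring and summing over $I$ would then use orthogonality of the minors of a unitary, $\sum_I \det W_{I,T}\overline{\det W_{I,T'}}=\delta_{TT'}$, after rewriting $\det\Gamma_I=\mathrm{Pf}(\Gamma_I)^2$ as a linear expression in minors of $W$.

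\emph{Main obstacle.} Step 1 is routine. The main obstacle is Step 2: tracking the Berezin signs and the $2^{N}$ normalizations so that the quartic replica coupling is mapped exactly onto the product of $K$- and $\bar K$-minors, with no cross terms. That is where the identity $\sum_I q_I^2=\sum_S p_S^2$ is actually decided.
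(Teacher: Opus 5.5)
Your Step 1 is correct and matches the paper: $K$ is Hermitian and idempotent, $\mathrm{Tr}\,K=N$, and Cauchy--Binet gives $p_S\ge0$ with $\sum_S p_S=1$.

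\textbf{The gap is Step 2.} That step carries the whole content of the lemma, and neither route you sketch reaches $\sum_I q_I^2=\sum_S p_S^2$.

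\emph{Replica route.} The orthogonal $\pm\tfrac12$ rotation is a symmetry of the entire integrand. You note yourself that it preserves $\sum_a\theta^{(a)\trans}\Gamma\theta^{(a)}$ and maps the on-site coupling to itself up to $\det R$, since that coupling is the top exterior power of the four on-site variables. So the rotation cannot turn the problem into anything new. Recombining into $\eta,\xi$ is likewise only a change of variables. The Gaussian weight still has matrix $\Gamma$, and the coupling becomes $\prod_m(1-\tfrac14\bar\eta_m\eta_m\bar\xi_m\xi_m)$. Berezin integration then returns principal minors of $\Gamma$, i.e.\ you get $\sum_I\det(\Gamma_I)^2$ back, never minors of $K$. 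The identity part of $K=(\mathbb{I}+i\Gamma)/2$ must be produced by the coupling, which requires decoupling the quartic term rather than rotating it.

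\emph{Fallback.} Squaring your Cauchy--Binet formula produces quartic sums $\sum_I|\det W_{I,T}|^2|\det W_{I,T'}|^2$. Orthogonality of the compound matrix of $W$ only evaluates sums that are quadratic in the minors. The quantity $\det\Gamma_I$ is sesquilinear, not linear, in minors of $W$. Passing to $\mathrm{Pf}(\Gamma_I)$, which is linear in minors, still leaves a quartic sum.

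\textbf{The missing idea} is that $\det\Gamma_I$ is a Fourier coefficient of $p$. For the projection DPP, $\mathbb{P}(I'\subseteq S)=\det K_{I'}$. Inclusion--exclusion then gives $\sum_{|S|=N}p_S(-1)^{|S\cap I|}=\sum_{I'\subseteq I}(-2)^{|I'|}\det K_{I'}=\det(\mathbb{I}_{|I|}-2K_I)=(-i)^{|I|}\det\Gamma_I$. Extending $p$ by zero off the $N$-slice, this is its Walsh--Fourier transform on $\{0,1\}^{2N}$. Parseval yields $\sum_S p_S^2=2^{-2N}\sum_I\det(\Gamma_I)^2$, which is the claim since $d^2=2^{2N}$. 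This is exactly the paper's proof.

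\textbf{A repair closer to your replica picture.} Decouple each site as $1+\theta^{(1)}_m\theta^{(2)}_m\theta^{(3)}_m\theta^{(4)}_m=\tfrac12\sum_{\varepsilon_m=\pm1}e^{\varepsilon_m(\theta^{(1)}_m\theta^{(2)}_m+\theta^{(3)}_m\theta^{(4)}_m)}$. Its determinant counterpart is $\sum_I\det(\Gamma_I)^2=\mathbb{E}_D\det(\mathbb{I}+D\,i\Gamma)^2=\mathbb{E}_D\det(D+i\Gamma)^2$, with $D$ a diagonal matrix of independent random signs. Now $D+i\Gamma=2(K-\Pi_T)$ with $T=\{m:\varepsilon_m=-1\}$, and $K-\Pi_T=[\Pi_{T^c}V\;\;-\Pi_T\bar V]\,W^\dagger$. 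Hence $\det(K-\Pi_T)^2=p_Tp_{T^c}$ if $|T|=N$ and $0$ otherwise, so $\sum_I\det(\Gamma_I)^2=2^{2N}\sum_{|T|=N}p_Tp_{T^c}$. Your complementary-minor identity $p_{T^c}=p_T$ then finishes the argument. You had the right closing ingredient, but not the step that actually produces minors of $K$.
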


\begin{proof}
Equation~\eqref{eq:pure-covariance} gives $K^\dagger=K$, $K^2=K$ and $\mathrm{Tr}\,K=N$. For a projection kernel, the Cauchy--Binet formula gives $p_S\ge0$ and $\sum_{|S|=N}p_S=1$. Extend $p$ by zero from the $N$-th slice to all subsets of $[2N]$ and use the Boolean Fourier convention $\widehat p(I)=\sum_{|S|=N}p_S(-1)^{|S\cap I|}$. The determinantal generating function gives
\begin{equation}
    \widehat p(I)=\det(\mathbb{I}_{|I|}-2K_I)=(-i)^{|I|}\det\Gamma_I\;,
\end{equation}
so that Parseval's identity on the Boolean cube yields $\sum_{|S|=N}p_S^2=d^{-2}\sum_I\det(\Gamma_I)^2$. Combining with Eq.~\eqref{eq:zeta-M} proves Eq.~\eqref{eq:DPP-collision}.
\end{proof}

\subsubsection{A sharp collision inequality for one Slater determinant}
\label{app:proof:slater}

The first new ingredient bounds the collision probability of an arbitrary Slater determinant~\cite{Mbeng2024,surace}, specified by the isometry $Q$ whose columns span the occupied space, by its \emph{transpose} overlap $Q^{\trans}Q$. The transpose, rather than the adjoint, is what remembers the particle--hole reality constraint in the fermionic application.

Throughout this subsection we use the shorthand
\begin{equation}
    \mathcal{D}(X):=\det\!\left(\mathbb{I}+X^\dagger X\right)
    =\sum_{I,J}\left|\det X_{I,J}\right|^2 ,
    \label{eq:Dfunctional}
\end{equation}
the second equality being the Cauchy--Binet expansion, in which $(I,J)$ runs over all pairs of index sets of equal cardinality, the empty pair contributing $1$.

\begin{lemma}[Sharp Slater collision bound]
\label{lem:Slater-bound}
Let $Q\in\mathbb{C}^{n\times k}$ satisfy $Q^\dagger Q=\mathbb{I}_k$, and set $P=QQ^\dagger$ and $\Sigma=Q^{\trans}Q$. Then the collision probability of the associated projection DPP,
\begin{equation}
    \mathcal{C}(Q):=\sum_{|I|=k}p_I^2\,,
    \qquad p_I=\det P_I=\left|\det Q_I\right|^2 ,
\end{equation}
obeys
\begin{equation}
    \mathcal{C}(Q)\;\le\;2^{-k}\,\mathcal{D}(\Sigma)\;.
    \label{eq:Slater-bound}
\end{equation}
Both sides are invariant under a unitary change of column gauge $Q\mapsto QW$.
\end{lemma}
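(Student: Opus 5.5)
The plan is to reduce the statement to a harmonic-analysis identity on the torus and then compare two Cauchy--Binet expansions. Write $r_1,\dots,r_n\in\mathbb{C}^{1\times k}$ for the rows of $Q$, and for $x\in\mathbb{C}^n$ set $M(x)=\sum_i x_i\, r_i^{\trans} r_i=Q^{\trans}D_xQ$, where $D_x=\mathrm{diag}(x)$.

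First, Cauchy--Binet gives the multilinear expansion
\[
\det M(x)=\sum_{|I|=k}x^I\,(\det Q_I)^2 .
\]
Its coefficients satisfy $|(\det Q_I)^2|^2=p_I^2$. Parseval on the torus $x_i=e^{i\phi_i}$, with independent uniform phases, then yields the exact representation
\[
\mathcal{C}(Q)=\mathbb{E}_\phi\bigl|\det(Q^{\trans}D_\phi Q)\bigr|^2 .
\]
The point of this representation is that $\Sigma=Q^{\trans}Q$ is the value at $\phi=0$, so both sides of Eq.~\eqref{eq:Slater-bound} are now built from the same family of matrices $Q^{\trans}D_\phi Q$. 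Gauge invariance under $Q\mapsto QW$ is immediate on both sides: the left side changes by $|\det W|^4=1$, and $\Sigma\mapsto W^{\trans}\Sigma W$ leaves $\det(\mathbb{I}+\Sigma^\dagger\Sigma)$ unchanged. This freedom can be used to put $\Sigma$ in Takagi/Youla normal form, $\Sigma=\mathrm{diag}(\sigma_j)$ with $0\le\sigma_j\le1$ (the bound $\sigma_j\le1$ uses $\|Q\|\le1$).

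Second, I expand the right-hand side through Eq.~\eqref{eq:Dfunctional}, $\mathcal{D}(\Sigma)=\sum_{J,K}|\det\Sigma_{J,K}|^2$. Cauchy--Binet gives
\[
\det\Sigma_{J,K}=\sum_{|I|=|J|}\det Q_{I,J}\det Q_{I,K},
\]
so $\mathcal{D}(\Sigma)$ is a sum over $m=0,\dots,k$ of terms with the same bilinear structure as the left side, but taken over all $m\times m$ minors rather than only top ones. The left side, in turn, can be expanded with the full Laplace/Cauchy--Binet decomposition of $\det Q_I$ along a complementary pair $(J,J^c)$ of column blocks. Averaging over all $2^k$ choices of $J\subseteq[k]$ should produce exactly the factor $2^{-k}$. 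Each top minor then splits into products of a $|J|$-minor and a $|J^c|$-minor, and the aim is to dominate these, via Cauchy--Schwarz in the row index $I$, by the squared mixed minors $|\det\Sigma_{J,K}|^2$. In Takagi form the target is explicit: $\mathcal{D}(\Sigma)=\prod_j(1+\sigma_j^2)$. So what I actually need is
\[
\mathbb{E}_\phi|\det(Q^{\trans}D_\phi Q)|^2\le\prod_j\frac{1+\sigma_j^2}{2},
\]
a product of per-column factors. This suggests inducting on $k$: peel off one column of $Q$ at a time, using a Schur-complement formula for $\det(Q^{\trans}D_\phi Q)$, and show that each column contributes at most $(1+\sigma_j^2)/2$ after averaging over the phases.

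The main obstacle is this inductive step. After conditioning, the remaining columns are no longer an exact isometry in Takagi form, so the induction must be run on the more general class of contractions, with $\mathcal{D}$ evaluated on the relevant transpose-overlap block. For a single column $q$, the inequality reads
\[
\sum_i|q_i|^4\le\frac{1+|q^{\trans}q|^2}{2}.
\]
It follows from $\sum_i|q_i|^4\le\bigl(\sum_i|q_i|^2\bigr)^2=1$ combined with a case analysis of which factor is binding. That base case is suspicious, since $1$ need not be $\le(1+|q^{\trans}q|^2)/2$. I will therefore check it first: the required inequality should instead come from $\sum_i|q_i|^4\le\frac12\bigl[(\sum_i|q_i|^2)^2+|\sum_i q_i^2|^2\bigr]$, which holds because $|q_i|^4=\frac12\bigl(|q_i|^4+|q_i^2|^2\bigr)$ and both sums of squares are dominated by the squares of sums.

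This base-case identity, $|z|^4=\frac12(|z|^4+|z^2|^2)$, followed by the two Cauchy--Schwarz bounds, is the mechanism I expect to generalize. At the level of $k$-forms, $|\det Q_I|^4$ is the average of $|\det Q_I|^2|\det Q_I|^2$ and $|(\det Q_I)^2|^2$, the second of which feeds $\Sigma$ through $\sum_I(\det Q_I)^2=\det\Sigma$. The rank-$k$ version requires splitting each column independently into its ``adjoint'' and ``transpose'' halves, which is where the $2^k$ subsets $J$ and the mixed minors $\det\Sigma_{J,K}$ enter. I will handle this step with an exterior-algebra Cauchy--Schwarz on $\Lambda^k(\mathbb{C}^n)$.
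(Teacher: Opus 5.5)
Your preliminary reductions are correct: Cauchy--Binet gives $\det(Q^{\trans}D_xQ)=\sum_{|I|=k}x^I(\det Q_I)^2$, so $\mathcal{C}(Q)=\mathbb{E}_\phi|\det(Q^{\trans}D_\phi Q)|^2$. Gauge invariance holds, and after a Takagi gauge the target is $\prod_j(1+\sigma_j^2)/2$. But everything after that is announced rather than proved, and the one concrete mechanism you propose to generalize is false.

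In the $k=1$ case you use $\sum_i|q_i^2|^2\le|\sum_i q_i^2|^2$. For complex entries this fails: $q=(1,i)/\sqrt2$ gives $\tfrac12\le 0$. The $k=1$ inequality is true, but only because the slack in the first Cauchy--Schwarz pays for the cancellations in the second: $\tfrac12(1+|q^{\trans}q|^2)-\sum_i|q_i|^4=\sum_{i\neq j}(\mathrm{Re}\,q_i\bar q_j)^2$.

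Lifted to $k$-forms, your termwise domination would require $\sum_I|\det Q_I|^4\le|\det\Sigma|^2$, which fails whenever $\det\Sigma=0$. Even if it held, it would only give $\tfrac12(1+\prod_j\sigma_j^2)$. That is weaker than $\prod_j\tfrac{1+\sigma_j^2}{2}$ for $k\ge2$, by iterating $\tfrac{1+a}{2}\cdot\tfrac{1+b}{2}\le\tfrac{1+ab}{2}$ for $a,b\in[0,1]$.

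The factor $2^{-k}$ from ``averaging over $J$'' is unsubstantiated, since the Laplace expansion is an identity for each $J$. The column-peeling induction on $k$ is not viable as stated. The random phases are attached to rows, so every column sees all of them, and $\mathbb{E}_\phi|\det(Q^{\trans}D_\phi Q)|^2$ does not split over columns through a Schur complement. You also concede that the conditioned object leaves the isometry class, without saying what inequality replaces the induction hypothesis.

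The missing idea is to induct on rows, not columns. Fix a row $a$ with $p=\|Q^\dagger e_a\|^2$, $r=1-p$, and gauge that row to $\sqrt p\,e_1^{\trans}$. Conditioning on inclusion or exclusion of $a$ then yields two \emph{exact} isometries, $Q_1$ (the last $k-1$ columns of $\widehat Q$) and $Q_0=\widehat Q\Delta_r^{-1}$. You also get the exact chain rule $\mathcal{C}(Q)=p^2\mathcal{C}(Q_1)+r^2\mathcal{C}(Q_0)$, so the class never leaves isometries.

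With the induction hypothesis at $k-1$ and $k$ columns, the lemma reduces to the determinant deletion inequality $\mathcal{D}(\Sigma)\ge 2p^2\mathcal{D}(\Sigma_1)+r^2\mathcal{D}(\Sigma_0)$. The factor $2$ is the normalization mismatch $2^{-(k-1)}=2\cdot2^{-k}$. Sorting the minors by whether index $1$ lies in the row and/or column set turns this into $2pr(\mathcal{N}+\mathcal{X}+\mathrm{Re}\,\mathcal{Z})\ge0$, i.e.\ the cross-term bound $\mathrm{Re}\,\mathcal{Z}\ge-\mathcal{D}(V)$.

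The paper obtains that bound by computing $f'(0)$ for $f(q)=\mathcal{D}(\Sigma_0+q\,e_1e_1^{\trans})$ twice: once through minors, and once through a Schur complement plus the Woodbury identity. It then uses $\|u\|\le1$ and $0<(\mathbb{I}+VV^\dagger)^{-1}\le\mathbb{I}$. Your outline needs a concrete two-sided recursion of this kind, or an exterior-algebra inequality that is actually proved, to become a proof.
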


\begin{proof}
First, we note that gauge invariance is immediate: $P$ is unchanged by $Q\mapsto QW$, while $\Sigma\mapsto W^{\trans}\Sigma W$ leaves $\Sigma^\dagger\Sigma$ unitarily equivalent to itself.

The proof is an induction on the number of rows $n$, in which \emph{both} sides of Eq.~\eqref{eq:Slater-bound} are shown to satisfy the same deletion recursion. The inequality enters only through a single determinant estimate, cf. Eq.~\eqref{eq:determinant-deletion} below.

Let us fix a coordinate $a\in \{1,\dots,n\}$ and set
\begin{equation}
    p=\langle e_a, P e_a\rangle=\left\lVert Q^\dagger e_a\right\rVert^2 ,\qquad r=1-p ,
\end{equation}
assuming first $0<p<1$, where $e_a$ is the canonical basis vector with $(e_a)_m=\delta_{ma}$.
We choose the column gauge so that $Q^\dagger e_a=\sqrt p\,e_1$ with $e_1\in\mathbb{C}^k$; the $a$-th row of $Q$ is then $\sqrt p\,e_1^{\trans}$. Let $\widehat Q\in\mathbb{C}^{(n-1)\times k}$ be $Q$ with that row deleted, so that $Q^\dagger Q=\mathbb{I}_k$ gives
\begin{equation}
    \widehat Q^\dagger\widehat Q=\mathbb{I}_k-p\,e_1e_1^{\trans}=\Delta_r^2 ,
    \quad
    \Delta_r:=\mathrm{diag}(\sqrt r,1,\dots,1).\end{equation}
Hence
\begin{equation}
    Q_0:=\widehat Q\,\Delta_r^{-1}\in\mathbb{C}^{(n-1)\times k}
\end{equation}
is an isometry, and so are the last $k-1$ columns of $\widehat Q$, which we call $Q_1\in\mathbb{C}^{(n-1)\times(k-1)}$. These describe the process conditioned respectively on the exclusion and on the inclusion of $a$: expanding $\det Q_I$ along the $a$-th row for $a\in I$, and using $\det\Delta_r=\sqrt r$ for $a\notin I$, gives $\det P_I=p\,\det (P_1)_{I\setminus\{a\}}$ and $\det P_I=r\,\det (P_0)_I$ respectively, with $P_0=Q_0Q_0^\dagger$ and $P_1=Q_1Q_1^\dagger$. Squaring and summing,
\begin{equation}
    \mathcal{C}(Q)=p^2\,\mathcal{C}(Q_1)+r^2\,\mathcal{C}(Q_0)\;.
    \label{eq:collision-deletion}
\end{equation}
We now write the transpose overlap of $Q_0$ in block form, singling out its first column,
\begin{equation}
\Sigma_0:=Q_0^{\trans}Q_0:=\left[\,u\;\;V\,\right]
    =\begin{pmatrix} u_1&y^{\trans}\\ y&\Sigma_1\end{pmatrix},
    \quad \Sigma_1=Q_1^{\trans}Q_1 .\end{equation}
Since $\Sigma=Q^{\trans}Q=\widehat Q^{\trans}\widehat Q+p\,e_1e_1^{\trans}$ and $\widehat Q=Q_0\Delta_r$,
\begin{equation}\Sigma=p\,e_1e_1^{\trans}+\Delta_r\,\Sigma_0\,\Delta_r
    =\begin{pmatrix} p+r\,u_1 & \sqrt r\,y^{\trans}\\[2pt] \sqrt r\,y & \Sigma_1\end{pmatrix}.
    \label{eq:S-deletion}
\end{equation} 
The matrices $Q_1$ and $Q_0$ have $k-1$ and $k$ columns respectively, so the induction hypothesis reads $\mathcal{C}(Q_1)\le 2^{-(k-1)}\mathcal{D}(\Sigma_1)$ and $\mathcal{C}(Q_0)\le 2^{-k}\mathcal{D}(\Sigma_0)$. Inserting these into Eq.~\eqref{eq:collision-deletion}, Eq.~\eqref{eq:Slater-bound} follows provided we prove that
\begin{equation}
    \mathcal{D}(\Sigma)\;\ge\;2p^2\,\mathcal{D}(\Sigma_1)+r^2\,\mathcal{D}(\Sigma_0)\;;
    \label{eq:determinant-deletion}
\end{equation}
where the factor $2$ in front of $p^2$ is precisely the mismatch $2^{-(k-1)}=2\cdot 2^{-k}$ between the normalizations at $k-1$ and $k$ columns.

To this end, let us apply Eq.~\eqref{eq:Dfunctional} to $\Sigma_0$ and split the sum according to whether the index $1$ belongs to the row set, to the column set, to both, or to neither:
\begin{equation}
\begin{split}
    \mathcal{N}&:=\!\!\sum_{1\notin I,\,1\notin J}\!\!\left|\det(\Sigma_0)_{I,J}\right|^2=\mathcal{D}(\Sigma_1),\\
    \mathcal{X}&:=\!\!\sum_{1\in I,\,1\notin J}\!\!\left|\det(\Sigma_0)_{I,J}\right|^2,\\
    \mathcal{Y}&:=\!\!\sum_{1\in I,\,1\in J}\!\!\left|\det(\Sigma_0)_{I,J}\right|^2,
\end{split}
\end{equation}
so that, $\Sigma_0$ being symmetric, the column-only class is another copy of $\mathcal{X}$ and
\begin{equation}
    \mathcal{D}(\Sigma_0)=\mathcal{N}+2\mathcal{X}+\mathcal{Y},
    \qquad
    \mathcal{D}(V)=\mathcal{N}+\mathcal{X}.
    \label{eq:minor-classes-S0}
\end{equation}
We now expand $\mathcal{D}(\Sigma)$ in the same way, comparing each minor of $\Sigma$ with the corresponding minor of $\Sigma_0$ through Eq.~\eqref{eq:S-deletion}. If $1\notin I$ and $1\notin J$ the submatrix is unchanged; if exactly one of $I,J$ contains $1$, the corresponding row or column is scaled by $\sqrt r$ and the squared modulus picks up a factor $r$. If instead $1\in I$ and $1\in J$, we set $I'=I\setminus\{1\}$, $J'=J\setminus\{1\}$ and note that Eq.~\eqref{eq:S-deletion} restricted to those indices reads$\Sigma_{I,J}=p\,e_1e_1^{\trans}+\Delta_r\,(\Sigma_0)_{I,J}\,\Delta_r$. Expanding the determinant by multilinearity in the first row, the term proportional to $p$ is developed along $(p,0,\dots,0)$ and leaves the complementary minor, which involves only rows and columns $>1$ and is therefore unscaled, while the remaining term carries $\det\Delta_r^2=r$:
\begin{equation}\det\Sigma_{I,J}=p\,\det(\Sigma_1)_{I',J'}+r\,\det(\Sigma_0)_{I,J}\;.
    \label{eq:minor-splitting}
\end{equation}\DIFaddend 
Taking squared moduli, $|p\,\alpha+r\,\beta|^2=p^2|\alpha|^2+r^2|\beta|^2+2pr\,\mathrm{Re}(\alpha\bar\beta)$, and summing over all such pairs produces $p^2\mathcal{N}+r^2\mathcal{Y}+2pr\,\mathrm{Re}\,\mathcal{Z}$, where we defined the cross term
\begin{equation}
    \mathcal{Z}:=\sum_{I',J'}\det(\Sigma_1)_{I',J'}\;
    {\det(\Sigma_0^\dagger)_{\{1\}\cup I',\,\{1\}\cup J'}}\;.
    \label{eq:cross-term-def}
\end{equation}\DIFaddend 
Collecting the four classes,
\begin{equation}
    \mathcal{D}(\Sigma)=\mathcal{N}+2r\mathcal{X}+p^2\mathcal{N}+r^2\mathcal{Y}+2pr\,\mathrm{Re}\,\mathcal{Z}\;.
    \label{eq:minor-classes-S}
\end{equation}

We claim that $\mathrm{Re}\,\mathcal{Z}\ge-\mathcal{D}(V)$. The strategy is to compute one and the same derivative in two different ways. Shift the $(1,1)$ entry of $\Sigma_0$ by a real amount $q$ and set
\begin{equation}
    f(q):=\mathcal{D}\!\left(\Sigma_0+q\,e_1e_1^{\trans}\right).
\end{equation}

\emph{(i) By the minor expansion.} Equation~\eqref{eq:minor-splitting}, applied with$\Delta_r=\mathbb{I}$ and $p$ replaced by $q$, says that a minor of $\Sigma_0+q\,e_1e_1^{\trans}$ equals $\det(\Sigma_0)_{I,J}+q\det(\Sigma_1)_{I',J'}$ when $1\in I\cap J$, and is independent of $q$ otherwise. Taking squared moduli and summing over all pairs, using the definition~\eqref{eq:cross-term-def} of $\mathcal{Z}$, we obtain
\begin{equation}
    f(q)=\mathcal{D}(\Sigma_0)+2q\,\mathrm{Re}\,\mathcal{Z}+q^2\mathcal{N}\;,
    \label{eq:f-minor}
\end{equation}
so that in particular $\mathrm{Re}\,\mathcal{Z}=\tfrac12f'(0)$.

\emph{(ii) In closed form.} For an arbitrary matrix $[\,w\;\;V\,]$, split into its first column $w$ and the remaining columns $V$, then by Schur's complement formula
\begin{equation}\mathcal{D}\!\left(\left[\,w\;V\,\right]\right)=\mathcal{D}(V)\left[1+w^\dagger w-w^\dagger V\left(\mathbb{I}+V^\dagger V\right)^{-1}V^\dagger w\right].
\end{equation}
The Woodbury identity~\cite{MR38136} $V(\mathbb{I}+V^\dagger V)^{-1}V^\dagger=\mathbb{I}-(\mathbb{I}+VV^\dagger)^{-1}$ collapses the bracket into
\DIFdelbegin \DIFdel{,
}\DIFdelend \begin{equation}
\begin{split}
    \mathcal{D}\!\left(\left[\,w\;V\,\right]\right)&=\mathcal{D}(V)\left[1+w^\dagger\mathcal{R}\,w\right],\\
    \mathcal{R}&:=\left(\mathbb{I}+VV^\dagger\right)^{-1}.
\end{split}
    \label{eq:schur-complement}
\end{equation}
Since $\Sigma_0=[\,u\;V\,]$, the shifted matrix is $[\,u+q\,e_1\;\;V\,]$, so that $f(q)=\mathcal{D}(V)\left[1+(u+q\,e_1)^\dagger\mathcal{R}(u+q\,e_1)\right]$ and therefore
\begin{equation}
    \tfrac12f'(0)=\mathcal{D}(V)\,\mathrm{Re}\!\left(e_1^\dagger\mathcal{R}\,u\right).
\end{equation}
Consistently, the coefficient of $q^2$ here is $\mathcal{D}(V)\,e_1^\dagger\mathcal{R}e_1=\mathcal{D}([\,e_1\;V\,])-\mathcal{D}(V)=\mathcal{N}$ from Eq.~\eqref{eq:schur-complement}, matching Eq.~\eqref{eq:f-minor}.

\emph{(iii) Comparison and bound.} Equating the two evaluations of $\tfrac12f'(0)$,
\begin{equation}
    \mathrm{Re}\,\mathcal{Z}=\mathcal{D}(V)\,\mathrm{Re}\!\left(e_1^\dagger\mathcal{R}\,u\right),
\end{equation}
and it remains to bound the scalar. The matrix $Q_0$ is an isometry, so $\lVert Q_0\rVert=1$; since transposition and complex conjugation preserve the operator norm, $\lVert\Sigma_0\rVert=\lVert Q_0^{\trans}Q_0\rVert\le1$, and in particular its first column $u=\Sigma_0e_1$ obeys $\lVert u\rVert\le1$. Moreover $VV^\dagger\succeq0$ gives $0<\mathcal{R}\le\mathbb{I}$, so that Cauchy--Schwarz in the inner product defined by $\mathcal{R}$ yields
\begin{equation}
    \left|e_1^\dagger\mathcal{R}\,u\right|\le\lVert\mathcal{R}^{1/2}e_1\rVert\,\lVert\mathcal{R}^{1/2}u\rVert\le\lVert e_1\rVert\,\lVert u\rVert\le1 .
\end{equation}
Hence $\mathrm{Re}(e_1^\dagger\mathcal{R}u)\ge-1$ and, $\mathcal{D}(V)$ being nonnegative,
\begin{equation}
    \mathrm{Re}\,\mathcal{Z}\ge-\mathcal{D}(V)=-(\mathcal{N}+\mathcal{X})\;.
    \label{eq:cross-lower}
\end{equation}

We prove Eq.~\eqref{eq:determinant-deletion} using Eq.~\eqref{eq:minor-classes-S0}, Eq.~\eqref{eq:minor-classes-S} and $r=1-p$, so that $1-p^2-r^2=2pr$,
\begin{equation}
\begin{split}
    &\mathcal{D}(\Sigma)-2p^2\mathcal{D}(\Sigma_1)-r^2\mathcal{D}(\Sigma_0)\\
    &\qquad\qquad =2pr\left(\mathcal{N}+\mathcal{X}+\mathrm{Re}\,\mathcal{Z}\right)\;\ge\;0
    \end{split}
\end{equation} 
by Eq.~\eqref{eq:cross-lower}, closing the induction step. For the base cases, $k=0$ gives $1\le1$, while for $n=k$ the matrix $Q$ is unitary, so $\mathcal{C}(Q)=1$ and $\Sigma$ is unitary as well, hence $\mathcal{D}(\Sigma)=2^k$ and Eq.~\eqref{eq:Slater-bound} holds with equality. Finally, if $p=0$ or $p=1$ only one of the two conditional processes is present and the conclusion follows directly, or by continuity in $p$.
\end{proof}

\subsubsection{ BCS parametrization and reduction to a single rotation}
\label{app:proof:thouless}

From now on we assume a balanced cut with no unpaired modes, so that $N=2\ell$ and each side carries $2\ell$ Majorana coordinates; the general case is recovered at the end of Sec.~\ref{app:proof:completion}.
We work first in the interior $0\le\lambda_j<1$, so that every $\nu_j>0$,
and introduce the canonical BCS matrix~\cite{Mbeng2024}
\begin{equation}
Z_0=\bigoplus_{j=1}^{\ell}\frac{\mathbb{I}_2+i\lambda_jJ}{\nu_j}=\bigoplus_{j=1}^{\ell}e^{i\vartheta_jJ}\;,\quad \tanh \vartheta_j=\lambda_j\;,
    \label{eq:Z0}
\end{equation} 
which is positive Hermitian and complex orthogonal, $Z_0^{\trans}Z_0=\mathbb{I}$ and $\overline{Z_0}=Z_0^{-1}$. Each $2\times2$ block of $Z_0$ encodes one canonical pair of the BCS-like form of Eq.~\eqref{eq:can_form}. In this parametrization, local Gaussian unitaries act as the two-sided rotation $Z_0\mapsto Z=O_AZ_0O_B^{\trans}$ by real orthogonal matrices.

\begin{lemma}[Balanced all-minors representation]
\label{lem:all-minors}
For $Z=O_AZ_0O_B^{\trans}$ define the all-minors functional
\begin{equation}
    \mathsf{F}(Z)=\sum_{I,J\,:\,|I|=|J|}\left|\det Z_{I,J}\right|^4\;.
    \label{eq:F-definition}
\end{equation}
Then, writing with a slight abuse of notation $\zeta_2(Z)$ for the stabilizer purity of the state parametrized by $Z$, the collision functional is exactly
\begin{equation}\zeta_2(Z)=d^{-1}\Big(\prod_{j=1}^{\ell}\nu_j^4\Big)\,\mathsf{F}(Z)\;,
    \label{eq:physical-F}
\end{equation}
and at the canonical point
\begin{equation}
    \mathsf{F}(Z_0)=4^{\ell}\prod_{j=1}^{\ell}\frac{1-\lambda_j^2+\lambda_j^4}{\nu_j^4}\;.
    \label{eq:F-canonical}
\end{equation}
\end{lemma}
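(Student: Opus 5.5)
The plan is to start from Lemma~\ref{lem:DPP}, $\zeta_2=d\sum_{|S|=N}p_S^2$ with $p_S=\det K_S$, and to choose a graph (Thouless-type) parametrization of the occupied space of $K=(\mathbb{I}+i\Gamma)/2$. In the balanced case $N=2\ell$, and $K$ is a rank-$2\ell$ projector on $\mathbb{C}^{4\ell}=\mathbb{C}^{2\ell}_A\oplus\mathbb{C}^{2\ell}_B$. In the interior $\nu_j>0$, the range of $K$ is transverse to the $B$ coordinate space. It can therefore be written as the column span of $M=\left(\begin{smallmatrix}\mathbb{I}_{2\ell}\\ W\end{smallmatrix}\right)$ for a unique $W\in\mathbb{C}^{2\ell\times 2\ell}$. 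With the isometry $Q=M(M^\dagger M)^{-1/2}$ one has
\begin{equation}
p_S=\frac{|\det M_S|^2}{\det(\mathbb{I}+W^\dagger W)},
\end{equation}
and this expression is gauge invariant by the remark in Lemma~\ref{lem:Slater-bound}.

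The next step is a Laplace expansion of $\det M_S$. Write $S=S_A\sqcup S_B$ with $|S|=2\ell$. The identity rows indexed by $S_A$ eliminate those columns, so $\det M_S=\pm\det W_{S_B,[2\ell]\setminus S_A}$, and $|S_B|=2\ell-|S_A|$ matches the number of remaining columns. The map $S\mapsto(I,J)=(S_B,[2\ell]\setminus S_A)$ is a bijection onto all pairs of index sets with $|I|=|J|$. This gives
\begin{equation}
\sum_S p_S^2=\frac{\mathsf{F}(W)}{\det(\mathbb{I}+W^\dagger W)^2}.
\end{equation}

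For the orbit, the local action~\eqref{eq:local-action} sends $K\mapsto(O_A\oplus O_B)K(O_A\oplus O_B)^{\trans}$. The range therefore becomes the span of $\left(\begin{smallmatrix}O_A\\ O_BW\end{smallmatrix}\right)$, which after right multiplication by $O_A^{\trans}$ is the graph of $O_BWO_A^{\trans}$. So $W$ transforms by a two-sided real orthogonal rotation. I then compute $W_0$ directly from the canonical block~\eqref{eq:canonical-covariance-block}. For each pair, the $+1$ eigenspace of $i\Gamma^{\mathrm{can}}_j$ should be the graph of a $2\times2$ matrix of the form $(\mathbb{I}_2+ i\lambda_jJ)/\nu_j$, possibly up to transposition, complex conjugation, a sign of $J$, or an overall sign. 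All of these can be absorbed into the fixed local reflections discussed in the Remark, and none of them changes $|\det W_{I,J}|$. Hence $W$ can be identified with $Z=O_AZ_0O_B^{\trans}$, or with $Z^{\trans}$, which gives the same value of $\mathsf{F}$. I expect this sign/orientation bookkeeping to be the only delicate point, and I would settle it with an explicit $4\times4$ eigenvector computation.

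The denominator is constant along the orbit: $W^\dagger W$ is unitarily equivalent to $Z_0^\dagger Z_0=Z_0^2$, whose $2\times2$ blocks have eigenvalues $e^{\pm2\vartheta_j}$. Therefore
\begin{equation}
\det(\mathbb{I}+Z^\dagger Z)=\prod_j(1+e^{2\vartheta_j})(1+e^{-2\vartheta_j})=\prod_j4\cosh^2\vartheta_j=4^\ell\prod_j\nu_j^{-2},
\end{equation}
using $\cosh^{-2}\vartheta_j=1-\lambda_j^2$. Combining with $d=4^\ell$ gives $\zeta_2=d\,\mathsf{F}(Z)\,16^{-\ell}\prod_j\nu_j^4=d^{-1}\prod_j\nu_j^4\,\mathsf{F}(Z)$, which is Eq.~\eqref{eq:physical-F}.

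Finally, evaluating at $Z_0$ and inserting the canonical value~\eqref{eq:canonical-value} gives $\mathsf{F}(Z_0)=d\prod_j g_j/\nu_j^4=4^\ell\prod_j(1-\lambda_j^2+\lambda_j^4)/\nu_j^4$, which is Eq.~\eqref{eq:F-canonical}. As a cross-check, one can compute this directly: $\mathsf{F}$ factorizes over the blocks of $Z_0$, and a single block contributes $1+4\nu_j^{-4}+\cdots$ from its $1\times1$ and $2\times2$ minors.
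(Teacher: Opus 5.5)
Your proposal is correct and follows essentially the paper's proof. Both arguments use a graph parametrization of the occupied space of $K$, the orbit-invariant normalization $\det(\mathbb{I}+Z^\dagger Z)^{-1/2}=2^{-\ell}\prod_j\nu_j$, a Laplace expansion along the identity rows giving the bijection $S\leftrightarrow(I,J)$, and Lemma~\ref{lem:DPP}. The paper graphs over $B$ with $\bigl(\begin{smallmatrix} iZ\\ \mathbb{I}\end{smallmatrix}\bigr)$, whereas your graph over $A$ is its inverse. The explicit $4\times4$ computation gives exactly $W=-iZ^{\trans}$ (an overall phase $-i$, not a sign), so $\mathsf{F}(W)=\mathsf{F}(Z)$ holds pointwise and no reflections are needed.

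The only other difference is that you read off $\mathsf{F}(Z_0)$ from Eq.~\eqref{eq:canonical-value}, whereas the paper computes a single $2\times2$ block directly. In your cross-check, the $1\times1$ minors contribute $2(1+\lambda_j^4)/\nu_j^4$, not $4\nu_j^{-4}$. Together with the empty and full minors, this gives $4(1-\lambda_j^2+\lambda_j^4)/\nu_j^4$.
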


\begin{proof}
After a gauge choice in the occupied space, the rank-$2\ell$ projector of Lemma~\ref{lem:DPP} is represented by the isometry
\begin{equation}
    Q_Z=\begin{pmatrix}iZ\\ \mathbb{I}_{2\ell}\end{pmatrix}\left(\mathbb{I}+Z^\dagger Z\right)^{-1/2}\;,
    \label{eq:graph-isometry-Z}
\end{equation}
whose projector, for Hermitian $Z$, is
\begin{equation}
    K_Z=Q_ZQ_Z^\dagger=\begin{pmatrix} Z(\mathbb{I}_{2\ell}+Z^2)^{-1}Z& iZ(\mathbb{I}_{2\ell}+Z^2)^{-1}\\[2pt] -i(\mathbb{I}_{2\ell}+Z^2)^{-1}Z&(\mathbb{I}_{2\ell}+Z^2)^{-1}\end{pmatrix}.
    \label{eq:graph-projector-Z}
\end{equation}

For a single canonical block $Z_{0,j}=(\mathbb{I}_2+i\lambda_jJ)/\nu_j$, direct block algebra gives
\begin{equation}
\begin{split}
    Z_{0,j}(\mathbb{I}_2+Z_{0,j}^2)^{-1}Z_{0,j}&=\tfrac12(\mathbb{I}_2+i\lambda_jJ)\;,\\
    Z_{0,j}(\mathbb{I}_2+Z_{0,j}^2)^{-1}&=\tfrac12\nu_j\mathbb{I}_2\;,\\
    (\mathbb{I}_2+Z_{0,j}^2)^{-1}&=\tfrac12(\mathbb{I}_2-i\lambda_jJ)\;,
    \end{split}
\end{equation}
so that $Q_{Z_{0,j}}Q_{Z_{0,j}}^\dagger=(\mathbb{I}_4+i\Gamma^{\mathrm{can}}_j)/2$ : Eq.~\eqref{eq:graph-isometry-Z} reproduces exactly the canonical covariance of Proposition~\ref{prop:covariance-normal-form}, and in particular $i\Gamma_A=2(K_Z)_{AA}-\mathbb{I}$ has eigenvalues $\pm\lambda_j$. We note that, for Hermitian $Z$, the complex orthogonality $\overline{Z}=Z^{-1}$ is precisely what makes the state physical: a direct computation on Eq.~\eqref{eq:graph-projector-Z} shows that it implies $\overline{K_Z}=\mathbb{I}_{4\ell}-K_Z$, i.e.\ that $K_Z=(\mathbb{I}_{4\ell}+i\Gamma)/2$ with $\Gamma$ real, as in Lemma~\ref{lem:DPP}. Acting with $O_A\oplus O_B$ on the rows and using a right occupied-space gauge replaces $Z_0$ by $O_AZ_0O_B^{\trans}$, so Eq.~\eqref{eq:graph-isometry-Z} holds throughout the local orbit.

Given a set $S$ with $|S|=2\ell$, we know from Lemma~\ref{lem:DPP} that $p_S=|\det(Q_Z)_S|^2$, where $(Q_Z)_S$ is the square $2\ell\times2\ell$ submatrix formed by the rows of $Q_Z$ selected by $S$. The right factor $(\mathbb{I}_{2\ell}+Z^\dagger Z)^{-1/2}$ multiplies all rows at once, so it factors out of the square determinant: $\det(Q_Z)_S=\kappa\,\det\widetilde{Q}_S$, where $\widetilde{Q}_S$ collects the selected rows of the unnormalized matrix $\big(\begin{smallmatrix}iZ\\ \mathbb{I}_{2\ell}\end{smallmatrix}\big)$ and $\kappa=\det(\mathbb{I}_{2\ell}+Z^\dagger Z)^{-1/2}$. 
The set $S$ selects $k$ rows in the upper block $iZ$ of Eq.~\eqref{eq:graph-isometry-Z}, labeled by a subset $I$ with $|I|=k$, and $2\ell-k$ rows in the lower block $\mathbb{I}_{2\ell}$, whose column complement is labeled by $J$.
We are left to determine $\det\widetilde{Q}_S$: expanding along the rows selected from the identity block, what survives is the minor of $iZ$ on the rows $I$ and on the columns $J$ left untouched. Up to a sign fixed by the row ordering, $\det\widetilde{Q}_S=\pm\,i^k\det Z_{I,J}$, and $|J|=2\ell-(2\ell-k)=k=|I|$ automatically. Explicitly,
\begin{equation}
    \kappa=\det\!\left(\mathbb{I}_{2\ell}+Z_0^\dagger Z_0\right)^{-1/2}=2^{-\ell}\prod_{j=1}^{\ell}\nu_j\;,
    \label{eq:kappa}
\end{equation}
where the first equality uses $Z^\dagger Z=O_B(Z_0^\dagger Z_0)O_B^{\trans}$, which has the same determinant as $Z_0^\dagger Z_0$, and the second follows blockwise: from $(\mathbb{I}_2+i\lambda_jJ)^2=(1+\lambda_j^2)\mathbb{I}_2+2i\lambda_jJ$ one finds $\mathbb{I}_2+Z_{0,j}^2=2(\mathbb{I}_2+i\lambda_jJ)/\nu_j^2$ , so that $\det(\mathbb{I}_2+Z_{0,j}^2)=4/\nu_j^2$ and $\kappa=\prod_{j}(\nu_j/2)$. Altogether, $p_S=\kappa^2|\det Z_{I,J}|^2$: squaring and summing over all $S$, i.e.\ over all pairs $(I,J)$ with $|I|=|J|$, gives $\sum_{|S|=2\ell}p_S^2=\kappa^4\mathsf{F}(Z)$, and Lemma~\ref{lem:DPP} with $N=2\ell$, $d=2^{2\ell}$ and $\kappa^4=2^{-4\ell}\prod_j\nu_j^4$ proves Eq.~\eqref{eq:physical-F}.

The functional $\mathsf{F}$ is multiplicative on direct sums, and a direct $2\times2$ computation gives
\begin{equation}
    \mathsf{F}\!\left(\frac{\mathbb{I}_2+i\lambda J}{\sqrt{1-\lambda^2}}\right)=\frac{4\left(1-\lambda^2+\lambda^4\right)}{(1-\lambda^2)^2}\;,
    \label{eq:F_Z_0}
\end{equation}
which multiplied over the canonical blocks proves Eq.~\eqref{eq:F-canonical}. As a check, inserting Eq.~\eqref{eq:F-canonical} into Eq.~\eqref{eq:physical-F} with $d=4^{\ell}$ returns exactly the canonical value~\eqref{eq:canonical-value}.
\end{proof}
By Eqs.~\eqref{eq:physical-F}--\eqref{eq:F-canonical}, Theorem~\ref{thm:global} is equivalent to the inequality
\begin{equation}
    \mathsf{F}(O_AZ_0O_B^{\trans})\le \mathsf{F}(Z_0)\;.
    \label{eq:Cartan-target}
\end{equation}
The next exact reduction shows that it suffices to control a single common real basis.

\begin{lemma}
\label{lem:two-to-one}
For all $O_A,O_B\in \mathrm{O}(2\ell)$,
\begin{equation}
    \mathsf{F}\!\left(O_AZ_0O_B^{\trans}\right)\le\sqrt{\mathsf{F}\!\left(O_AZ_0O_A^{\trans}\right)\,\mathsf{F}\!\left(O_BZ_0O_B^{\trans}\right)}\;,
    \label{eq:two-to-one}
\end{equation}
and consequently
\begin{equation}
    \sup_{O_A,O_B}\mathsf{F}\!\left(O_AZ_0O_B^{\trans}\right)=\sup_{O}\mathsf{F}\!\left(OZ_0O^{\trans}\right)\;.
    \label{eq:sup-two-to-one}
\end{equation}
\end{lemma}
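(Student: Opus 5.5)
The plan is to treat $\mathsf{F}$ as a Hilbert--Schmidt pairing and apply Cauchy--Schwarz once. The key structural input is that $Z_0$ in Eq.~\eqref{eq:Z0} is positive Hermitian, so it has a Hermitian positive square root $Z_0^{1/2}$.

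Set $X:=O_AZ_0^{1/2}$ and $Y:=O_BZ_0^{1/2}$. Since $O_B$ is real orthogonal and $Z_0^{1/2}$ is Hermitian,
\begin{equation}
Z=O_AZ_0O_B^{\trans}=XY^\dagger,\qquad XX^\dagger=O_AZ_0O_A^{\trans},\qquad YY^\dagger=O_BZ_0O_B^{\trans}.
\end{equation}
By the Cauchy--Binet formula, for $|I|=|J|=k$,
\begin{equation}
\det Z_{I,J}=\sum_{|K|=k}\det X_{I,K}\,\overline{\det Y_{J,K}}=\langle y_J,x_I\rangle .
\end{equation}
Here $x_I:=(\det X_{I,K})_{|K|=k}$ and $y_J:=(\det Y_{J,K})_{|K|=k}$ are vectors in the $k$-th exterior power. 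Thus every minor of $Z$ is an inner product between a ``row vector'' built from $O_A$ and one built from $O_B$.

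Next, use $|\langle y,x\rangle|^4=|\langle y\otimes y,x\otimes x\rangle|^2=\mathrm{Tr}\big[(x x^\dagger)^{\otimes2}(yy^\dagger)^{\otimes2}\big]$. For each $k$, define the positive semidefinite operators
\begin{equation}
\mathcal{A}_k=\sum_{|I|=k}(x_Ix_I^\dagger)^{\otimes 2},\qquad \mathcal{B}_k=\sum_{|J|=k}(y_Jy_J^\dagger)^{\otimes2}.
\end{equation}
Then $\mathsf{F}(Z)=\sum_k\mathrm{Tr}(\mathcal{A}_k\mathcal{B}_k)$. Let $\mathcal{A}=\bigoplus_k\mathcal{A}_k$ and $\mathcal{B}=\bigoplus_k\mathcal{B}_k$. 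The Hilbert--Schmidt Cauchy--Schwarz inequality gives
\begin{equation}
\mathsf{F}(Z)=\mathrm{Tr}(\mathcal{A}\mathcal{B})\le\sqrt{\mathrm{Tr}\,\mathcal{A}^2\,\mathrm{Tr}\,\mathcal{B}^2}.
\end{equation}
The same Cauchy--Binet identity applied to $XX^\dagger$ shows that
\begin{equation}
\mathrm{Tr}\,\mathcal{A}^2=\sum_k\sum_{|I|=|I'|=k}|\langle x_{I'},x_I\rangle|^4=\mathsf{F}(XX^\dagger)=\mathsf{F}(O_AZ_0O_A^{\trans}).
\end{equation}
Likewise $\mathrm{Tr}\,\mathcal{B}^2=\mathsf{F}(O_BZ_0O_B^{\trans})$, which is Eq.~\eqref{eq:two-to-one}.

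For Eq.~\eqref{eq:sup-two-to-one}, the right-hand side of Eq.~\eqref{eq:two-to-one} is a geometric mean, so it is at most $\max\{\mathsf{F}(O_AZ_0O_A^{\trans}),\mathsf{F}(O_BZ_0O_B^{\trans})\}\le\sup_O\mathsf{F}(OZ_0O^{\trans})$. The reverse inequality is immediate by restricting to $O_A=O_B$. All suprema are finite, and in fact are maxima, by compactness of $\mathrm{O}(2\ell)$ and continuity of $\mathsf{F}$.

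The only step that needs real care is the first one: recognising the factorisation $Z=XY^\dagger$ through $Z_0^{1/2}$, and checking that the transpose on $O_B$ combines correctly with the adjoint. Here reality of $O_B$ makes $O_B^{\trans}=O_B^\dagger$, and Hermiticity of $Z_0^{1/2}$ closes the argument. Once the minors are written as Gram entries, the bound is just Cauchy--Schwarz. I also need to keep the grading by $k$ straight, so that only equal-size minors are paired; this is consistent with the definition in Eq.~\eqref{eq:F-definition}.
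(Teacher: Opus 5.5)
Your proof is correct and is essentially the paper's argument: you write each minor as an overlap of vectors obtained through a positive square root, pass to the operators $\sum_I (x_Ix_I^\dagger)^{\otimes 2}$, apply the Hilbert--Schmidt Cauchy--Schwarz inequality, and finish with the same geometric-mean and restriction argument for the suprema. The only cosmetic difference is that you take the square root of $Z_0$ itself and apply Cauchy--Binet to $Z=XY^\dagger$, whereas the paper takes the square root of the compound matrix $\mathcal{E}(Z_0)$; these are the same construction, since $\mathcal{E}(Z_0^{1/2})=\mathcal{E}(Z_0)^{1/2}$ by multiplicativity of the compound.
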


\begin{proof}
Let $\mathcal{E}(X)$ denote the matrix collecting all the minors of $X$: its rows and columns are labeled by subsets $I,J\subseteq \{1,\dots,2\ell\}$, and its entries are the minors,
\begin{equation}
    \mathcal{E}(X)_{I,J}=\begin{cases}\det X_{I,J}, & |I|=|J|,\\ 0,&\text{otherwise.}\end{cases}
    \label{eq:compound}
\end{equation}
The Cauchy--Binet formula is precisely the statement that $\mathcal{E}$ is multiplicative, $\mathcal{E}(XY)=\mathcal{E}(X)\mathcal{E}(Y)$, while $\mathcal{E}(X^{\trans})=\mathcal{E}(X)^{\trans}$ and $\mathcal{E}(\overline X)=\overline{\mathcal{E}(X)}$ are immediate from Eq.~\eqref{eq:compound}; together these give $\mathcal{E}(X^\dagger)=\mathcal{E}(X)^\dagger$. Two consequences are all we need. First, $\mathcal{E}(O)$ is real orthogonal whenever $O$ is, since $\mathcal{E}(O)\mathcal{E}(O)^{\trans}=\mathcal{E}(OO^{\trans})=\mathbb{I}$. Second, $\mathcal{E}_0:=\mathcal{E}(Z_0)$ is positive Hermitian: diagonalizing $Z_0=\Upsilon\Lambda_0\Upsilon^\dagger$ with $\Upsilon$ unitary and $\Lambda_0>0$  diagonal, multiplicativity gives $\mathcal{E}_0=\mathcal{E}(\Upsilon)\mathcal{E}(\Lambda_0)\mathcal{E}(\Upsilon)^\dagger$, where $\mathcal{E}(\Upsilon)$  is unitary and $\mathcal{E}(\Lambda_0)$ is diagonal with positive entries $\prod_{i\in I}(\Lambda_0)_{ii}$.

By multiplicativity, $\mathcal{E}(O_AZ_0O_B^{\trans})=\mathcal{E}(O_A)\,\mathcal{E}_0\,\mathcal{E}(O_B)^{\trans}$, so that
\begin{equation}
    \det\!\left(O_AZ_0O_B^{\trans}\right)_{I,J}=\alpha_I^{\trans}\,\mathcal{E}_0\,\beta_J\;,
\end{equation}
where $\alpha_I$ and $\beta_J$ denote the (real) rows of $\mathcal{E}(O_A)$ and $\mathcal{E}(O_B)$. Splitting $\mathcal{E}_0=\mathcal{E}_0^{1/2}\mathcal{E}_0^{1/2}$ and using that $\mathcal{E}_0^{1/2}$ is Hermitian while $\alpha_I,\beta_J$ are real, this is an ordinary overlap,
\begin{equation}
\begin{split}
    \det\!\left(O_AZ_0O_B^{\trans}\right)_{I,J}&=\langle u_I,v_J\rangle\;,\\
    u_I=\mathcal{E}_0^{1/2}\alpha_I\;,&\qquad v_J=\mathcal{E}_0^{1/2}\beta_J\;.
\end{split}
    \label{eq:minors-as-overlaps}
\end{equation} 
Hence the all-minors functional is a sum of fourth powers of overlaps between two families of vectors, obtained from the same positive matrix $\mathcal{E}_0$ by two real orthogonal changes of basis; note that $\langle u_I,v_J\rangle=0$ whenever $|I|\neq|J|$, since the $\mathcal{E}$-matrices vanish between blocks of different cardinality, so the unrestricted sum over pairs $(I,J)$ below coincides with $\mathsf{F}$:
\begin{equation}
    \mathsf{F}\!\left(O_AZ_0O_B^{\trans}\right)=\sum_{I,J}\left|\langle u_I,v_J\rangle\right|^4 .
    \label{eq:F-as-overlaps}
\end{equation}

Introduce the rank-one matrices $\mathcal{U}_I=u_Iu_I^\dagger$ and $\mathcal{V}_J=v_Jv_J^\dagger$, so that $|\langle u_I,v_J\rangle|^2=\mathrm{Tr}(\mathcal{U}_I\mathcal{V}_J)$, and define the positive semidefinite operators acting on two copies of the space
\begin{equation}
    \mathcal{T}_A=\sum_I \mathcal{U}_I\otimes \mathcal{U}_I\;,\qquad
    \mathcal{T}_B=\sum_J \mathcal{V}_J\otimes \mathcal{V}_J\;.
\end{equation} 
Since $\mathrm{Tr}\!\left[(X\otimes X)(Y\otimes Y)\right]=\left(\mathrm{Tr}\,XY\right)^2$, Eq.~\eqref{eq:F-as-overlaps} becomes
\begin{equation}
    \mathsf{F}\!\left(O_AZ_0O_B^{\trans}\right)=\mathrm{Tr}\!\left(\mathcal{T}_A\mathcal{T}_B\right).
    \label{eq:F-as-trace}
\end{equation}
The same identity applied with $O_B=O_A$ gives $\mathrm{Tr}(\mathcal{T}_A^2)=\mathsf{F}(O_AZ_0O_A^{\trans})$, and likewise for $B$.

Both $\mathcal{T}_A$ and $\mathcal{T}_B$ are positive semidefinite, so the Cauchy--Schwarz inequality for the trace inner product applies to Eq.~\eqref{eq:F-as-trace},
\begin{equation}
    \mathrm{Tr}\!\left(\mathcal{T}_A\mathcal{T}_B\right)\le\sqrt{\mathrm{Tr}\!\left(\mathcal{T}_A^2\right)\mathrm{Tr}\!\left(\mathcal{T}_B^2\right)}\;,
\end{equation}
which is exactly Eq.~\eqref{eq:two-to-one}. Since $\sqrt{xy}\le\max(x,y)$, taking suprema in Eq.~\eqref{eq:two-to-one} bounds the two-sided supremum by $\sup_O\mathsf{F}(OZ_0O^{\trans})$, while restricting to $O_A=O_B$ gives the reverse inequality, proving Eq.~\eqref{eq:sup-two-to-one}.
\end{proof}

\subsubsection{From all minors to a principal-minor residual}
\label{app:proof:residual}

By Lemma~\ref{lem:two-to-one} it suffices to consider $Z=OZ_0O^{\trans}$ with $O\in\mathrm{O}(2\ell)$,  which is again positive Hermitian and complex orthogonal. Put $H=Z^2$, so that $H$ is positive Hermitian and $HH^{\trans}=\mathbb{I}_{2\ell}$.
The purpose of this subsection is to bound $\mathsf{F}(Z)$ by a simpler quantity involving only the principal minors of $H$, using Lemma~\ref{lem:Slater-bound}.
\begin{proposition}[All-minors residual]
\label{prop:residual}
With $Z$ and $H$ as above,
\begin{equation}
    \mathsf{F}(Z)\le \mathsf{R}(H)\;,\quad \mathsf{R}(H)=\sum_{I\subseteq[m]}2^{-|I|}\det\!\left(\mathbb{I}+H_I\overline{H_I}\right),
    \label{eq:R-definition}
\end{equation}
with equality at the canonical point, $\mathsf{F}(Z_0)=\mathsf{R}(H_0)$, $H_0=Z_0^2$.
\end{proposition}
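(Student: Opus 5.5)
The plan is to group the all-minors sum in Eq.~\eqref{eq:F-definition} by row set, and to recognize each group as the collision probability of a single Slater determinant. Lemma~\ref{lem:Slater-bound} then bounds each group, and the resulting bound is rewritten using only the principal submatrix $H_I$.

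\emph{Step 1 (one row set at a time).} Fix $I\subseteq[2\ell]$ with $|I|=k$, and consider the $k\times 2\ell$ row block $Z_{I,\cdot}$. Since $Z$ is Hermitian, its Gram matrix is
\begin{equation*}
G:=Z_{I,\cdot}Z_{I,\cdot}^\dagger=(ZZ^\dagger)_I=(Z^2)_I=H_I .
\end{equation*}
Because $H$ is positive definite (we are in the interior, $\nu_j>0$), $G$ is invertible. I will therefore write the polar form $Z_{I,\cdot}=G^{1/2}Q^\dagger$ with $Q\in\mathbb{C}^{2\ell\times k}$ an isometry. Then for every $J$ with $|J|=k$,
\begin{equation*}
\det Z_{I,J}=\det(G)^{1/2}\,\overline{\det Q_J},
\end{equation*}
so that
\begin{equation*}
\sum_J\left|\det Z_{I,J}\right|^4=\det(H_I)^2\,\mathcal{C}(Q).
\end{equation*}
Lemma~\ref{lem:Slater-bound} then gives
\begin{equation*}
\sum_J\left|\det Z_{I,J}\right|^4\le 2^{-k}\det(H_I)^2\,\mathcal{D}(\Sigma),\qquad \Sigma=Q^{\trans}Q .
\end{equation*}

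\emph{Step 2 (evaluate the transpose overlap).} Here I use complex orthogonality. Since $Z=OZ_0O^{\trans}$ and $Z_0Z_0^{\trans}=\mathbb{I}$, we have $ZZ^{\trans}=\mathbb{I}$, hence $Z_{I,\cdot}Z_{I,\cdot}^{\trans}=\mathbb{I}_k$. Substituting the polar form gives
\begin{equation*}
G^{1/2}\,Q^\dagger\overline{Q}\,(G^{1/2})^{\trans}=\mathbb{I}_k .
\end{equation*}
Using $G^{\trans}=\overline G$, this fixes $\overline{\Sigma}=Q^\dagger\overline Q=G^{-1/2}\,\overline{G}^{-1/2}$. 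Consequently $\Sigma^\dagger\Sigma$ is similar to $G^{-1}\overline{G}^{-1}$, and
\begin{equation*}
\det(H_I)^2\,\mathcal{D}(\Sigma)=\det(G\overline G)\det\!\left(\mathbb{I}+G^{-1}\overline G^{-1}\right)=\det\!\left(\mathbb{I}+H_I\overline{H_I}\right),
\end{equation*}
where $\det G=\det\overline G>0$ and the factors are reordered inside the determinant by similarity. Summing over $I$ (the empty set contributes $1$) yields $\mathsf{F}(Z)\le\mathsf{R}(H)$.

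\emph{Step 3 (equality at $Z_0$).} Both $\mathsf{F}$ and $\mathsf{R}$ factorize over the $2\times2$ blocks of $Z_0$, because principal minors and all minors of a direct sum split. It therefore suffices to check a single block. There I will compare Eq.~\eqref{eq:F_Z_0} with a direct evaluation of the three classes $I=\emptyset$, $|I|=1$ and $|I|=2$ of $\mathsf{R}$ for $H_{0,j}=Z_{0,j}^2=\big((1+\lambda^2)\mathbb{I}+2i\lambda J\big)/\nu^2$. Equivalently, I can note that for $k=2$ the isometry $Q$ is unitary, so the base case of Lemma~\ref{lem:Slater-bound} is tight, and for $k=1$ the bound is checked by hand.

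The main obstacle is Step~2. Transposes, conjugates and square roots must be tracked so that $\Sigma^\dagger\Sigma$ comes out exactly similar to $(\overline{G}G)^{-1}$. The non-commuting factors $G$ and $\overline G$ are where an error would enter, so I would verify that step first on the canonical block before trusting the general identity.
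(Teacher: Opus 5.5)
Your proposal is correct and follows the paper's proof essentially step for step. The paper also applies Lemma~\ref{lem:Slater-bound} to the normalized block isometry $ZE_SH_S^{-1/2}$, which by Hermiticity of $Z$ is exactly your $Q$; it uses complex orthogonality to obtain $\mathcal{D}(\Sigma)=\det(H_S)^{-2}\det(\mathbb{I}+H_S\overline{H_S})$, and it verifies equality block by block on the canonical $2\times2$ blocks, as you plan (your alternative check via tightness of the Slater bound at $k=2$ plus a direct $k=1$ computation also works on those blocks).
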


\begin{proof}

Fix $S\subseteq\{1,\dots,2\ell\}$ with $|S|=k$, let $E_S$ be the $2\ell\times k$ matrix whose columns are the canonical basis vectors labeled by $S$, and let $G_S=E_S^\dagger Z^\dagger ZE_S=H_S$ be the corresponding matrix. Then $Q_S=ZE_SG_S^{-1/2}$ is an isometry, i.e. exactly the object to which Lemma~\ref{lem:Slater-bound} applies.
Its collision probability reproduces, up to normalization, the row sums of $\mathsf{F}$: the rows of $Q_S$ labeled by a set $J$ with $|J|=k$ form the square matrix $Z_{J,S}G_S^{-1/2}$, so the probabilities of Lemma~\ref{lem:Slater-bound} are $p_J=|\det Z_{J,S}|^2/\det G_S$ and $\mathcal{C}(Q_S)=\det(G_S)^{-2}\sum_{|J|=k}|\det Z_{J,S}|^4$; moreover, $Z=Z^\dagger$ gives $\det Z_{J,S}=\overline{\det Z_{S,J}}$, so $|\det Z_{J,S}|=|\det Z_{S,J}|$ and the sum can equivalently be taken over column sets at fixed row set $S$. We next compute the transpose overlap. Complex orthogonality, $Z^{\trans}Z=\mathbb{I}_{2\ell}$, gives $E_S^{\trans}Z^{\trans}ZE_S=E_S^{\trans}E_S=\mathbb{I}_k$, so that $\Sigma=Q_S^{\trans}Q_S=G_S^{-\trans/2}G_S^{-1/2}$. Since $G_S$ is positive Hermitian, $\Sigma^\dagger\Sigma=G_S^{-1/2}G_S^{-\trans}G_S^{-1/2}$, and
\begin{equation}
\mathbb{I}_k+\Sigma^\dagger\Sigma=G_S^{-1/2}\left(G_S+G_S^{-\trans}\right)G_S^{-1/2}\;.
\end{equation}%
Factoring $\det(G_S+G_S^{-\trans})=\det(G_S^{-\trans})\det(\mathbb{I}_k+G_S^{\trans}G_S)$ and using $G_S^{\trans}=\overline{H_S}$ together with $\det(\mathbb{I}_k+\overline{H_S}H_S)=\det(\mathbb{I}_k+H_S\overline{H_S})$, we obtain $\mathcal{D}(\Sigma)=\det(G_S)^{-2}\det(\mathbb{I}_k+H_S\overline{H_S})$. Applying the bound $\mathcal{C}(Q_S)\le2^{-k}\mathcal{D}(\Sigma)$ of Eq.~\eqref{eq:Slater-bound}, the factor $\det(G_S)^{-2}$ appears on both sides and cancels, leaving
\begin{equation}
    \sum_{|J|=k}\left|\det Z_{S,J}\right|^4\le 2^{-k}\det\!\left(\mathbb{I}_k+H_S\overline{H_S}\right).
\end{equation}
Summing over all $S$ proves Eq.~\eqref{eq:R-definition}.

At the canonical point, it suffices to check equality block by block, since both sides multiply over mutually orthogonal blocks. For one canonical $2\times2$ block, one can directly compute
\begin{equation}
    H_0= \left(
\begin{array}{cc}
 \frac{\lambda ^2+1}{1-\lambda ^2} & -\frac{2 i \lambda }{\lambda ^2-1} \\
 \frac{2 i \lambda }{\lambda ^2-1} & \frac{\lambda ^2+1}{1-\lambda ^2} \\
\end{array}
\right)
\end{equation}
from which follows
\begin{equation}
\begin{split}
    \mathsf{R}(H_0)&= \frac{1}{4} \det(\mathbb{I}_2+ H_0\overline{H_0}) + \frac{1}{2} (1+ H_0|_{11} \overline{H_0} |_{11} ) \\
    &+ \frac{1}{2} (1+ H_0|_{22} \overline{H_0} |_{22} ) + 1= \frac{4 \left(\lambda ^4-\lambda ^2+1\right)}{\left(\lambda ^2-1\right)^2}\,.
\end{split}
\end{equation}
This matches exactly the single-block value of $\mathsf{F}$ in Eq.~\eqref{eq:F_Z_0}.
\end{proof}
\subsubsection{Principal-minor and Cayley transforms}
\label{app:proof:cayley}

We are left with the maximization of the residual $\mathsf{R}(H)$ over the orbit $H=OH_0O^{\trans}$. In this subsection we rewrite it exactly --- no inequality is involved --- as an unweighted sum of squared principal minors of a single Hermitian matrix $\mathcal{A}$, at the price of a prefactor that will turn out to be constant along the orbit (Sec.~\ref{app:proof:graph}). The only tool is, once more, the average over independent random signs, used twice.
For notational convenience, let us fix $m=2\ell$ in the following. 
\begin{lemma}[Orthogonal principal-minor transform]
\label{lem:PMT}
Let $H\in\mathbb{C}^{m\times m}$ satisfy $HH^{\trans}=\mathbb{I}_m$, let $t\ge0$ and let $a,b\ge0$ obey $a+b=1+2t$ and $ab=t$. Then
\begin{equation}
    \sum_{I\subseteq[m]}t^{|I|}\det\!\left(\mathbb{I}+H_IH_I^{\trans}\right)=\sum_{S\subseteq[m]}a^{|S|}b^{m-|S|}\det(H_S)^2\;.
    \label{eq:PMT}
\end{equation}
\end{lemma}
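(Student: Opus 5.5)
Both sides of Eq.~\eqref{eq:PMT} are polynomials in $t$, or equivalently in $(a,b)$ with $ab=t$, $a+b=1+2t$. I will expand each side in a common basis of principal-minor sums of $H$ and compare coefficients.

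\textbf{Left-hand side.} First expand $\det(\mathbb{I}+H_IH_I^{\trans})$ by Cauchy--Binet, as in Eq.~\eqref{eq:Dfunctional} but with transpose instead of adjoint:
\begin{equation}
\det(\mathbb{I}+H_IH_I^{\trans})=\sum_{K,L\subseteq I,\,|K|=|L|}\det H_{K,L}\det H_{K,L}=\sum_{K,L\subseteq I}\det(H_{K,L})^2 .
\end{equation}
This is a sum of squared, not modulus-squared, minors. The left side therefore becomes $\sum_{K,L}\det(H_{K,L})^2\sum_{I\supseteq K\cup L}t^{|I|}$, and the inner sum equals $t^{|K\cup L|}(1+t)^{m-|K\cup L|}$.

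\textbf{Right-hand side.} I rewrite $a^{|S|}b^{m-|S|}$ as a sign average, following the ``random signs'' idea announced for this subsection. Set $a=\tfrac{1}{2}(1+2t)+c$ and $b=\tfrac{1}{2}(1+2t)-c$. The constraint $ab=t$ fixes $c^2=\tfrac14(1+2t)^2-t=\tfrac14(1+4t^2)$.

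The key step is to use the constraint $HH^{\trans}=\mathbb{I}$ to reduce off-principal squared minors to principal ones. By Jacobi's complementary-minor identity for complex-orthogonal matrices, $\det H_{K,L}=\pm\det H\,\det H_{\bar L,\bar K}$, where $\bar K$ denotes the complement. More useful is the Cauchy--Binet identity $\sum_L\det H_{K,L}\det H_{K',L}=\delta_{KK'}$ that follows from $HH^{\trans}=\mathbb{I}$. My plan is to prove the identity by induction on $m$: delete one coordinate, mirroring the deletion recursion of Lemma~\ref{lem:Slater-bound}, and show that both sides obey the same two-term recursion. Concretely, I would take the Schur complement with respect to the last index, which for complex-orthogonal $H$ again yields a complex-orthogonal matrix of size $m-1$, up to the block structure.

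\textbf{Alternative route.} Use generating functions. Write the left side as $\mathbb{E}$ over Gaussian Grassmann or sign variables of $\det(\mathbb{I}+D_\epsilon H D_\epsilon H^{\trans})$, with $D_\epsilon$ diagonal. The determinant $\det(xD+H)^{\dots}$ identity $\sum_S a^{|S|}b^{m-|S|}\det(H_S)^2=\mathbb{E}_{\epsilon,\epsilon'}\det(\dots)$ then arises from $\det(H_S)^2=\det H_S\det H_S^{\trans}$ together with the principal-minor expansion $\det(D+H)=\sum_S\det H_S\prod_{i\notin S}D_i$. Averaging $\det(D_\epsilon+H)\det(D_{\epsilon'}+H^{\trans})$ over independent signs produces cross terms $\det H_S\det H_{S'}$. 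These have to be recombined, using $\det(D+H)\det(D'+H^{\trans})=\det(DD'+DH^{\trans}+HD'+\mathbb{I})$ (because $HH^{\trans}=\mathbb{I}$), into the left-hand form $\det(\mathbb{I}+H_IH_I^{\trans})$.

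\textbf{Main obstacle.} I expect the hardest part to be choosing the diagonal weights so that, after the sign averages, the diagonal parts $DD'$ reduce to a restriction to $I$ with weight $t^{|I|}$ while the off-diagonal terms average to $a^{|S|}b^{m-|S|}$. The first attempt will be $D_i=\sqrt{a}\,\epsilon_i$ and $D'_i=\sqrt{b}\,\epsilon_i$ with a shared sign; the relations $a+b=1+2t$ and $ab=t$ are exactly what should make the coefficients match. If that fails, I fall back to the single-index induction on $m$.
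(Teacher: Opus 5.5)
\textbf{There is a genuine gap: the proposal never establishes the right-hand side, which is where all the content of the lemma lies.}

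Your expansion of the left-hand side is correct. It is exactly the intermediate form the paper reaches, $\sum_{|K|=|L|}t^{|K\cup L|}(1+t)^{m-|K\cup L|}\det(H_{K,L})^2$. However, that step never uses $HH^{\trans}=\mathbb{I}$. The lemma's content is showing that the weighted principal-minor sum on the right collapses to this same all-minors expression, and your proposal does not supply that step. Neither of your concrete plans works as stated:
\begin{itemize}
\item \emph{Your ``first attempt'' fails.} With $D=\sqrt a\,E$, $D'=\sqrt b\,E$ and shared signs, the average is $\mathbb{E}\det(D+H)\det(D'+H^{\trans})=\sum_S t^{(m-|S|)/2}\det(H_S)^2$. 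Here $a$ and $b$ enter only through $ab$, so the weight $a^{|S|}b^{m-|S|}$ can never appear. Already for $m=1$, $H=(1)$, you get $1+\sqrt t$ instead of $a+b=1+2t$. Likewise the ``diagonal part'' $DD'=\sqrt t\,\mathbb{I}$ produces no restriction to subsets.
\item \emph{The fallback induction rests on a false claim.} A Schur complement of a complex-orthogonal matrix is not complex-orthogonal. For a $2\times 2$ rotation by $\theta$ it is the $1\times1$ matrix $1/\cos\theta$. So no deletion recursion of the type in Lemma~\ref{lem:Slater-bound} is available off the shelf.
\end{itemize}

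\textbf{The missing idea.} Put both weights into a single determinant and square it: $\sum_S a^{|S|}b^{m-|S|}\det(H_S)^2=\mathbb{E}_D\det(\sqrt b\,\mathbb{I}+\sqrt a\,DH)^2$. Write the square as $\det(MM^{\trans})$ and use $HH^{\trans}=\mathbb{I}$, $D^2=\mathbb{I}$. This gives $\det\bigl((a+b)\mathbb{I}+\sqrt{ab}\,(DH+H^{\trans}D)\bigr)=\det(t\,\mathbb{I}+XX^{\trans})$ with $X=\mathbb{I}+\sqrt t\,DH$. This is where $a+b=1+2t$ and $ab=t$ enter, and $HH^{\trans}=\mathbb{I}$ is used a second time inside $XX^{\trans}$.

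Cauchy--Binet then gives $\sum_{|I|=|J|}t^{m-|I|}\det(X_{I,J})^2$. Expand $\det X_{I,J}$ over the set $Q'\subseteq I\cap J$ of entries supplied by the identity. A second sign average kills all cross terms and leaves $\sum_{Q'}t^{|I|-|Q'|}\det(H_{I\setminus Q',J\setminus Q'})^2$. Resumming over $Q'$ reproduces your left-hand expression.

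This is the paper's proof, and it is also exactly what your ``recombination'' step needs. With the correct choice, $\det(\sqrt b\,E+\sqrt a\,H)\det(\sqrt b\,E+\sqrt a\,H^{\trans})=\det\bigl((a+b)\mathbb{I}+\sqrt t\,(EH^{\trans}+HE)\bigr)=\det(t\,\mathbb{I}+YY^{\trans})$ with $Y=\mathbb{I}+\sqrt t\,HE$. That factorization is precisely what resolves the obstacle you identified.

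Note also the direction of the argument. One does not reduce off-principal minors to principal ones; instead the principal-minor side is expanded into all minors.
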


\begin{proof}
Let $D=\mathrm{diag}(\varepsilon_1,\dots,\varepsilon_m)$, where $\varepsilon_i=\pm1$ are independent random signs with uniform distribution, so that the average of any nonempty product of distinct signs vanishes; we write $\varepsilon_S=\prod_{i\in S}\varepsilon_i$.
We recall the principal minor expansion
\begin{equation}
    \det\!\left(\sqrt b\,\mathbb{I}+\sqrt a\,DH\right)= \sum_{S\subseteq[m]}a^{|S|/2}b^{(m-|S|)/2} \varepsilon_S \det(H_S)\,.
\end{equation}
Thus, upon average over the signs, the right-hand side can be written
\begin{equation}
    \sum_{S\subseteq[m]}a^{|S|}b^{m-|S|}\det(H_S)^2=\mathbb{E}_D\det\!\left(\sqrt b\,\mathbb{I}+\sqrt a\,DH\right)^2 .
\end{equation}
Because $HH^{\trans}=\mathbb{I}$ and $D^2=\mathbb{I}$, we have
\begin{equation}
    \begin{split}
      &\left(\sqrt b\,\mathbb{I}+\sqrt a\,DH\right)   \left(\sqrt b\,\mathbb{I}+\sqrt a\,DH\right)^\trans\\
&=
b\,\mathbb{I}
+\sqrt{ab}\left(DH+H^{\trans}D\right)
+a\,DHH^{\trans}D \\
&=
(a+b)\mathbb{I}
+\sqrt{ab}\left(DH+H^{\trans}D\right),
    \end{split}
\end{equation}
so the determinant inside the expectation equals
\begin{equation}
\begin{split}
    &\det\!\left((a+b)\mathbb{I}+\sqrt{ab}\,(H^{\trans}D+DH)\right)\\
    &\qquad\qquad=\det\!\left(t\mathbb{I}+XX^{\trans}\right),\quad X=\mathbb{I}+\sqrt t\,DH\;,
\end{split}
\end{equation}
with $ab=t$ and $a+b=1+2t$.
Using the Cauchy--Binet formula, $\mathbb{E}_D\det(t\mathbb{I}+XX^{\trans})=\sum_{|I|=|J|}t^{m-|I|}\mathbb{E}_D\det(X_{I,J})^2$. Since $X_{ij}=\delta_{ij}+ \sqrt{t}\, \varepsilon_i H_{ij}$, expanding $\det(X_{I,J})$ over the set $Q'\subseteq I\cap J$ of entries supplied by the identity gives
\begin{equation}
    \det(X_{I,J})= \sum_{Q' \subseteq I \cap J} \pm\,t^{|I \setminus Q'|/2} \big(\prod_{i\in I\setminus Q'}\varepsilon_i\big)\det(H_{I\setminus Q',J\setminus Q'})\,,
\end{equation} 
up to $Q'$-dependent signs fixed by the row and column orderings, which are immaterial in what follows. Upon squaring, the sign  averaging kills the cross terms from two distinct sets $Q'_1\neq Q'_2$ --- their sign products run over the nonempty symmetric difference $Q'_1\,\triangle\,Q'_2$ --- while the $Q'$-dependent signs square to one, giving
\begin{equation}
    \mathbb{E}_D\det(X_{I,J})^2=\sum_{Q'\subseteq I\cap J}t^{|I|-|Q'|}\det(H_{I\setminus Q',J\setminus Q'})^2 .
\end{equation} 
Hence, changing variables to $I'=I\setminus Q'$ and $J'=J\setminus Q'$, and setting $R'=[m]\setminus (I' \cup J')$ and $L'=R' \setminus Q'$ (primes denote auxiliary index sets),
\begin{equation}
    \begin{split}
        &\sum_{S\subseteq[m]}a^{|S|}b^{m-|S|}\det(H_S)^2\\
        &=\sum_{|I|=|J|} \sum_{Q'\subseteq I\cap J}t^{m-|Q'|}   \det(H_{I\setminus Q',J\setminus Q'})^2\\
        &= \sum_{\substack{I',J'\subseteq[m]\\ |I'|=|J'|}} \sum_{Q'\subseteq[m]\setminus(I'\cup J')}
t^{m-|Q'|}
\det(H_{I',J'})^2
\\
&=\sum_{\substack{I',J'\subseteq[m]\\ |I'|=|J'|}} t^{|I'\cup J'|}\sum_{Q'\subseteq R'}
t^{|R'|-|Q'|}
\det(H_{I',J'})^2
\\
&=\sum_{\substack{I',J'\subseteq[m]\\ |I'|=|J'|}} t^{|I'\cup J'|}\sum_{L' \subseteq R'}
t^{|L'|}
\det(H_{I',J'})^2
\\
&=\sum_{\substack{I',J'\subseteq[m]\\ |I'|=|J'|}} t^{|I'\cup J'|}\sum_{|L'|=0}^{|R'|} \binom{|R'|}{|L'|}
t^{|L'|}
\det(H_{I',J'})^2
\\
&=\sum_{\substack{I',J'\subseteq[m]\\ |I'|=|J'|}} t^{|I'\cup J'|}(1+t)^{|R'|}
\det(H_{I',J'})^2\\
&=\sum_{\substack{I',J'\subseteq[m]\\ |I'|=|J'|}} t^{|I'\cup J'|}(1+t)^{m-|I' \cup J'|}
\det(H_{I',J'})^2 \,.
    \end{split}
\end{equation} 

For the left-hand side of Eq.~\eqref{eq:PMT}, we arrive at the same result:
\begin{equation}
    \begin{split}
        &\sum_{I\subseteq[m]}t^{|I|}\det\!\left(\mathbb{I}+H_IH_I^{\trans}\right)= \sum_{I\subseteq[m]}t^{|I|}\sum_{\substack{I',J'\subseteq I \\ |I'|=|J'|}} \det(H_{I',J'})^2\\
        &=\sum_{\substack{I',J'\subseteq [m] \\ |I'|=|J'|}} \left( \sum_{\substack{I\subseteq [m] \\ I' \cup J' \subseteq I}} t^{|I|} \right) \det(H_{I',J'})^2\\
        &= \sum_{\substack{I',J'\subseteq [m] \\ |I'|=|J'|}} \left( \sum_{K' \subseteq [m] \setminus (I' \cup J')} t^{|I' \cup J'|+|K'|} \right) \det(H_{I',J'})^2\\
        &= \sum_{\substack{I',J'\subseteq[m]\\ |I'|=|J'|}} t^{|I'\cup J'|}(1+t)^{m-|I' \cup J'|}
\det(H_{I',J'})^2\,.
    \end{split}
\end{equation}
The two chains coincide term by term, proving Eq.~\eqref{eq:PMT}.
\end{proof}

We apply Lemma~\ref{lem:PMT} at $t=1/2$, i.e.\ with
\begin{equation}
    a=1+2^{-1/2}\;,\qquad b=1-2^{-1/2}\;,\qquad ab=\tfrac12\;.
    \label{eq:a-b}
\end{equation}
Since $H$ is Hermitian, $\overline{H_I}=H_I^{\trans}$, and Eq.~\eqref{eq:R-definition} becomes (recall that here $m=2\ell$)
\begin{equation}
    \mathsf{R}(H)=\sum_{S\subseteq[2\ell]}a^{|S|}b^{2\ell-|S|}\det(H_S)^2\;.
    \label{eq:R-weighted-principal}
\end{equation}
These weights are precisely the ones for which a second sign transform produces a simple Cayley numerator.

\begin{proposition}[Cayley reduction]
\label{prop:Cayley}
Write
\begin{equation}
\begin{split}
    H&=(\mathbb{I}_{2\ell}+i\Omega)(\mathbb{I}_{2\ell}-i\Omega)^{-1}\;,\\ \Omega&=O\Big(\bigoplus_{j=1}^{\ell}\lambda_jJ\Big)O^{\trans}\;,
    \label{eq:H-Cayley}
\end{split}
\end{equation}
and set $\Delta_\Omega=\det(\mathbb{I}_{2\ell}-i\Omega)>0$ and $\mathcal{A}=\mathbb{I}_{2\ell}+i\sqrt2\,\Omega$. Then $\mathcal{A}$ is Hermitian and
\begin{equation}
    \mathsf{R}(H)=\Delta_\Omega^{-2}\sum_{S\subseteq[2\ell]}\det(\mathcal{A}_S)^2\;.
    \label{eq:Cayley-reduction}
\end{equation}
\end{proposition}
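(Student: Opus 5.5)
The plan is to reuse, once more, the random-sign representation from the proof of Lemma~\ref{lem:PMT}, with $H$ in Cayley form, so that the determinant inside the average becomes a principal-minor expansion of $\Omega$. That expansion is then matched index by index against the expansion of $\sum_S\det(\mathcal{A}_S)^2$.

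\emph{Preliminaries.} First I verify Eq.~\eqref{eq:H-Cayley}. Since $H=Z^2=O Z_0^2O^{\trans}$, it suffices to work blockwise, where $Z_{0,j}^2=e^{2i\vartheta_jJ}$. The matrix $iJ$ is Hermitian with eigenvalues $\pm1$, so on these eigenvectors the Cayley transform $(\mathbb{I}+i\lambda_jJ)(\mathbb{I}-i\lambda_jJ)^{-1}$ has eigenvalues $(1\pm\lambda_j)/(1\mp\lambda_j)=e^{\pm2\vartheta_j}$ when $\tanh\vartheta_j=\lambda_j$. These coincide with the eigenvalues of $e^{2i\vartheta_jJ}$ on the same eigenvectors. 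Next, $\Omega$ is real antisymmetric, so $i\Omega$ is Hermitian and $\mathcal{A}$ is Hermitian. Finally, $\Delta_\Omega=\prod_j(1-\lambda_j^2)>0$ in the interior $\lambda_j<1$.

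\emph{Sign representation.} Starting from Eq.~\eqref{eq:R-weighted-principal} and the sign identity used in Lemma~\ref{lem:PMT}, I write $\mathsf{R}(H)=\mathbb{E}_D\det(\sqrt b\,\mathbb{I}+\sqrt a\,DH)^2$. Factoring out $(\mathbb{I}-i\Omega)^{-1}$ on the right gives
\[
\det(\sqrt b\,\mathbb{I}+\sqrt a\,DH)=\Delta_\Omega^{-1}\det\big(\mathrm{diag}(\alpha)+i\,\mathrm{diag}(\beta)\,\Omega\big),
\]
where $\alpha_i=\sqrt b+\sqrt a\,\varepsilon_i$ and $\beta_i=\sqrt a\,\varepsilon_i-\sqrt b$. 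Set $c=\sqrt a+\sqrt b$ and $s=\sqrt a-\sqrt b$. With $a+b=2$ and $ab=1/2$, this gives $c^2+s^2=4$ and $cs=a-b=\sqrt2$. For $\varepsilon_i=+1$ we have $(\alpha_i,\beta_i)=(c,s)$, and for $\varepsilon_i=-1$ we have $(\alpha_i,\beta_i)=(-s,-c)$.

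\emph{Expansion and matching.} Expanding by rows,
\[
\det\big(\mathrm{diag}(\alpha)+i\,\mathrm{diag}(\beta)\Omega\big)=\sum_{T}\Big(\prod_{i\notin T}\alpha_i\Big)\Big(\prod_{i\in T}\beta_i\Big)\,i^{|T|}\det\Omega_T .
\]
Only even $|T|$ contribute, and each $\det\Omega_T$ is real. I square this and average over the independent signs. The average factorizes over indices, and each index contributes one of three values:
\begin{itemize}
\item an index in neither $T_1$ nor $T_2$ gives $\mathbb{E}[\alpha_i^2]=(c^2+s^2)/2=2$;
\item an index in $T_1\cap T_2$ gives $\mathbb{E}[\beta_i^2]=2$;
\item an index in $T_1\triangle T_2$ gives $\mathbb{E}[\alpha_i\beta_i]=cs=\sqrt2$.
\end{itemize}
Hence $\Delta_\Omega^{2}\,\mathsf{R}(H)=\sum_{T_1,T_2}i^{|T_1|+|T_2|}\,2^{m-|T_1\triangle T_2|}\,(\sqrt2)^{|T_1\triangle T_2|}\det\Omega_{T_1}\det\Omega_{T_2}$.

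On the other side, $\det\mathcal{A}_S=\sum_{T\subseteq S}(i\sqrt2)^{|T|}\det\Omega_T$, so
\[
\sum_S\det(\mathcal{A}_S)^2=\sum_{T_1,T_2}i^{|T_1|+|T_2|}(\sqrt2)^{|T_1|+|T_2|}\,2^{\,m-|T_1\cup T_2|}\det\Omega_{T_1}\det\Omega_{T_2}.
\]
The factor $2^{m-|T_1\cup T_2|}$ counts the free choices of $S\supseteq T_1\cup T_2$. The two weights then agree index by index: indices outside $T_1\cup T_2$ give $2$ in both; indices in $T_1\cap T_2$ give $2$ versus $(\sqrt2)^2$; indices in $T_1\triangle T_2$ give $\sqrt2$ in both. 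This proves Eq.~\eqref{eq:Cayley-reduction}.

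The step needing the most care is the bookkeeping in the sign average. Unlike in Lemma~\ref{lem:PMT}, cross terms with $T_1\neq T_2$ do not vanish; one has to observe that they survive with exactly the weight $cs=\sqrt2$ per index, and that $c^2+s^2=4$ reproduces the factor $2$ per free index. The phases $i^{|T_1|+|T_2|}$ are identical on both sides, so they need no separate treatment.
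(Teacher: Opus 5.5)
Your proof is correct. Its first half is exactly the paper's: the sign-average representation $\mathsf{R}(H)=\mathbb{E}_D\det(\sqrt b\,\mathbb{I}+\sqrt a\,DH)^2$, then pulling out $(\mathbb{I}-i\Omega)^{-1}$ to produce the prefactor $\Delta_\Omega^{-2}$. The final matching step is organized differently.

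\emph{The paper's route.} It never expands in minors of $\Omega$. It observes that row $k$ of $\sqrt b(\mathbb{I}-i\Omega)+\sqrt a D(\mathbb{I}+i\Omega)$ depends only on $\varepsilon_k$. Via the Leibniz formula, for a random matrix with independent rows, $\mathbb{E}\det(M)^2$ depends only on the row second moments $\mathbb{E}[r_k^{\trans}r_k]$. It then checks that these coincide with the row second moments of the auxiliary ensemble $\mathbb{I}+D\mathcal{A}$. The expected squared determinant of that ensemble is $\sum_S\det(\mathcal{A}_S)^2$, by the same sign average as in Lemma~\ref{lem:PMT}.

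\emph{Your route.} You expand both sides as double sums over pairs $(T_1,T_2)$ of principal minors of $\Omega$ and match the weights index by index. The superset count $2^{m-|T_1\cup T_2|}$ plays the role of the auxiliary ensemble.

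The two arguments encode the same fact. Your weights $2$, $\sqrt2$, $2$ for indices outside $T_1\cup T_2$, in $T_1\triangle T_2$, and in $T_1\cap T_2$ are precisely the coefficients of $u^{\trans}u$, $u^{\trans}v+v^{\trans}u$ and $v^{\trans}v$ in the paper's row second moment.

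The paper's route is shorter and more structural. It needs no bookkeeping of $\det\Omega_T$, phases $i^{|T|}$ or supersets, and its moment lemma applies to any independent-row ensemble. Yours is fully explicit and avoids the auxiliary random matrix. It also shows index by index why $a+b=2$ and $a-b=\sqrt2$ produce exactly the coefficient $\sqrt2$ in $\mathcal{A}$. Cross terms survive with weight $cs=\sqrt2$, matched by one factor $i\sqrt2$ per index of $T_1\triangle T_2$. Intersection indices carry $c^2+s^2=4$ halved to $2$, matched by $(\sqrt2)^2$.
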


\begin{proof}
The representation~\eqref{eq:H-Cayley} follows blockwise from $H=OZ_0^2O^{\trans}$ and real orthogonal conjugation: for one block, $(\mathbb{I}_2-i\lambda_jJ)^{-1}=(\mathbb{I}_2+i\lambda_jJ)/\nu_j^2$, so $(\mathbb{I}_2+i\lambda_jJ)(\mathbb{I}_2-i\lambda_jJ)^{-1}=(\mathbb{I}_2+i\lambda_jJ)^2/\nu_j^2=Z_{0,j}^2$, and conjugating by $O$ preserves the Cayley form; since $0\le\lambda_j<1$, $\mathbb{I}_{2\ell}-i\Omega$ is positive Hermitian and $\Delta_\Omega>0$ (the eigenvalues of $i\Omega$ are $\pm\lambda_j$, so those of $\mathbb{I}_{2\ell}-i\Omega$ are $1\mp\lambda_j>0$); moreover $\Omega$ is real antisymmetric, so $i\Omega$, and with it $\mathcal{A}$, is Hermitian. 

Let $D=\operatorname{diag}(\varepsilon_1,\ldots,\varepsilon_{2\ell})$,
where the $\varepsilon_k$ are independent uniform signs. By Eq.~\eqref{eq:R-weighted-principal} and the sign-average identity established at the beginning of the proof of Lemma~\ref{lem:PMT}, we have
\begin{equation}
\mathsf{R}(H)
=
\mathbb{E}_D
\det\!\left(\sqrt{b}\,\mathbb{I}_{2\ell}+\sqrt{a}\,DH\right)^2\,,
\label{eq:R-sign-average}
\end{equation}
and using the Cayley representation of $H$, we have
\begin{equation}
    \begin{split}
\sqrt{b}\,&\mathbb{I}_{2\ell}+\sqrt{a}\,DH
=\\&
\left[
\sqrt{b}\,(\mathbb{I}_{2\ell}-i\Omega)
+\sqrt{a}\,D(\mathbb{I}_{2\ell}+i\Omega)
\right]
(\mathbb{I}_{2\ell}-i\Omega)^{-1}.
\end{split}
\end{equation}
Taking determinants we obtain
\begin{equation}
\mathsf{R}(H)
=
\Delta_\Omega^{-2}
\mathbb{E}_D
\det\!\left(
\sqrt{b}\,(\mathbb{I}_{2\ell}-i\Omega)
+\sqrt{a}\,D(\mathbb{I}_{2\ell}+i\Omega)
\right)^2.
\label{eq:R-Cayley-average}
\end{equation}

We now compare the random matrix inside this expectation with
$\mathbb{I}_{2\ell}+D\mathcal{A}$. We use the elementary fact that, for
a random matrix with independent rows, the expected squared
determinant depends only on the second moments of its individual
rows. Indeed, expanding the determinants with the Leibniz formula, we get
\begin{equation}
\begin{split}
    \mathbb{E}\det(M)^2
&=
\mathbb{E}\!\left[\left(\sum_{\sigma\in\mathfrak{S}_{2\ell}}
\mathrm{sgn}(\sigma)
\prod_{k=1}^{2\ell}
M_{k,\sigma(k)}
\right)^2\right]\;,\\
&=\sum_{\sigma,\tau\in \mathfrak{S}_{2\ell}} \mathrm{sgn}(\sigma) \mathrm{sgn}(\tau)
\prod_{k,j=1}^{2\ell}
\mathbb{E}\!\left[M_{k,\sigma(k)}M_{j,\tau(j)}\right]\\
&=\sum_{\sigma,\tau\in \mathfrak{S}_{2\ell}} \mathrm{sgn}(\sigma) \mathrm{sgn}(\tau)
\prod_{k=1}^{2\ell}
\mathbb{E}\!\left[M_{k,\sigma(k)}M_{k,\tau(k)}\right]
\end{split}
\end{equation}
where $\mathrm{sgn}(\sigma)$ denotes the sign of the permutation $\sigma$ and $\mathfrak{S}_{2\ell}$ the permutation group over $2\ell$ elements and in the last step we used the statistical independence of the rows.

To compare the row second moments, fix $k$, let $u=e_k^{\trans}$
be row $k$ of $\mathbb{I}_{2\ell}$, and let $v=e_k^\trans\cdot(i\Omega)$ be
row $k$ of $i\Omega$. Row $k$ of the random matrix in
Eq.~\eqref{eq:R-Cayley-average} is $\sqrt{b}\,(u-v)+\varepsilon_k\sqrt{a}\,(u+v)$.
Its second moment is therefore
\begin{equation}
    b(u-v)^{\trans}(u-v)+a(u+v)^{\trans}(u+v).
\end{equation}
Using $a+b=2,\, a-b=\sqrt{2}$, we find
\begin{equation}
    \begin{aligned}
&b(u-v)^{\trans}(u-v)+a(u+v)^{\trans}(u+v)\\
&\qquad=
2u^{\trans}u
+\sqrt{2}\left(u^{\trans}v+v^{\trans}u\right)
+2v^{\trans}v\\
&\qquad=
u^{\trans}u+
(u+\sqrt{2}v)^{\trans}(u+\sqrt{2}v).
\end{aligned}
\end{equation}
On the other hand, since
$\mathcal{A}=\mathbb{I}_{2\ell}+i\sqrt{2}\,\Omega$, row $k$ of
$\mathbb{I}_{2\ell}+D\mathcal{A}$ is $u+\varepsilon_k(u+\sqrt{2}v)$.
Its second moment is precisely
\begin{equation}
    u^{\trans}u+(u+\sqrt{2}v)^{\trans}(u+\sqrt{2}v).
\end{equation}
Thus the corresponding rows of the two random matrices have the
same second moments. Their rows are independent because row $k$
depends only on $\varepsilon_k$. It follows that
\begin{equation}
    \begin{aligned}
&\mathbb{E}_D
\det\!\left(
\sqrt{b}\,(\mathbb{I}_{2\ell}-i\Omega)
+\sqrt{a}\,D(\mathbb{I}_{2\ell}+i\Omega)
\right)^2\\
&\qquad=
\mathbb{E}_D\det(\mathbb{I}_{2\ell}+D\mathcal{A})^2.
\end{aligned}
\end{equation}
Finally, the principal-minor expansion combined with the prefactor in Eq.~\eqref{eq:R-Cayley-average} gives Eq.~\eqref{eq:Cayley-reduction}.
\end{proof}

\subsubsection{ Reduction to a unitary-matrix functional }
\label{app:proof:graph}

\DIFdelbegin \DIFdel{\rev{The sum in Eq.~\eqref{eq:Cayley-reduction} can be evaluated as the probability of a simple event for another distribution of the same type as in Lemma~\ref{lem:DPP}. Let}
}

It remains to evaluate the sum $\sum_S\det(\mathcal{A}_S)^2$ in Eq.~\eqref{eq:Cayley-reduction}, which is not obviously maximized at the canonical point. To that end, we interpret it as the probability of a simple event for another distribution of the same type as in Lemma~\ref{lem:DPP}: this probabilistic reading lets us change basis freely within pairs of coordinates, and in a suitable basis the sum collapses to a functional of a single unitary matrix. Let
\begin{equation}\mathcal{M}=\begin{pmatrix}\mathbb{I}_{2\ell}\\ \mathcal{A}\end{pmatrix},\quad \mathcal{G}=\mathcal{M}^\dagger \mathcal{M}=\mathbb{I}_{2\ell}+\mathcal{A}^2,\quad Q=\mathcal{M}\mathcal{G}^{-1/2}.
\end{equation}
The columns of $Q$ are orthonormal, since
\begin{equation}
Q^\dagger Q=\mathcal{G}^{-1/2}\left(\mathbb{I}_{2\ell}+\mathcal{A}^2\right)\mathcal{G}^{-1/2}=\mathbb{I}_{2\ell}\;.
\end{equation}
As in Lemma~\ref{lem:DPP}, the rank-$2\ell$ projector $QQ^\dagger$ defines a probability distribution over the $2\ell$-element subsets $\mathcal{S}$ of the $4\ell$ coordinates, with probabilities $p_{\mathcal{S}}=\det[(QQ^\dagger)_{\mathcal{S}}]=|\det Q_{\mathcal{S}}|^2$, where $Q_{\mathcal{S}}$ is the square $2\ell\times2\ell$ submatrix formed by the rows of $Q$ selected by $\mathcal{S}$. Let $E$ be the event that exactly one coordinate is chosen from every pair $\{i,2\ell+i\}$, $i\in\{1,\dots,2\ell\}$: such a configuration is labeled by the set $S\subseteq[2\ell]$ of pairs where the lower coordinate is chosen, the upper one being chosen on $S^c$. By the same factorization used in the proof of Lemma~\ref{lem:all-minors}, the right factor $\mathcal{G}^{-1/2}$ multiplies all rows at once, so it factors out of the square determinant,
\begin{equation}
\det Q_{\mathcal{S}}=\det(\mathcal{G}^{-1/2})\,\det\widetilde Q_{\mathcal{S}}\;,
\end{equation}
where $\widetilde Q_{\mathcal{S}}$ collects the selected rows of $\mathcal{M}$; expanding $\det\widetilde Q_{\mathcal{S}}$ along the selected rows of the identity block, labeled by $S^c$, removes one by one the columns in $S^c$, and what survives is the principal minor of $\mathcal{A}$ on $S$, up to a sign which is immaterial after squaring. Since the principal minors of the Hermitian matrix $\mathcal{A}$ are real,
\begin{equation}
   p_{\mathcal{S}}=\frac{\det(\mathcal{A}_S)^2}{\det \mathcal{G}}\;,
\end{equation} 
and summing over the configurations in $E$,
\begin{equation}
\mathbb{P}(E)=\frac{\sum_{S\subseteq[2\ell]}\det(\mathcal{A}_S)^2}{\det \mathcal{G}}\;.
\label{eq:graph-event-A}
\end{equation}
We now transform $Q$ without changing $\mathbb{P}(E)$. 
We introduce a unitary rotation acting pairwise on the coordinates $\{i,2\ell+i\}$,
\begin{equation}
\mathcal{W}=\frac{1}{\sqrt2}\begin{pmatrix} \mathbb{I}_{2\ell}&i\,\mathbb{I}_{2\ell}\\ \mathbb{I}_{2\ell}&-i\,\mathbb{I}_{2\ell}\end{pmatrix},
\end{equation}
which maps $Q$ to
\begin{equation}
    \mathcal{W}Q=\frac{1}{\sqrt2}\begin{pmatrix} U_+\\ U_-\end{pmatrix}.
\end{equation}
The probability of $E$ is insensitive to the rotation $\mathcal{W}$. This can be seen as standard compariso between distributions of the type of Lemma~\ref{lem:DPP} and free fermions: for an isometry $Q$, the probabilities $p_{\mathcal{S}}$ are the occupation-number probabilities of the Slater-determinant state whose occupied single-particle space is spanned by the columns of $Q$, and a number-conserving single-particle unitary maps free fermion states to free fermion states, transforming $Q$ accordingly. In this language, $E$ is the event that each pair of modes $\{i,2\ell+i\}$ hosts exactly one particle; the subspace of the Fock space with this occupation pattern is invariant under arbitrary unitaries mixing only the two modes of each pair, and $\mathcal{W}$ is precisely of this form, so $\mathbb{P}(E)$ is unchanged.

By Lemma~\ref{lem:Slater-bound} a further invvariant transformation can be introduced.
Define
\begin{equation}
U_{\pm}=(\mathbb{I}_{2\ell}\pm i\mathcal{A})\,\mathcal{G}^{-1/2}\;.
\end{equation}
Because $\mathcal{A}$ is Hermitian and $[\mathcal{A},\mathcal{G}]=0$, both $U_+$ and $U_-$ are unitary\begin{equation}
\begin{split}
U_\pm U_\pm^\dagger&=(\mathbb{I}_{2\ell}\pm i\mathcal{A})\,\mathcal{G}^{-1}\,(\mathbb{I}_{2\ell}\mp i\mathcal{A})\\
    &=(\mathbb{I}_{2\ell}+\mathcal{A}^2)\,\mathcal{G}^{-1}=\mathbb{I}_{2\ell}\;,
\end{split}
\end{equation}
and $U_\pm^\dagger=(\mathbb{I}_{2\ell}\mp i\mathcal{A})\mathcal{G}^{-1/2}=U_\mp$.
By right multiplication with $U_+^\dagger$, gives
\begin{equation}
    \mathcal{W}Q\,U_+^\dagger=\frac{1}{\sqrt2}\begin{pmatrix} U_+U_+^\dagger\\ U_-U_+^\dagger\end{pmatrix}=\frac{1}{\sqrt2}\begin{pmatrix} \mathbb{I}_{2\ell}\\ U\end{pmatrix},
\end{equation}
where
\begin{equation}
U=U_-U_+^\dagger=U_-^2=(\mathbb{I}_{2\ell}-i\mathcal{A})(\mathbb{I}_{2\ell}+i\mathcal{A})^{-1}\in \mathrm{U}(2\ell)\;.
    \label{eq:U-Cayley-A}
\end{equation}

For the final isometry $\frac{1}{\sqrt2}\big(\begin{smallmatrix}\mathbb{I}_{2\ell}\\ U\end{smallmatrix}\big)$, each of the $2\ell$ selected rows carries the normalization $2^{-1/2}$, and the same expansion along the selected identity rows leaves $\det U_S$ up to a sign, so the configuration with the lower coordinate chosen on $S\subseteq[2\ell]$ and the upper one on $S^c$ has probability
\begin{equation}
    p_S=2^{-2\ell}\,|\det U_S|^2\;.
\end{equation}
Defining
\begin{equation}
    \mathsf{L}(U)=\sum_{S\subseteq[2\ell]}|\det U_S|^2\;,
    \label{eq:L-definition}
    \end{equation}
we conclude that $\mathbb{P}(E)=2^{-2\ell}\,\mathsf{L}(U)$, and comparing with Eqs.~\eqref{eq:Cayley-reduction} and~\eqref{eq:graph-event-A},
\begin{equation}
    \mathsf{R}(H)=\frac{\det(\mathbb{I}_{2\ell}+\mathcal{A}^2)}{2^{2\ell}\,\Delta_\Omega^2}\,\mathsf{L}(U)\;.
    \label{eq:R-unitary}
\end{equation}
Finally, write $\Omega=O\Omega_0O^{\trans}$ with $\Omega_0=\bigoplus_{j=1}^{\ell}\lambda_jJ$ the block form of Eq.~\eqref{eq:H-Cayley}, and let $\mathcal{A}_0=\mathbb{I}_{2\ell}+i\sqrt2\,\Omega_0$ and $U_0=(\mathbb{I}_{2\ell}-i\mathcal{A}_0)(\mathbb{I}_{2\ell}+i\mathcal{A}_0)^{-1}$ be the corresponding block representatives. Since $O$ is real orthogonal,
\begin{equation}
    \mathcal{A}=O\mathcal{A}_0O^{\trans}\;,\qquad U=OU_0O^{\trans}\;,
\end{equation}
so both $\det(\mathbb{I}_{2\ell}+\mathcal{A}^2)$ and $\Delta_\Omega=\det(\mathbb{I}_{2\ell}-i\Omega)$ depend only on the spectrum of $\Omega_0$, i.e.\ on the $\lambda_j$: the prefactor in Eq.~\eqref{eq:R-unitary} is constant along the real orthogonal orbit. The remaining task is to show that $\mathsf{L}(U)$ is maximized at its canonical block representative $U_0$.

\subsubsection{ The rank-two bound and the unitary principal-minor inequality}
\label{app:proof:MD2}

Now we introduce a matrix inequality independent of the fermionic application; we will use it below to bound $\mathsf{L}(U)$ through the diagonal entries of $U$ alone. 
For a polynomial $f(z_1, \cdots,z_n)$ the symbol, $[z_1\cdots z_n]f$ indicates the coefficient of the monomial $z_1\cdots z_n$, e.g. if $f(z_1,z_2)=2z_1+z_2+3 z_1z_3$, then $[z_1 \, z_2]f=3$.
The left-hand side of Eq.~\eqref{eq:MD2} below is known in the linear-algebra literature as a mixed discriminant, cf.\ Ref.~\cite{Bapat1989}.

\begin{theorem}[Rank-two determinant inequality]
\label{thm:MD2}
Let $B_1,\dots,B_n\succeq0$ be Hermitian $n\times n$ matrices with $\mathrm{rank}\,B_i\le2$. Then
\begin{equation}
    [z_1\cdots z_n]\det\!\Big(\sum_{i=1}^nz_iB_i\Big)\le\prod_{i=1}^n\|B_i\|_{\mathrm{HS}}\;,
    \label{eq:MD2}
\end{equation}
with no factorial normalization.
\end{theorem}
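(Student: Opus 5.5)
The plan is to reduce the mixed discriminant to an explicit sum over rank-one choices, and then argue by induction on $n$ using multilinearity in each $B_i$. First I would diagonalise each $B_i\succeq0$ of rank at most two as $B_i=a_ia_i^\dagger+b_ib_i^\dagger$, with $a_i\perp b_i$ and $\|a_i\|^2=\mu_i$, $\|b_i\|^2=\nu_i$. Then $\|B_i\|_{\mathrm{HS}}=\sqrt{\mu_i^2+\nu_i^2}$. Multilinearity of the determinant, combined with the Cauchy--Binet formula applied to $\det(\sum_i z_iB_i)=\det(MZM^\dagger)$, should give
\begin{equation}
[z_1\cdots z_n]\det\!\Big(\sum_i z_iB_i\Big)=\sum_{c\in\prod_i\{a_i,b_i\}}\left|\det[c_1,\dots,c_n]\right|^2 ,
\end{equation}
with no factorial. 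Here $M=[a_1,b_1,\dots,a_n,b_n]$ and $Z=\mathrm{diag}(z_1,z_1,\dots,z_n,z_n)$. This identity is the object to bound. As a sanity check, $B_i=\mathbb{I}_2$ embedded in $n=2$ gives $2=\sqrt2\cdot\sqrt2$, so the inequality must be sharp.

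I expect a naive route to lose a factor $n!$, so I would not use it. That route writes the sum as $n!\,\langle\omega|P_{\wedge}(\otimes_iB_i)P_{\wedge}|\omega\rangle$, with $P_{\wedge}$ the antisymmetric projector, and bounds it by the HS norm of $\otimes_iB_i$. The rank-two hypothesis has to enter each step.

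Instead I would use linearity in $B_1$ and write the left-hand side as $\mathrm{Tr}(B_1C_1)$. Here $C_1\succeq0$ is the ``mixed cofactor'' matrix of $B_2,\dots,B_n$, explicitly $C_1=\sum_{c_2,\dots,c_n}g_cg_c^\dagger$, where $g_c$ is the generalized cross product (Hodge dual) of $c_2\wedge\cdots\wedge c_n$. Since $B_1$ has rank at most two, von Neumann's trace inequality gives $\mathrm{Tr}(B_1C_1)\le\|B_1\|_{\mathrm{HS}}\sqrt{\gamma_1^2+\gamma_2^2}$, with $\gamma_1\ge\gamma_2$ the two largest eigenvalues of $C_1$. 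So the induction step reduces to the claim
\begin{equation}
\gamma_1^2+\gamma_2^2\le\prod_{i\ge2}\|B_i\|_{\mathrm{HS}}^2 ,
\end{equation}
a ``two-eigenvalue'' strengthening of the statement for $n-1$ matrices.

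To prove this I would express $C_1$ in fermionic language, following the hint in the Discussion. Let $f^\dagger(v)=\sum_k v_kf_k^\dagger$ create a fermion in mode $v$ of $\mathbb{C}^n$, and write $\mathcal{B}_i=f^\dagger(a_i)f(a_i)+f^\dagger(b_i)f(b_i)$ (sums of commuting-pair occupation operators). Then the compression of $C_1$ to any orthonormal pair $\{e,e'\}$ becomes the expectation of $\prod_{i\ge2}$ of creation-operator bilinears in a state with one hole in $\mathrm{span}(e,e')$. The plan is to show that the $2\times2$ compression $\left(\begin{smallmatrix}\langle e,C_1e\rangle&\langle e,C_1e'\rangle\\ \langle e',C_1e\rangle&\langle e',C_1e'\rangle\end{smallmatrix}\right)$ is itself a mixed discriminant of $n$ matrices on an enlarged space. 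The trick is to adjoin one extra mode, so that $B_1$ is replaced by a rank-two matrix supported on $\{e,e'\}$. With that in hand, its HS norm is controlled by the induction hypothesis applied to rank-two data, which closes the loop. The base case $n=1$ is $\mathrm{Tr}B_1\le\|B_1\|_{\mathrm{HS}}$, and this fails for rank two. I would therefore instead start the induction from the strengthened statement: for $n=1$ the claim $\gamma_1^2+\gamma_2^2\le\cdots$ is trivial since $C_1$ is the scalar $1$. The case $n=2$ can be checked by hand from the cross-term formula $|\det[c_1,c_2]|^2=\|c_1\|^2\|c_2\|^2-|\langle c_1,c_2\rangle|^2$.

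The main obstacle is clearly this two-eigenvalue bound on $C_1$: the crude estimate $\gamma_1^2+\gamma_2^2\le\|C_1\|_{\mathrm{HS}}^2$ reintroduces factorials. What I would try first is to exploit the orthogonality $a_i\perp b_i$ to show that the cross terms between different choices $c\neq c'$ in $\|P C_1P\|_{\mathrm{HS}}^2$ cancel pairwise. Here $P$ is the projector onto the top-two eigenspace. This cancellation is exactly where anticommutation of the fermionic operators should produce the sign needed.
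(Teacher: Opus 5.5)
\textbf{There is a genuine gap, and it sits in what you call the induction step.}

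The bound $\mathrm{Tr}(B_1C_1)\le\|B_1\|_{\mathrm{HS}}\sqrt{\gamma_1^2+\gamma_2^2}$ is correct. But the claim $\gamma_1^2+\gamma_2^2\le\prod_{i\ge2}\|B_i\|_{\mathrm{HS}}^2$ is not a statement about $n-1$ matrices. Since $C_1\succeq0$, the supremum of $\mathrm{Tr}(B_1C_1)$ over positive $B_1$ of rank at most two with $\|B_1\|_{\mathrm{HS}}=1$ is exactly $\sqrt{\gamma_1^2+\gamma_2^2}$. It is attained at $B_1\propto\gamma_1P_1+\gamma_2P_2$, with $P_1,P_2$ the top two eigenprojectors of $C_1$. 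So your claim is equivalent to the theorem itself for $n$ matrices in dimension $n$, and nothing has been reduced.

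Neither of your routes to the claim closes the loop. The extra-mode construction, by your own description, rewrites the $2\times2$ compression $PC_1P$ as a mixed discriminant of $n$ rank-two matrices on a space of dimension at least $n$. A hypothesis about $n-1$ matrices does not apply to that. The hoped-for pairwise cancellation cannot occur either, since $\|PC_1P\|_{\mathrm{HS}}^2=\sum_{c,c'}|g_c^\dagger Pg_{c'}|^2$ is a sum of nonnegative terms.

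What a scalar induction does give is control of the diagonal entries $\langle e,C_1e\rangle$. These are mixed discriminants of the compressions of $B_2,\dots,B_n$ to $e^\perp$. This yields $\gamma_1\le\prod_{i\ge2}\|B_i\|_{\mathrm{HS}}$, and hence a loss of $\mathrm{Tr}B_1/\|B_1\|_{\mathrm{HS}}\le\sqrt2$ per step. (Incidentally, the case $n=1$ cannot fail ``for rank two'': a $1\times1$ matrix has rank at most one.)

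\textbf{The missing idea is an operator-valued induction hypothesis.} The paper works in the fermionic Fock space over $\mathbb{C}^n$ with $c_x=\sum_ix_ic_i^\dagger$. It sets $C_0=|\mathrm{vac}\rangle\langle\mathrm{vac}|$ and $C_j=\Phi_{B_j}(C_{j-1})$, where $\Phi_B(X)=\lambda c_uXc_u^\dagger+\mu c_vXc_v^\dagger$ for $B=\lambda uu^\dagger+\mu vv^\dagger$. It then proves $\|\Phi_B(X)\|_2\le\|B\|_{\mathrm{HS}}\|X\|_2$ for \emph{every} operator $X$.

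Splitting by the occupations of $u$ and $v$, the two branches overlap only in the $(11,11)$ block, which is fed by $X_{01,01}$ and $X_{10,10}$. The resulting cross term $2\chi\lambda\mu\,\mathrm{Re}\langle X_{01,01},X_{10,10}\rangle$ is not cancelled but dominated, via the complete square $\|\mu X_{01,01}-\chi\lambda X_{10,10}\|_2^2\ge0$. After $n$ steps, $C_n$ equals the mixed discriminant times $|\mathrm{full}\rangle\langle\mathrm{full}|$, so its HS norm is the left-hand side of the inequality.

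In your language: after the $n-1$ steps with $B_2,\dots,B_n$, the operator lives in the $(n-1)$-particle sector. That sector is isometric to $\mathbb{C}^n$ by Hodge duality, and there the operator coincides with your $C_1$ up to complex conjugation and a diagonal sign. Hence the argument gives $\|C_1\|_{\mathrm{HS}}\le\prod_{i\ge2}\|B_i\|_{\mathrm{HS}}$. So the ``crude'' estimate $\gamma_1^2+\gamma_2^2\le\|C_1\|_{\mathrm{HS}}^2$ that you discarded loses nothing. Cauchy--Schwarz, $\mathrm{Tr}(B_1C_1)\le\|B_1\|_{\mathrm{HS}}\|C_1\|_{\mathrm{HS}}$, then finishes. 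But this requires carrying the full Hilbert--Schmidt norm of the $k$-particle operator through the induction, and that is the ingredient your proposal lacks.
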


\begin{proof}
The strategy is to realize the coefficient in Eq.~\eqref{eq:MD2} as the norm of an operator built by $n$ successive applications of maps $\Phi_{B_j}$, each of which is a norm contraction.

Consider the Fock space of $n$ fermionic modes, with creation operators $c_i^\dagger$ associated with the canonical basis vectors $e_i$ of $\mathbb{C}^n$, and for $x\in\mathbb{C}^n$ let
\begin{equation}
  c_x=\sum_{i=1}^n x_i\,c_i^\dagger
\end{equation} 
be the operator creating one particle in the single-particle state $x$; it maps the $r$-particle sector into the $(r+1)$-particle one. Given a rank-2 matrix $B=\lambda\,uu^\dagger+\mu\,vv^\dagger$ with $u,v \in \mathbb{C}^n$ orthonormal vectors and $\lambda,\mu\ge0$, we define the map
\begin{equation}
    \Phi_B(X)=\lambda\,c_uXc_u^\dagger+\mu\,c_vXc_v^\dagger\;,
\end{equation}
which raises the particle number by one.
We claim that, for the Hilbert--Schmidt norm $||\cdot||_2$, the following holds
\begin{equation}\|\Phi_B(X)\|_{2}\le(\lambda^2+\mu^2)^{1/2}\|X\|_{2}=\|B\|_{\mathrm{HS}}\|X\|_{2}\;,
    \label{eq:Phi-contraction}
\end{equation}
where the last equality follows from
\begin{equation}
    \|B\|_{\mathrm{HS}}^2=\mathrm{Tr}(B^2)=\lambda^2\,\mathrm{Tr}(uu^\dagger)+\mu^2\,\mathrm{Tr}(vv^\dagger)=\lambda^2+\mu^2\;,\end{equation}
the cross terms vanishing by $u^\dagger v=0$.

To prove the claim, we decompose the Fock space into the four occupation sectors $00,10,01,11$ of the two modes $u,v$, and correspondingly the operator $X$ into blocks $X_{s,s'}$ labeled by pairs of sectors. All four sectors are canonically identified with the Fock space of the remaining $n-2$ modes through the maps $c_u,c_v$; the sign ambiguity of this identification is absorbed in the constant $\chi$ below. Creation by $u$ uses the input sectors $00,01$ and sends them to $10,11$, whereas creation by $v$ uses $00,10$ and sends them to $01,11$: the operator $c_uXc_u^\dagger$ has support on the output blocks $(10,10)$, $(10,11)$, $(11,10)$, $(11,11)$, and $c_vXc_v^\dagger$ on $(01,01)$, $(01,11)$, $(11,01)$, $(11,11)$. The two branches are therefore mutually orthogonal in the trace inner product except in the single common output block $(11,11)$, where the inputs $X_{01,01}$ and $X_{10,10}$ meet with a relative fermionic sign $\chi\in\{\pm1\}$ arising from $c_uc_v=-c_vc_u$.
Thus, 
\begin{equation}
\label{eq:norm_squared}
\begin{aligned}
\|\Phi_B(X)\|_2^2
&=
\lambda^2 \|c_u X c_u^\dagger\|_2^2
+\mu^2 \|c_v X c_v^\dagger\|_2^2
\\
&+2\lambda\mu\,
\operatorname{Re}
\left\langle
c_u X c_u^\dagger,\,
c_v X c_v^\dagger
\right\rangle_{\mathrm{HS}}
\\[1ex]
&=
\lambda^2
(
\|X_{00,00}\|_2^2
+\|X_{00,01}\|_2^2
\\
&+\|X_{01,00}\|_2^2
+\|X_{01,01}\|_2^2 )
\\
&\quad
+\mu^2
(
\|X_{00,00}\|_2^2
+\|X_{00,10}\|_2^2
\\
&+\|X_{10,00}\|_2^2
+\|X_{10,10}\|_2^2 )
\\
&\quad
+2\chi\lambda\mu\,
\operatorname{Re}
\left\langle
X_{01,01},\,X_{10,10}
\right\rangle_{\mathrm{HS}} .
\end{aligned}
\end{equation}

Focussing on the block with the mixed product,
\begin{equation}
\begin{split}
    &\lambda^2\|X_{01,01}\|_2^2+\mu^2\|X_{10,10}\|_2^2\\
    &\qquad +2\chi\lambda\mu\,\mathrm{Re}\langle X_{01,01},X_{10,10}\rangle\\
    &\le(\lambda^2+\mu^2)\left(\|X_{01,01}\|_2^2+\|X_{10,10}\|_2^2\right),
\end{split}\end{equation}
which holds because the right-hand side minus the  left-hand side is a complete square,
\begin{equation}
    \left\|\mu X_{01,01}-\chi\lambda X_{10,10}\right\|_2^2\ge0\;.
\end{equation}
Inserting this majorization in Eq.~\eqref{eq:norm_squared} proves Eq.~\eqref{eq:Phi-contraction}.

Let now $C_0=|\mathrm{vac}\rangle\langle\mathrm{vac}|$ be the vacuum projector, with $\|C_0\|_{2}=1$, and set recursively $C_j=\Phi_{B_j}(C_{j-1})$ for $1\le j\le n$. Iterating Eq.~\eqref{eq:Phi-contraction} $n$ times gives
\begin{equation}
    \|C_n\|_{2}\le\prod_{j=1}^n\|B_j\|_{\mathrm{HS}}\;.
    \label{eq:Cn-bound}
\end{equation}
Choosing spectral decompositions $B_j=\sum_{r=1}^2\lambda_{j,r}\,u_{j,r}u_{j,r}^\dagger$, with orthonormal eigenvectors and $\lambda_{j,r}\ge0$ (allowing zero eigenvalues), the recursion unrolls to
\begin{equation}
\begin{split}
    C_n&=\sum_{r_1,\dots,r_n}\Big(\prod_{j=1}^n\lambda_{j,r_j}\Big)\,|\psi_{\vec r}\rangle\langle\psi_{\vec r}|\;,\\
    |\psi_{\vec r}\rangle&=c_{u_{n,r_n}}\cdots c_{u_{1,r_1}}|\mathrm{vac}\rangle\;.
    \end{split}
\end{equation}
The $n$-particle sector of $n$ modes is one-dimensional, spanned by the fully occupied state $|\mathrm{full}\rangle=c_n^\dagger\cdots c_1^\dagger|\mathrm{vac}\rangle$, and applying $n$ creation operators to the vacuum gives
\begin{equation}
    c_{x_n}\cdots c_{x_1}|\mathrm{vac}\rangle=\pm\det(x_1,\dots,x_n)\,|\mathrm{full}\rangle\;,
\end{equation}
where $\det(x_1,\dots,x_n)$ is the determinant of the matrix with columns $x_1,\dots,x_n$. Hence
\begin{equation}
\begin{split}C_n=\Big[\sum_{r_1,\dots,r_n}&\Big(\prod_{j=1}^n\lambda_{j,r_j}\Big)\left|\det(u_{1,r_1},\dots,u_{n,r_n})\right|^2\Big]\\
    &\times|\mathrm{full}\rangle\langle\mathrm{full}|\;.
\end{split}
\label{eq:Cn-scalar}
\end{equation}
It remains to identify the scalar in brackets with the coefficient in Eq.~\eqref{eq:MD2}. 
Let $V$ be the matrix whose columns are $v_{j,r}:=\sqrt{z_j\lambda_{j,r}}\,u_{j,r}.$
then $VV^\dagger=\sum_{j=1}^n z_jB_j$. By the Cauchy--Binet formula,
\begin{equation}
    \det\!\left(\sum_{j=1}^n z_jB_j\right)
=\det(VV^\dagger)
=\sum_{|S|=n} |\det V_S|^2 .
\end{equation}
The coefficient of $z_1\cdots z_n$ receives contributions precisely from
those subsets $S$ containing one column from each pair $(v_{j,1},v_{j,2})$. Thus, writing $X_{\vec r}:=(u_{1,r_1},\ldots,u_{n,r_n}),
\quad r_j\in\{1,2\}$,

\begin{equation}
    |\det V_{\vec r}|^2
=
\left(\prod_{j=1}^n z_j\lambda_{j,r_j}\right)
|\det X_{\vec r}|^2.
\end{equation}
and summing over $\vec{r}$ we have
\begin{equation}
    [z_1\cdots z_n]\det\!\left(\sum_{j=1}^n z_jB_j\right)
=
\sum_{r_1,\ldots,r_n}
\left(\prod_{j=1}^n\lambda_{j,r_j}\right)
|\det X_{\vec r}|^2.
\end{equation}
Since the Hilbert--Schmidt norm of a nonnegative multiple of the rank-one projector $|\mathrm{full}\rangle\langle\mathrm{full}|$ is the multiple itself, combining with Eq.~\eqref{eq:Cn-bound} proves Eq.~\eqref{eq:MD2}. \end{proof}

\begin{remark}
The rank restriction in Theorem~\ref{thm:MD2} is essential: for $B_i=\mathbb{I}_n/n$ the left-hand side of Eq.~\eqref{eq:MD2} is $n!/n^n$ while the right-hand side is $n^{-n/2}$, the former being already larger at $n=3$ therefore does not extend to positive matrices of arbitrary rank, so that Eq.~\eqref{eq:MD2} does not extend to matrices of arbitrary rank.\end{remark}

We now combine Theorem~\ref{thm:MD2} with a standard singular-value argument: the diagonal of $U$ controls $\mathsf{L}(U)$, and the diagonal is in turn controlled by the matrix $T=(U+U^{\trans})/2$, whose singular values are invariant along the orbit $U\mapsto OUO^{\trans}$. This yields a bound on $\mathsf{L}(U)$ by an orbit-invariant quantity.

\begin{theorem}[Unitary principal-minor inequality]
\label{thm:UL}
For $U\in \mathrm{U}(n)$ let $\mathsf{L}(U)=\sum_{S\subseteq[n]}|\det U_S|^2$ and $T=(U+U^{\trans})/2$. Then
\begin{equation}
    \mathsf{L}(U)^2\le 2^n\det\!\left(\mathbb{I}_n+T^\dagger T\right).
    \label{eq:UL}
\end{equation}
\end{theorem}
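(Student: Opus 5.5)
The plan is to write $\mathsf{L}(U)$ as a mixed discriminant of rank-two positive matrices, apply Theorem~\ref{thm:MD2}, and then pass from the diagonal of $U$ to the orbit invariant $\det(\mathbb{I}_n+T^\dagger T)$ with Hadamard's inequality.

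\emph{Step 1: columns.} For each $i\in[n]$ consider the two column vectors $e_i$ and $Ue_i$, the $i$-th canonical basis vector and the $i$-th column of $U$. For $S\subseteq[n]$, let $X_S$ be the $n\times n$ matrix whose $i$-th column is $Ue_i$ if $i\in S$ and $e_i$ if $i\notin S$. Expanding $\det X_S$ along the identity columns $i\notin S$ deletes the corresponding rows and columns. What survives is the principal minor, so $\det X_S=\det U_S$ (up to a sign that disappears after taking the modulus).

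\emph{Step 2: mixed discriminant.} Put
\begin{equation}
B_i=e_ie_i^\dagger+(Ue_i)(Ue_i)^\dagger ,
\end{equation}
which is Hermitian, positive semidefinite and of rank at most two. Collect the vectors $\sqrt{z_i}\,e_i$ and $\sqrt{z_i}\,Ue_i$ as the columns of an $n\times 2n$ matrix $V$, so that $VV^\dagger=\sum_i z_iB_i$. By Cauchy--Binet, $\det(\sum_i z_iB_i)=\sum_{|\mathcal{S}|=n}|\det V_{\mathcal{S}}|^2$, where $\mathcal{S}$ runs over choices of $n$ columns of $V$. The coefficient of the monomial $z_1\cdots z_n$ collects exactly those choices that take one column from each pair, i.e.\ the matrices $X_S$ of Step~1. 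Hence
\begin{equation}
[z_1\cdots z_n]\det\Big(\sum_{i=1}^n z_iB_i\Big)=\sum_{S\subseteq[n]}|\det X_S|^2=\mathsf{L}(U).
\end{equation}
This is the same identification used at the end of the proof of Theorem~\ref{thm:MD2}, with $\lambda_{j,r}=1$ and non-orthogonal pairs of vectors.

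\emph{Step 3: Hilbert--Schmidt norms.} Theorem~\ref{thm:MD2} needs only positivity and rank at most two, not orthogonality of $e_i$ and $Ue_i$, so it gives $\mathsf{L}(U)\le\prod_i\|B_i\|_{\mathrm{HS}}$. Since both vectors are unit vectors,
\begin{equation}
\|B_i\|_{\mathrm{HS}}^2=\mathrm{Tr}(B_i^2)=1+1+2|e_i^\dagger Ue_i|^2=2\bigl(1+|U_{ii}|^2\bigr).
\end{equation}
Therefore
\begin{equation}
\mathsf{L}(U)^2\le 2^n\prod_{i=1}^n\bigl(1+|U_{ii}|^2\bigr).
\end{equation}

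\emph{Step 4: from the diagonal to $T$.} The diagonal of $T=(U+U^{\trans})/2$ coincides with that of $U$, so $T_{ii}=U_{ii}$. The matrix $\mathbb{I}_n+T^\dagger T$ is positive definite, so Hadamard's inequality bounds its determinant below by the product of its diagonal entries. Each diagonal entry satisfies
\begin{equation}
(\mathbb{I}_n+T^\dagger T)_{ii}=1+\sum_k|T_{ki}|^2\ge 1+|T_{ii}|^2=1+|U_{ii}|^2 .
\end{equation}
Combining, $\prod_i(1+|U_{ii}|^2)\le\det(\mathbb{I}_n+T^\dagger T)$, which together with Step~3 gives Eq.~\eqref{eq:UL}.

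The step I expect to need the most care is Step~2. One has to check the Cauchy--Binet bookkeeping: column sets taking both vectors of some pair carry a factor $z_i^2$ and are excluded from the multilinear coefficient, while column sets missing a pair carry no $z_i$ and are likewise excluded. One also has to confirm that the unrolling in the proof of Theorem~\ref{thm:MD2} goes through when the spectral vectors of $B_i$ are replaced by the non-orthogonal pair $e_i,Ue_i$. If that replacement is problematic, I would first try applying Theorem~\ref{thm:MD2} exactly as stated, directly to the $B_i$ above, since its hypotheses are only positivity and rank.
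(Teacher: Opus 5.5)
Your Steps 1--3 are correct and match the paper's argument. The paper takes $B_i=\tfrac12(e_ie_i^\dagger+v_iv_i^\dagger)$ rather than your unnormalized $B_i$. This is immaterial, because both sides of Eq.~\eqref{eq:MD2} are homogeneous of degree one in each $B_i$, and both versions give $\mathsf{L}(U)^2\le 2^n\prod_i(1+|U_{ii}|^2)$.

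Step 4, however, contains a genuine error. Hadamard's inequality states $\det M\le\prod_i M_{ii}$ for $M\succeq0$, so it bounds the determinant from \emph{above} by the diagonal product. Already $M=\bigl(\begin{smallmatrix}2&1\\1&2\end{smallmatrix}\bigr)$ has $\det M=3<4$. Your two facts are $\prod_i(1+|T_{ii}|^2)\le\prod_i(\mathbb{I}_n+T^\dagger T)_{ii}$ and $\det(\mathbb{I}_n+T^\dagger T)\le\prod_i(\mathbb{I}_n+T^\dagger T)_{ii}$. They bound both quantities by the same diagonal product, so they cannot be chained. Moreover, your Step 4 uses no property of $T$. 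If it were valid, it would give $\prod_i(1+|T_{ii}|^2)\le\det(\mathbb{I}_n+T^\dagger T)$ for every square $T$, which is false. For $T=a\bigl(\begin{smallmatrix}1&1\\1&1\end{smallmatrix}\bigr)$ the left side is $(1+a^2)^2$ and the right side is $1+4a^2$, so the inequality fails once $a^2>2$.

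The missing input is the size of $T$, and the paper supplies it through a majorization argument.
\begin{itemize}
\item Since $T$ is the average of the unitaries $U$ and $U^{\trans}$, $\|T\|\le1$. Hence every $|T_{ii}|$ and every singular value $s_i(T)$ lies in $[0,1]$.
\item The decreasingly ordered $|T_{ii}|$ are weakly majorized by the $s_i(T)$. For any index set $I$ with $|I|=k$, pick a diagonal partial isometry $\Theta_I$ supported on $I$, with phases chosen so that $\mathrm{Tr}(\Theta_I^\dagger T)=\sum_{i\in I}|T_{ii}|$. The Ky Fan variational bound then gives $\sum_{i\in I}|T_{ii}|\le\sum_{j=1}^{k}s_j(T)$.
\item The function $\phi(x)=\log(1+x^2)$ is increasing and convex on $[0,1]$, since $\phi''(x)=2(1-x^2)/(1+x^2)^2\ge0$ there. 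The Tomi\'c--Weyl inequality therefore gives $\sum_i\phi(|T_{ii}|)\le\sum_i\phi(s_i(T))$, which is $\prod_i(1+|U_{ii}|^2)\le\det(\mathbb{I}_n+T^\dagger T)$.
\end{itemize}
Replacing your Step 4 with this argument completes the proof. The norm bound is what excludes examples like the one above, whose norm is $2|a|>2\sqrt2$.
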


\begin{proof}
Let $e_i$ be the standard basis, put $v_i=Ue_i$ and define
\begin{equation}
    B_i=\tfrac12\left(e_ie_i^\dagger+v_iv_i^\dagger\right),
\end{equation}
 which are positive semidefinite of rank at most two, chosen so that the mixed coefficient below reproduces exactly $2^{-n}\mathsf{L}(U)$. Indeed, expanding the determinant by multilinearity similarly to the previous proof, the coefficient of $z_1\cdots z_n$ selects one of the two rank-1 terms of $B_i$, the choice being recorded by the set $S$ of indices where $\tfrac12\,v_iv_i^\dagger$ is taken:
\begin{equation}
\begin{split}
    &[z_1\cdots z_n]\det\!\Big(\sum_iz_iB_i\Big)\\ &\qquad =2^{-n}\!\!\sum_{S\subseteq[n]}\left|\det(e_i:i\notin S;\,v_i:i\in S)\right|^2 .
\end{split}
\end{equation}In the determinant labeled by $S$, each column $e_i$ with $i\notin S$ can be expanded away, removing row $i$ together with it; the surviving matrix has entries $(v_i)_j=u_{ji}$ with $j,i\in S$, where $u_{ab}$ denote the entries of $U$, so that, up to a sign,
\begin{equation}
    \det\left(e_i:i\notin S;\,v_i:i\in S\right)=\pm\det U_S\;,
\end{equation}
and the coefficient above equals $2^{-n}\mathsf{L}(U)$. On the other hand,
\begin{equation}
\begin{split}
    \|B_i\|^2_{\mathrm{HS}}&=\mathrm{Tr}(B_i^2)=\tfrac14\left(1+1+2\,|e_i^\dagger v_i|^2\right)\\
    &=\frac{1+|u_{ii}|^2}{2}\;,
\end{split}
\end{equation}
since $e_i^\dagger v_i=u_{ii}$. Theorem~\ref{thm:MD2} therefore gives
\begin{equation}
    \mathsf{L}(U)\le 2^{n/2}\prod_{i=1}^n\left(1+|u_{ii}|^2\right)^{1/2} .
    \label{eq:UL-first}
\end{equation}
Since $t_{ii}=u_{ii}$, it remains to compare the diagonal of $T$ with its singular values $s_1(T)\ge\dots\ge s_n(T)$. For any set $I$ of $k$ indices, define the diagonal rank-$k$ partial isometry
\begin{equation}
    \Theta_I=\sum_{i\in I}e^{i\theta_i}\,e_ie_i^\dagger\;,
\end{equation}
with the phases $\theta_i$ chosen so that $e^{-i\theta_i}t_{ii}=|t_{ii}|$; then
\begin{equation}
    \sum_{i\in I}|t_{ii}|=\left|\mathrm{Tr}(\Theta_I^\dagger T)\right|\le\sum_{j=1}^ks_j(T)\;,
\end{equation} 
the inequality being the standard variational bound for rank-$k$ partial isometries~\cite{bhatia2013matrix}. In other words, every partial sum of the ordered moduli $|t_{ii}|$ is bounded by the corresponding partial sum of singular values. Moreover, $T$ has norm less or equal to one, being the average of the two unitaries $U$ and $U^{\trans}$, and considering the function
\begin{equation}
    \phi(x)=\log(1+x^2)\;,\qquad \phi''(x)=\frac{2(1-x^2)}{(1+x^2)^2}\;,
\end{equation}
which is increasing and convex on $x \in[0,1]$, with $\phi(0)=0$, the following holds: $\sum_i\phi(|t_{ii}|)\le\sum_i\phi(s_i(T))$~\cite{bhatia2013matrix}. Exponentiating this bound gives
\begin{equation}
\begin{split}
\prod_i\left(1+|u_{ii}|^2\right)&=\prod_i\left(1+|t_{ii}|^2\right)\\&\le\prod_i\left(1+s_i(T)^2\right)=\det(\mathbb{I}_n+T^\dagger T).\end{split}
\end{equation} 
Squaring Eq.~\eqref{eq:UL-first} and inserting this bound proves Eq.~\eqref{eq:UL}.
\end{proof}

\subsubsection{Conclusion of the proof}
\label{app:proof:completion}

We return to the unitary $U$ of Eq.~\eqref{eq:U-Cayley-A}. For one canonical block $\Omega_j=\lambda_jJ$, we have $\mathcal{A}_j=\mathbb{I}_2+i\sqrt2\,\lambda_jJ$, hence
\begin{equation}
\begin{split}
    U_j&=(\mathbb{I}_2-i\mathcal{A}_j)(\mathbb{I}_2+i\mathcal{A}_j)^{-1}\\
    &=\big[(1-i)\mathbb{I}_2+\sqrt2\lambda_jJ\big]\,\frac{(1+i)\mathbb{I}_2+\sqrt2\lambda_jJ}{2(i+\lambda_j^2)}\\
    &=\frac{(1-\lambda_j^2)\mathbb{I}_2+\sqrt2\lambda_jJ}{i+\lambda_j^2}=e^{i\varphi_j}\left(\mathsf{c}_j\mathbb{I}_2+\mathsf{s}_jJ\right),
\end{split}
\end{equation} 
where we used $(\alpha\,\mathbb{I}_2+\beta J)^{-1}=\frac{\alpha\,\mathbb{I}_2-\beta J}{\alpha^2+\beta^2}$, and we defined the overall phase $e^{i\varphi_j}=\sqrt{1+\lambda_j^4}/(i+\lambda_j^2)$ and the coefficients
\begin{equation}
    \mathsf{c}_j=\frac{1-\lambda_j^2}{\sqrt{1+\lambda_j^4}}\;,\qquad \mathsf{s}_j=\frac{\sqrt2\lambda_j}{\sqrt{1+\lambda_j^4}}\;,\qquad \mathsf{c}_j^2+\mathsf{s}_j^2=1\;.
\end{equation}
The four principal minors of $U_j$ contribute $1$ (the empty set), $|e^{i\varphi_j}\mathsf{c}_j|^2=\mathsf{c}_j^2$ twice (the two diagonal entries), and $|\det U_j|^2=1$ (unitarity), so that
\begin{equation}
    \mathsf{L}(U_j)=1+2\mathsf{c}_j^2+1=2(1+\mathsf{c}_j^2)\;,
\end{equation}
and since $J^{\trans}=-J$, the transpose flips the sign of $\mathsf{s}_j$, so
\begin{equation}
    T_j=\frac{U_j+U_j^{\trans}}{2}=e^{i\varphi_j}\mathsf{c}_j\,\mathbb{I}_2\;.
\end{equation}
Theorem~\ref{thm:UL} is therefore an equality on each canonical block,
\begin{equation}
    2^2\det(\mathbb{I}_2+T_j^\dagger T_j)=4\,(1+\mathsf{c}_j^2)^2=\mathsf{L}(U_j)^2\;,
\end{equation}
and since both $\mathsf{L}$ and the determinant on the right of Eq.~\eqref{eq:UL} multiply over direct sums, equality also holds for $U_0=\bigoplus_{j=1}^{\ell}U_j$ .

For a general point of the real orbit, $U=OU_0O^{\trans}$ and hence $T^\dagger T=OT_0^\dagger T_0O^{\trans}$, so $\det(\mathbb{I}_{2\ell}+T^\dagger T)$ is orbit invariant. Applying Theorem~\ref{thm:UL} with $n=2\ell$ and using the canonical equality,
\begin{equation}
\begin{split}
    \mathsf{L}(U)^2&\le 2^{2\ell}\det(\mathbb{I}_{2\ell}+T^\dagger T)\\&=2^{2\ell}\det(\mathbb{I}_{2\ell}+T_0^\dagger T_0)=\mathsf{L}(U_0)^2 ,
\end{split}
\end{equation}
so that $\mathsf{L}(U)\le \mathsf{L}(U_0)$, since $\mathsf{L}$ is nonnegative. The prefactor in Eq.~\eqref{eq:R-unitary} being orbit invariant, we get $\mathsf{R}(H)\le \mathsf{R}(H_0)$, and Proposition~\ref{prop:residual}, including its canonical equality, gives the inequality in the common basis
\begin{equation}
    \mathsf{F}(OZ_0O^{\trans})\le \mathsf{R}(H)\le \mathsf{R}(H_0)=\mathsf{F}(Z_0)\;.
\end{equation}
Lemma~\ref{lem:two-to-one} then restores the two independent local bases,
\begin{equation}
\begin{split}
    \mathsf{F}(O_AZ_0O_B^{\trans})&\le\sqrt{\mathsf{F}(O_AZ_0O_A^{\trans})\mathsf{F}(O_BZ_0O_B^{\trans})}\\&\le \mathsf{F}(Z_0)\;,
\end{split}
\end{equation}
which is Eq.~\eqref{eq:Cartan-target}. Substituting into the exact physical normalization~\eqref{eq:physical-F} proves
\begin{equation}
    \zeta_2\!\left(O_AZ_0O_B^{\trans}\right)\le\prod_{j=1}^{\ell}\left(1-\lambda_j^2+\lambda_j^4\right)
    \label{eq:balanced-physical-bound}
\end{equation}
whenever $\lambda_j<1$, with equality in canonical axes by Eq.~\eqref{eq:canonical-value}.

The BCS matrix $Z_0$ and the unnormalized functional $\mathsf{F}$ are singular when some $\lambda_j=1$, but this is only a singularity of the parametrization: for fixed $O_A,O_B$, consider the rescaled spectra $\lambda_j^{(r)}=r\lambda_j$ and $\nu_j^{(r)}=\sqrt{1-r^2\lambda_j^2}$ with $0<r<1$, for which every $\lambda_j^{(r)}<1$ and Eq.~\eqref{eq:balanced-physical-bound} applies. The functional $\zeta_2$ of Eq.~\eqref{eq:zeta-M} is a polynomial in the covariance entries, hence continuous, so both sides of Eq.~\eqref{eq:balanced-physical-bound} are continuous in $r$; taking $r\to1^-$ proves the balanced result at every endpoint as well.

It remains to remove the balanced-cut assumption. The strategy is to append trivial pure modes to the smaller side, so as to reuse the balanced result, and then to check that the addition changes neither side of the bound. Indeed, tensoring a state with a pure vacuum mode does not change $\zeta_2$. Removing the untouched ancillas therefore gives, with $\Gamma$ the canonical covariance of Proposition~\ref{prop:covariance-normal-form},
\begin{equation}
    \zeta_2\!\left[(O_A\oplus O_B)\Gamma(O_A\oplus O_B)^{\trans}\right]\le\prod_{j=1}^{\ell}\left(1-\lambda_j^2+\lambda_j^4\right)
\end{equation}
for every unbalanced cut, with equality attained in the canonical Botero--BCS axes. Together with $M_2=-\log_2\zeta_2$ this proves Theorem~\ref{thm:global}. \hfill$\blacksquare$

\subsection{Verification via variational optimization}
\label{app:variational}

We discuss here the numerical methods behind the results in Fig.~\ref{fig:Ising_model}(a), and substantiate our claim that fermionic non-local magic as defined in Eq.~\eqref{eq:defFNLmagic} coincides with magic of the canonical modewise form of the Gaussian state in Eq.~\eqref{eq:can_form}, for every R\'enyi index $\RENYI\ge2$; for $\RENYI=2$ this is a direct benchmark of the content of Theorem~\ref{thm:global}, proved in Appendix~\ref{app:proof}.

We adopt a standard variational optimization approach: given a Gaussian state $|\Psi\rangle$ on $N$ sites ($N$ even for simplicity), we define the cost function
\begin{equation}
    \mathcal{C}(U_A, U_B) := M_2\left[U_A \otimes U_B |\Psi \rangle \right]
    \label{eq:cost_function}
\end{equation}
with $U_A, U_B$ unitary operators acting on regions $A=\{1, \ldots, {N/2}\}$ and $B=\{{N/2+1}, \ldots, N\}$ respectively. The state is represented by its $2^N$ complex coefficients on the computational basis, and the unitaries as $d_I \times d_I$ complex matrices, with $d_I=2^{|I|}, \ I \in\{A,B\}$. The unitaries are then parametrized by a set of real numbers (or angles) $\vec{\theta}$, depending on their constraints. For generic unitaries, we use the exponential map $U(\vec{\theta})=\exp(-i\vec{\theta} \cdot \vec{\tau})$, with $\vec{\tau}$ the generators of the $\mathfrak{u}(d_{I})$ algebra. For fermionic Gaussian unitaries, we write them as $U(\vec{\theta})=\exp(H(\vec{\theta}))$, with $H$ quadratic form of the Majorana operators as in Eq.~\eqref{eq:ham}, where the entries of the antisymmetric matrix $h$ are given by the angles $\vec{\theta}$. With this parametrization, the cost function becomes a function of the angles $\mathcal{C}(\vec{\theta})$, and its gradient at a given point $\vec{\theta}$ in an arbitrary direction $\vec{e_j}=(0,\ldots,1_j,\ldots,0)$ can be approximated via the finite-difference method:
\begin{equation}
    \partial_{\theta_j}\mathcal{C}(\vec{\theta}) \approx \frac{\mathcal{C}(\vec{\theta}+\epsilon\vec{e_j}) - \mathcal{C}(\vec{\theta})}{ \epsilon},
\end{equation} 
where we set $\epsilon$ to be of order $10^{-8}$.

The cost function and its gradient are then fed into a gradient-based optimization algorithm such as L-BFGS, which updates the parameters until a minimum of the cost function is reached. To help with convergence to a global minimum, the whole optimization routine is repeated multiple times, with each component of the initial angle vector $\vec{\theta}_0$  chosen uniformly at random in $[0,2\pi)$.

For the results shown in Fig.~\ref{fig:Ising_model}(a), we minimize the cost function above over generic random unitaries, choosing for $|\Psi(\mu)\rangle$ the ground state of the Ising Hamiltonian obtained by setting $\eta=1$ in Eq.~\eqref{eq:xy}. The state is obtained via exact diagonalization (ED) on $N=8$ sites, and the optimization procedure is repeated for multiple values of $\mu$.
\begin{figure}
    \centering
    \includegraphics[width=0.9\linewidth]{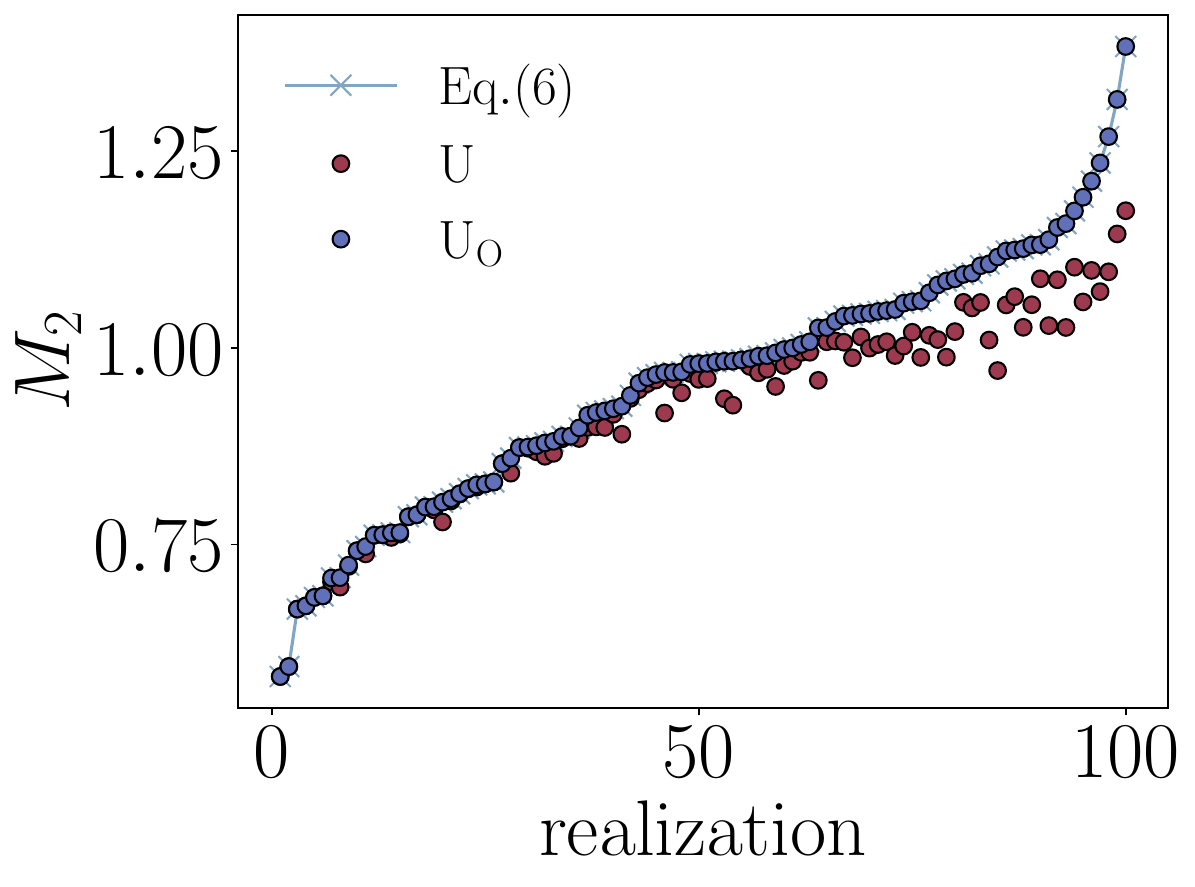}
    \caption{Results of the comparison between non-local magic computed with the formula in Eq.~\eqref{eq:analytic_NL} (crosses) and the one obtained by minimizing the cost function in Eq.~\eqref{eq:cost_function} with both generic unitaries $U$ (red marker) and fermionic Gaussian unitaries $U_O$ (blue marker), for $100$ different realizations of random fermionic Gaussian states on $N=10$ sites, $\ell=5$. The realizations are ordered by increasing $M_2^{\mathrm{FNL}}$. The formula agrees with minimization over Gaussian unitaries to machine precision in all cases.}
    \label{fig:numerics_verification}
\end{figure}
Instead, in Fig.~\ref{fig:numerics_verification} we show the results of the minimization for $100$ realizations of random Gaussian states on $N=10$ qubits, $\ell=5$, where the minimum is taken both on generic and on fermionic Gaussian unitaries. We also plot the values of fermionic non-local magic obtained with our formula for these states, which agree within numerical precision with the results obtained by minimizing over Gaussian unitaries.

A few optimizations were considered in the steps to compute the cost function. When applying the unitaries on the state, it is convenient to reshape the state vector into a $d_A \times d_B$ matrix $\Psi$, such that the application of the unitary $U_A \otimes U_B$ can be computed as $U_A \Psi U_B^{T}$, with reduced cost $\mathcal{O}(d_A^2 d_B + d_A d_B^2)$.
Moreover, the stabilizer entropy of the state vector can be computed using the algorithm introduced in Refs.~\cite{huang2026fastexactapproachstabilizer, sierant2026computingquantummagicstate}, which implements fast Walsh-Hadamard transforms to reduce the time complexity of the computation from $\mathcal{O}(8^N)$ to $\mathcal{O}(N4^N)$, and can also be parallelized for substantial speedup. Finally, the finite-difference method can be replaced with backpropagation, which allows the full $N_p$-dimensional gradient to be computed at a cost comparable to a small constant number of evaluations of the cost function, rather than $O(N_p)$. Note that the number of parameters $N_p$ grows exponentially in $N$ in the case of generic unitaries, and quadratically in $N$ for fermionic Gaussian ones.

\subsection{Error analysis for experimental accessibility}
\label{app:experimental}
We provide the error propagation analysis connecting the accuracy on $\Gamma_A$ to the accuracy on $M_2^{\mathrm{FNL}}$.
Since $\mathfrak{m}_2(\lambda^2) = -\log_2(1-\lambda^2+\lambda^4)$ is smooth on $[0,1]$, its derivative
\begin{equation}
    \frac{d}{d\lambda}\,\mathfrak{m}_2(\lambda^2) = \frac{2\lambda(1-2\lambda^2)}{(1-\lambda^2+\lambda^4)\ln 2}
\end{equation}
is bounded, so $\mathfrak{m}_2$ is uniformly Lipschitz on $[0,1]$ with constant $L = \max_{\lambda \in [0,1]} \bigl|\frac{d}{d\lambda}\mathfrak{m}_2(\lambda^2)\bigr| = O(1)$.
Combining with the Hoffman--Wielandt stability bound for eigenvalues of $i\Gamma_A$,
\begin{equation}
    |\delta M_2^{\mathrm{FNL}}| \lesssim L\,\sqrt{\ell}\,\|\delta\Gamma_A\|_F\;.
\end{equation}
A uniform entrywise error $\epsilon_\Gamma$ implies $\|\delta\Gamma_A\|_F = O(\ell \epsilon_\Gamma)$, giving $|\delta M_2^{\mathrm{FNL}}| = \mathcal{O}(\ell^{3/2}\epsilon_\Gamma)$.
To estimate the non-local magic density $m_2^{\mathrm{FNL}} = M_2^{\mathrm{FNL}}/\ell$ with additive accuracy $\varepsilon$, one needs $\epsilon_\Gamma = \mathcal{O}(\varepsilon/\sqrt{\ell})$, yielding
\begin{equation}
    N_{\mathrm{shots}} = O\!\left(\frac{\ell^3\log(\ell/\delta)}{\varepsilon^2}\right).
\end{equation}
This estimate is pessimistic for gapped phases, where $\Gamma_A$ is approximately banded and only $\mathcal{O}(1)$ entanglement modes near the boundary contribute appreciably; near criticality the bound is closer to the true cost.

\subsection{$\alpha$-Generalization of Results}
\label{Sec:generaliz}

In the Main Text, several results were discussed exclusively for the Rényi index $\alpha = 2$. For completeness, we now present their extensions to arbitrary integer $\alpha \ge 2$. These extensions are obtained from the conjectured weights of Eq.~\eqref{eq:analytic_NL_alpha}; at $\alpha=2$ they reduce to the results of the Main Text, which follow from Theorem~\ref{thm:global}.

We begin with the Page curve. From Eq.~\eqref{eq:ziopin2}, replacing $\mathfrak{m}_2$ with $\mathfrak{m}_\alpha$, we find that the behavior for $r \simeq 0$ remains the same for every $\alpha$ up to a constant:
\begin{equation}
\overline{\mathsf{m}}_\alpha(r \simeq 0) = - \frac{\alpha}{1-\alpha} \frac{r^2}{2 \ln 2} + O(r^3).
\end{equation}

Next, we consider the critical behavior in the Ising model. From Eq.~\eqref{eq:alpha_NL}, substituting $\mathfrak{m}_2 \mapsto \mathfrak{m}_\alpha$, a change of variables $\lambda = \frac{1-t}{1+t}$ leads to
\begin{equation}
\beta_{\mathrm{FNL}}(\alpha) = \frac{1}{\pi^2 (1-\alpha) \ln 2} \int_0^1 \frac{\hat{\mathfrak{m}}_\alpha(t)}{t}\, dt,
\end{equation}
where
\begin{equation}
\hat{\mathfrak{m}}_\alpha(t) = \ln \left[ \frac{(1+t)^{2\alpha} + (1-t)^{2\alpha} + 4^\alpha t^\alpha}{2 (1+t)^{2\alpha}} \right].
\end{equation}
Noting that $\hat{\mathfrak{m}}_\alpha(t) = \hat{\mathfrak{m}}_\alpha(1/t)$, the integral can equivalently be written as
\begin{equation}
\beta_{\mathrm{FNL}}(\alpha) = \frac{1}{2 \pi^2 (1-\alpha) \ln 2} \int_0^\infty \frac{\hat{\mathfrak{m}}_\alpha(t)}{t}\, dt,
\end{equation} 
which corresponds to the Mellin transform of $\hat{\mathfrak{m}}_\alpha(t)$ evaluated as $s \to 0$.
The result of the integral is given by 
\begin{equation}
\beta_{\mathrm{FNL}}(\alpha) = \frac{1}{2 (1-\alpha) \pi^2 \ln(2)}\sum_{j=1}^\alpha \mathrm{arccosh}^2(-r_j/2)\;,
\end{equation}
where $r_j$ are the (generally complex) roots of the polynomial $R_\alpha(x)=(x-2)^\alpha+(x+2)^\alpha+4^\alpha$.

\bibliography{ref_current}
\bibliographystyle{apsrev4-2}

\end{document}